\title{Mechanism Design for Cumulative Prospect Theoretic Agents: A General Framework and the Revelation Principle}
\author{Soham R.\ Phade%
\thanks{Corresponding author}
\ and Venkat Anantharam
\thanks{The authors are with the Department of Electrical Engineering and Computer Sciences, University of California, Berkeley, Berkeley, CA 94720.
       {\tt\small soham\_phade@berkeley.edu, ananth@eecs.berkeley.edu}}%
\thanks{Research supported by the NSF Science and Technology Center grant CCF-
0939370: "Science of Information", the NSF grants ECCS-1343398, CNS-1527846, CIF-1618145,
CCF-1901004 
and CIF-2007965, 
and the William and Flora Hewlett Foundation supported Center for Long Term Cybersecurity at Berkeley.}
}
\date{}
\newif\ifvenkat
\newif\ifsoham
\newif\ifnonarxiv
\let\footnote=\endnote
\newcommand{\red}{\color{red}}
\newcommand{\black}{\color{black}}
\newcommand{\blue}{\color{blue}}
\newcommand{\brown}{\color{brown}}
\newcommand{\type}{\theta}
\newcommand{\Type}{\Theta}
\newcommand{\sig}{\psi}
\newcommand{\Sig}{\Psi}
\newcommand{\msg}{\phi}
\newcommand{\Msg}{\Phi}
\newcommand{\mdt}{D}
\newcommand{\cM}{\cal{M}}
\newcommand{\cF}{\Pi}
\newcommand{\I}{\mathrm{I}}
\newcommand{\II}{\mathrm{II}}
\newcommand{\III}{\mathrm{III}}
\newcommand{\IV}{\mathrm{IV}}
\newcommand{\V}{\mathrm{V}}
\newcommand{\Up}{\mathrm{UP}}
\newcommand{\D}{\mathrm{DN}}
\newcommand{\Lt}{\mathrm{LT}}
\newcommand{\Rt}{\mathrm{RT}}
\newcommand{\MF}{\mathrm{MF}}
\newcommand{\UF}{\mathrm{UF}}
\newcommand{\SF}{\mathrm{SF}}
\newcommand{\K}{\mathrm{K}}
\newcommand{\M}{\mathrm{M}}
\begin{document}
\maketitle


\begin{abstract}
	This paper initiates a discussion of mechanism design when the participating agents exhibit preferences that deviate from expected utility theory (EUT).
	In particular, we consider mechanism design for systems where 
	the 
	agents are modeled 
	as having 
	cumulative prospect theory (CPT) preferences, which is a generalization of EUT preferences.
	We point out some of the key 
	modifications needed in the theory of mechanism design that arise from 
	agents having 
	CPT preferences and some of the shortcomings of the classical mechanism design framework.
	In particular, we show that the revelation principle,
	which 
	has traditionally played a fundamental role in mechanism design, does not continue to hold under CPT.
	We develop an appropriate framework that we call mediated mechanism design which allows us to recover the revelation principle for CPT agents.
	\ifnonarxiv
	This justifies restricting attention to mediated direct mechanisms, where the signal set is identical to the type set for each player.
	We define incentive-compatibility for CPT agents.
	We then establish some preliminary properties regarding the implementability of social choice functions under different notions of equilibrium in terms of incentive-compatibility.
	\fi
	\ifvenkat
	\red In the preceding sentence the relevance of incentive compatibility to what is claimed as being established is not clear. 
	\black
	\blue edited a bit. \black
	\fi
	We conclude with some interesting directions for future work.
\end{abstract}

\keywords{mechanism design; game theory; revelation principle; cumulative prospect theory}


\section{Introduction}

In nearly every application of mechanism design,
the decision-making entities are predominantly human beings faced with uncertainties.
These uncertainties, for example, could arise from a combination of one or more factors from the following:
\begin{inparaenum}[(i)]
	\item lack of information about the outcomes (e.g. oil lease auctions, kidney-exchange, insurance markets),
	\item each player 
	having 
	uncertainty about other players' behavior (e.g. voting behavior in elections, inclination to getting vaccinated in immunization programs),
	\item strategic interactions between the players (e.g. players could employ randomized strategies to hedge their market returns),
	\item randomness introduced by design (e.g. Tullock contests,
	where the probability of winning a prize depends on the amount of effort an agent puts into it%
	).
\end{inparaenum}
Naturally, to realize the mechanism designer's objectives, it is beneficial to consider as accurate and general models for human preference behavior under uncertainty as possible.
Cumulative prospect theory (CPT), proposed by \citet{tversky1992advances}, is one of the leading theories for 
decision making 
under uncertainty. 
Our goal here is to study mechanism design when players exhibit CPT preferences.

We are interested in situations where the \emph{agents} participating in the system have private \emph{types} 
(comprised of 
private information and preferences).
{\em Player} and agent are used as interchangeable terms. 
The \emph{system operator}%
\footnote{
\citet{myerson1982optimal} refers to the mechanism design framework as a generalized principal-agent problem. In contrast to our framework, Myerson is interested in problems where 
the 
agents have private decision domains in addition to private information. Here, by private decision domains, we mean possible actions for the player that directly affect the outcomes. For example, in employment contracts the actions of the employee directly 
affect 
the outcome. 
These actions should not be confused with the signals of the player in the communication protocol set up by the system operator. 
We prefer to call the entity in control as the system operator instead of the principal to emphasize that the system operator alone controls the system 
implementation. 
We restrict ourselves to situations where agents do not have private decision domains because such situations involve dynamic decision-making, and non-EUT models face several issues in such situations (see section~\ref{sec: discussion} for more on this).
Thus our model cannot account for moral hazard.
} 
is in a position to set the rules of 
communication 
and can control the 
implementation 
in the system.
It aims to achieve certain goals, such as social welfare or revenue generation, without getting to directly observe the types of the players.
\ifvenkat
\red (1) Regarding the footnote, it is not possible to make sense of it unless there is some description of what a ``private decision domain" is supposed to represent and what changes when one does not include it in the problem formulation. \black
\blue Added explanation on private decision domain. Planning to add more information on what changes later when some more structure is built. \black
\fi
Studying these systems when agents have CPT preferences requires modifications to the formal structures commonly encountered in classical mechanism design \citep{harsanyi1967games,myerson1979incentive,myerson1982optimal, myerson2004comments, mas1995microeconomic}.
But before engaging in a systematic discussion of these issues,
let us briefly describe our key result.

This starts with the observation that if the players are assumed to have CPT preferences instead of expected utility theory (EUT) preferences, then the revelation principle~\citep{myerson1981optimal}, one of the fundamental principles in mechanism design, does not hold anymore.
A related observation was made by \citet{karni1989dynamic}, where the authors show that in a second-price sealed-bid auction the revelation principle holds in general if and only if the players have EUT preferences.
\citet{chew1985implicit} provides an example to show that the revelation principle fails in a second-price sealed-bid auction when the players have preferences given by implicit weighted utility theory~\citep{dekel1986axiomatic,chew1989axiomatic}.



The classical mechanism design framework 
is comprised of a fixed number of players,
an allocation set, a set of types for each player, and a signal set for each
player.
(In this paper, we will be concerned with the setting where all these sets are assumed to be finite.)
The system operator commits to an allocation function, i.e. a function from 
the signal profile of the players to an allocation (see \eqref{def: allocation_function} for the formal definition). 

The mechanism operates as follows:
\begin{enumerate}
	\item Each player sends a signal
	strategically to the system operator based on 
 	its type
 	(which is private knowledge to the player).
	\item The system operator implements the allocation based on the signals from all the players in accordance to the allocation function that it committed to.
\end{enumerate}

If we assume a 
prior over the types of the players which is common knowledge to all the agents and the system operator,
and we assume that the signal sets of all the players, the allocation set and the allocation mapping are also common knowledge,
then this constitutes a Bayesian game and one studies the outcome of such a game through its Bayes-Nash equilibria (see~\eqref{eq: def_BNE} for the formal definition).
The revelation principle states that for the question of  implementability of social choice functions (see \eqref{eq: def_scf} and \eqref{eq: def_implementable} for formal definitions of social choice functions and their implementability), it is enough to assume the signal set to be the same as the type set for each player and confine attention to the equilibrium in which each player reports her type truthfully.
We propose a modification to the above framework that we call a \emph{mediated mechanism}.
We introduce a new stage where the system operator acts like a mediator and sends each player a private message sampled from a certain joint distribution on the set of message profiles. The allocation chosen by the system operator can now depend on both the message profile and the signal profile. 
Further, we explicitly allow the choice of the allocation to be randomized, which turns out to have no advantage in the classical mechanism design framework but can lead to benefits with CPT agents.

A mediated mechanism is therefore comprised of 
 a fixed number of players,
an allocation set, a set of types for each player, 
a message set for each player, and a signal set for each
player, all of which 
are generally assumed to be finite sets.
The system operator 
commits to 
a mediator distribution, which is a probability distribution on the set of 
message profiles. 
It also commits to a mediated allocation function, which maps each pair of signal profile and message profile to a probability distribution on allocations (see~\eqref{eq: def_med_allocation_func} for the formal definition).

The mechanism operates as follows:
\begin{enumerate}
	\item The system operator samples a message profile from the declared mediator distribution and sends the individual messages to each player privately.
	\item Each player receives her mediator message and, based on this message and her privately known type, sends a signal strategically to the 
	system operator. 
	\item Based on the signals collected from all the players and the sampled message profile, the system operator samples the allocation in accordance to the probability distribution on allocations resulting from the mediated allocation function that it committed to.
\end{enumerate}

Similarly to the previous setting we assume a prior over the types of the players that is common knowledge to all the agents and the system operator.
We also assume that the message sets and the signal sets of all the players, the mediator distribution, the allocation set, and the allocation-outcome mapping are common knowledge.
This along with the mechanism operation stated above constitutes a Bayesian game and we study the outcome of such a game through its Bayes-Nash equilibria (see~\eqref{eq: def_med_Bayes_Nash_eq} for the formal definition).
With this modified framework, 
we recover a form of the revelation principle which states that 
it is enough to assume the signal set to be the same as the type set 
for each player and confine our attention to the equilibrium in which each player reports her true type 
irrespective of the private message she receives from the mediator.
(See statement (i) of Theorem~\ref{thm: rev_prin_1}.)

As the mediator message 
sets
could be arbitrary, 
it might seem that the problem of designing the signal sets has been transformed into the problem of designing the message sets.
Although this is true, 
notice that the revelation principle allows us to restrict our attention to truthful strategies for each player, which have a simple form, 
thus resolving the difficult task of finding all the Bayes-Nash equilibria 
of the resulting game. 
Further, the fact that truthful reporting does not depend on the private message received by a player makes it a practical and natural strategy for the players.
\ifnonarxiv
Thus the mediator messages can be viewed as a means to incentivize the players to report truthfully. (See section~\ref{sec: discussion} for further discussion on the role of mediator messages.)
Furthermore, this framework allows us to define an analogous notion of incentive compatibility when players have CPT preferences and use it to establish important properties regarding the set of all social choice functions that are implementable in Bayes-Nash equilibrium by a mediated mechanism.
Besides, if the players are restricted to have EUT preferences, which is a special case of CPT preferences, then we show that it is enough to consider the trivial singleton message sets, or, equivalently, get rid of the messages altogether.
This also explains why they have been missing from the classical mechanism design literature.%
\footnote{
	A natural thought would be to extend the mediated framework and consider multiple rounds of communication~\citep{van2007communication, mookherjee2014mechanism}. 
	However, in the absence of any resolution on the appropriate usage of non-EUT preferences in dynamic decision-making, we restrict ourselves to one-shot mediation. (For more on this, see section~\ref{sec: discussion}.)
}
\fi
\ifvenkat
\red The preceding paragraph does not seem to be arguing against the suggestion that the complexity has been moved from the signal sets to the message sets. Namely, the complexity of the message sets is not being discussed directly in any of the sentences in this paragraph. \black
\blue Right. Edited \black
\fi

We now resume our discussion of the different aspects involved in the study of mechanism design when agents have CPT preferences.
The majority of 
the
mechanism design literature
has been restricted to EUT
modeling of individual decision-making under uncertainty.
Indeed, EUT has a nice normative interpretation and provides a useful and insightful first-order approximation (see, for example, \citet{schoemaker1982expected}). 
However, systematic deviations from the predictions of EUT have been observed in several empirical studies involving human decision-makers~\citep{allais1953comportement,fishburn1979two,kahneman2013prospect} (see~\citet{starmer2000developments} for an excellent survey).
With the advent of e-commerce activities and the ever-growing online marketplaces such as Amazon, eBay, and Uber, 
where the participating agents are largely human beings, who exhibit behavior that is highly susceptible to these deviations from EUT, 
it has become crucial to account for such behavioral deviations in the modeling of these systems. (For example, \citet{pavlou2006nature} discuss the phenomenon of premium prices showing up in online marketplaces such as eBay to differentiate among sellers based on their reputation and buyers' perceived risks.)

A typical environment in the traditional mechanism design setup 
consists of 
a set of players that have private information about their types and an allocation set 
listing 
the possible alternatives from which the system operator chooses one that is best suited given the players' types.
As mentioned earlier, we assume that the system operator controls the 
implementation 
and 
the players do not have separate decision domains.
(Recall that by private decision domains we mean possible actions for the player that directly affect the outcomes.) 
This is typical in several online marketplaces.
For example, in online advertising platforms such as Google Ads, the platform has complete control over where to place which ads.
Note that although the agents can affect the implementation of the system through their bids, these signals fall under the communication protocol set by the system, leaving the ultimate implementation in the hands of the system operator. 
In online matching markets such as eBay and Uber, the platform matches the buyers to sellers as in eBay, or riders to drivers as in Uber.%
\footnote{
Here, strictly speaking, given a matching by the platform, the users can refuse to go through with the matching.
Although these decisions fall under the separate decision domains of the agents, they are rare and can be accounted for separately.
}
\ifvenkat
\red This claim (that it is typical of several online marketplaces) should be backed up by concrete examples. 
\black
\blue Added examples \black
\fi

Even if the system operator has complete control over 
the implementation, 
it wants 
the implementation 
to depend on the types of the agents.
However, it does not have access to 
these types, 
and hence needs to design a mechanism to achieve this goal.
Thus the system implementation indirectly depends on the choices of the participating agents.
Note that the e-commerce applications mentioned above---Amazon, eBay, and Uber---fit well in this setup.
Indeed, these are instances of a delivery system, an auction house, and a clearinghouse, which have been 
topics of 
interest for several years in mechanism design.
However, the nature of these applications, and the presence of vast data corresponding to several repeated short-lived interactions of the system with any given user, makes it feasible to incorporate the behavioral features displayed by the users.

It has been a convention to assume that the outcome set for each player is identical to the allocation set, 
and hence the type for each player is assumed to capture her preferences over the allocation set (see, for example, \citet{vohra2011mechanism}).
However, in principle, the outcome set for any player need not be the same as the allocation set.
Indeed the allocation set is a list of the alternatives available to the system operator to implement, whereas the outcome set consists of the outcomes realized by the players, and these can be quite different.
For example, in the case of Amazon, the allocation set consists of alternative resource allocations to fulfill the delivery of purchased products, whereas the outcome set of a buyer consists of features such as time of delivery, place of delivery, etc.
It makes sense to consider the preferences of a player over her outcome set, and any consideration of her preferences over the allocation set should be thought of as a pullback or a precomposition of her preferences over the outcome set with respect to the
(possibly random) 
function that maps allocations to outcomes for this player.

We allow the above mapping from allocations to outcomes for any player to be randomized.
Indeed, more often than not, the system operator does not have complete control over the outcomes of the players due to intrinsic uncertainties present in the system.
For example, fixed resource allocations by Amazon can lead to uncertainty in the delivery times, possibly due to factors not part of the system model.
In the case of Uber, upon matching the riders with the drivers in a certain way and choosing their corresponding routes, the arrival times and the riding experience of the users remain uncertain.
In an auction setting such as eBay, if we consider the outcome set for any player to indicate if she receives the item or not, then the mapping from allocation to outcomes is deterministic.
However, if we model the outcome set to indicate whether the player is satisfied with the item she receives, then we have to allow the mapping to be randomized.

Furthermore, the system operator might not be able to observe the outcome realization, for example the ride experience of a passenger.
It can only try to learn this in hindsight through customer feedback.
Besides, the outcome set for any single player is typically small as compared to the allocation set and the product of the outcome spaces of all the participating agents.
Thus, treating each player's outcome set separately would enable us to focus on the preference behavior of an individual player and have better models for this player's preferences.

The (random) mapping from allocations to outcomes for any player induces a lottery $L$ on the outcome set of this player for each allocation.
EUT satisfies the linearity property which states that $U(\alpha L_1 + (1-\alpha)L_2) = \alpha U(L_1) + (1-\alpha) U(L_2)$, where $0 \leq \alpha \leq 1$, $L_1, L_2$ are two lotteries, and $U(\cdot)$ denotes the expected utility of the lottery within the parentheses.
This property of EUT allows us to model the type of a player by considering 
her 
utility values for each allocation. 
For any lottery $L$ over her outcomes that is induced by a lottery over the allocations $\mu$, we can evaluate her utility $U(L)$ by taking the expectation over her utility values of the allocations with respect to the distribution $\mu$.
\ifvenkat
\red It is unclear why it is okay to do this even under EUT.
What you intend to say that 
each player can attribute a value to each allocation
instead of doing this for each outcome. Nevertheless, in the context of the example such as Amazon fulfillment, Uber ride experience etc. that you have earlier, this is not the same as 
saying that the outcome set is the same as the allocation set.
\black
\blue Modified \black
\fi
CPT on the other hand does not satisfy this linearity property (see, for example, \citet{tversky1992advances}), and hence
it is important that we consider the general model with separate outcome sets.

We formalize this general setup in subsection~\ref{subsec: setup} and provide preliminary background on CPT preferences in subsection~\ref{subsec: CPTpref}.
Then, in subsection~\ref{subsec: mechframe}, we consider the traditional mechanism design framework where each player knows her (private) type and strategically sends a signal to the system operator.
The system operator collects these signals and implements a lottery over the allocation set.

We define a social choice function as 
a
function mapping each type profile into a lottery over the product of the outcome sets for each player
(see~\eqref{eq: def_scf} for the formal definition).
As an intermediate step, we consider an allocation choice function (i.e. a function that maps type profiles into lotteries over the allocation set, see~\eqref{eq: def_acf}).
Each allocation choice function uniquely defines a social choice function through the allocation-outcome 
mapping 
(see~\eqref{eq: acf_to_scf}),
which we think of as a mapping from allocations to probability distributions on the product of the outcome sets
of the agents. 
Note that there can be multiple allocation choice functions that give rise to the same social choice function.
We define the notion of implementability for an allocation choice function in Bayes-Nash equilibrium (see~\eqref{eq: def_implementable}). 
We say that a social choice function is implementable in Bayes-Nash equilibrium if there exists an allocation choice function that is implementable in Bayes-Nash equilibrium and induces this social choice function.

We similarly define the notions of implementability in dominant equilibrium.
Here, we identify an additional notion of implementability that we call implementable in belief-dominant equilibrium.
Roughly speaking, a dominant strategy is a best response to all the strategy profiles of the opponents (see~\eqref{eq: dominant_strat}), and a belief-dominant strategy is a best response to all the beliefs over the strategy profiles of the opponents (see~\eqref{eq: belief_dominant_strat}).
Under EUT, 
the notion of a 
dominant strategy is equivalent to 
that of a 
belief-dominant strategy.
However, this is not true in general
when the agents have CPT preferences, 
thus making it necessary to distinguish between these two notions of equilibrium.
\ifnonarxiv
We give an example to show that the two can be different under CPT (example~\ref{ex: not_belief_dominant_VCG}). 
Our example is based on the Vickrey-Clarke-Groves (VCG) mechanism, which is known to implement socially optimal allocation choice functions in dominant equilibrium.
\fi

In section~\ref{sec: rev_prin}, we define the notions of direct mechanism (see~\eqref{eq: def_direct_allocation_func}) and truthful implementation (see~\eqref{eq: def_dir_strat}).
We then give an example that highlights the shortcoming of restricting oneself to direct mechanisms when the players have CPT preferences, as opposed to EUT preferences.
In particular, we consider a 
$2$-player 
setting where the players have CPT preferences that are not EUT preferences.
Example~\ref{ex: no_rev_impl} gives an allocation choice function for which the 
revelation principle does not hold for implementation in Bayes-Nash equilibrium.
We then introduce the framework of mediated mechanism design in subsection~\ref{subsec: med_mech_frame}. 
We define the corresponding notions of Bayes-Nash equilibrium (see~\eqref{eq: def_med_Bayes_Nash_eq}), dominant equilibrium (see~\eqref{eq: med_dominant_strat}), and belief-dominant equilibrium (see~\eqref{eq: med_belief_dominant_strat}) for mediated mechanisms.
In Theorem~\ref{thm: rev_prin_1}, we
recover the revelation principle under certain settings (see table~\ref{tab: rev_prin}).

\ifnonarxiv
In section~\ref{sec: IC}, we define the notions of incentive-compatibility in a Bayesian setting as well as dominant and belief-dominant strategies. We establish several properties related to the set of allocation choice functions that are truthfully implementable by mediated mechanisms.
\fi


We conclude in section~\ref{sec: discussion} with some remarks and directions for future work.


\section{Mechanism Design Framework and the Revelation Principle}
\label{sec: preliminaries}

\subsection{Notational conventions}
Let $\1\{\cdot\}$ denote the indicator function that is equal to one if the predicate inside the brackets is true and is zero otherwise.
Let $\Delta(\cdot)$ denote the set of all probability distributions over the (finite) set within the parentheses.
Let $\supp(\cdot)$ denote the support of the probability distribution within the parentheses.
Let $\co (\cdot)$ denote the convex hull of the set within the parentheses.
For a function $f: X \to \Delta(Y)$, let $f(y|x) = f(x)(y)$ denote the probability of $y$ under the probability distribution $f(x)$.
Let 
\[
	L = \{(p_1,z_1); \dots;(p_t, z_t)\}
\]
denote a lottery with outcomes $z_j, 1 \leq j \leq t$, with their corresponding probabilities given by $p_j$.
We assume the lottery to be exhaustive (i.e. $\sum_{j = 1}^t p_j = 1$).
Note that we are allowed to have $p_j = 0$ for some values of $j$ and we can have $z_k = z_l$ even when $k \neq l$.
If a lottery $L$ consists of a unique outcome $z$ that occurs with probability $1$, then with an abuse of notation we will denote the lottery $L = \{(1,z)\}$ simply by $L = z$.
Similarly, if a probability distribution $f(x)$ assigns probability $1$ to $y$, then again with an abuse of notation we will write $f(x) = y$.
If, for each $x$, $f(x)$ has a singleton support, then with an abuse of notation we will treat $f$ as a function from $X$ to $Y$.

\subsection{Preliminaries}
\label{subsec: setup}
Let $[n] := \{1, 2, \dots, n\}$ be the set of \emph{players} participating in the 
system. 
Let $A$ denote the set of 
\emph{allocations} for this system.
We assume unless stated otherwise that the set of allocations is finite, say $A := \{\alpha^1, \dots, \alpha^l\}$.
For example, in the sale of a single item (or multiple items), it could represent the allocation of the item(s) to the different individuals.
In a routing system, such as traffic routing or internet packet routing, it could represent the different routing alternatives. 
More generally, in a resource allocation setting it could represent the assignment of resources to the participating agents (with their corresponding payments) that respect the system (and budget) constraints.
In a voting scenario, it could represent the winning candidate.
Thus, we imagine the allocations $\alpha \in A$ as being the various alternatives available to the 
system operator 
to implement.

Traditionally, each player is assumed to have a value for each of the allocations, and this defines the type of this player.
It describes the preferences of a player over the allocations, and further, by assuming EUT behavior, we get her preferences over the lotteries over these allocations.
Here, instead, we assume that
for each player $i \in [n],$ we have a finite set of \emph{outcomes} $\Gamma_i := \{\gamma_i^1, \dots, \gamma_i^{k_i}\}$, 
and player $i$'s 
\emph{type} 
is defined by her CPT preferences over the lotteries on this set $\Gamma_i$.
We imagine the set $\Gamma_i$ to capture the outcome features that are relevant to player $i$.
Thus the outcome set $\Gamma_i$ allows us to separate out the features that affect player $i$ from the underlying allocations that give rise to these outcomes.
We capture this relation between the allocation set and the outcome sets through a mapping $\zeta: A \to \Delta(\Gamma)$ that we call the \emph{allocation-outcome mapping}, where $\Gamma := \prod_i \Gamma_i$.
Let $\zeta_i: A \to \Delta(\Gamma_i)$ denote 
allocation-outcome mapping 
for player $i$ given by the marginal of $\zeta$ on the set $\Gamma_i$.
%

From 
a behavioral point of view it is natural to model a player's preferences on the outcome set $\Gamma_i$ rather than the allocation set $A$.
Then why is it that the sets $\Gamma_i$ and 
the mapping $\zeta$ are usually missing from the mechanism design framework prevalent in the literature?
At the end of subsection~\ref{subsec: mechframe}, after setting up the relevant notation, we will show that 
under EUT, from the point of view of the typical goals of the mechanism designer, it is enough to consider a transformation of the system where $\Gamma_i = A$, for all $i$, and the allocation-outcome mappings are \emph{trivial}, namely, $\zeta_i(\alpha) = \alpha$, for all $\alpha \in A, i \in [n]$
(this is shown formally in Appendix~\ref{sec: new_appendix}).
We will also show that this 
does not hold 
in general
when the players do not have EUT preferences, and in particular when they have CPT preferences. 

We model the preference behavior of the players using 
cumulative prospect theory 
(CPT) that we describe now. (For more details, see \citep{wakker2010prospect}.)

\subsection{CPT preference model}
\label{subsec: CPTpref}

Suppose $\Gamma_i$ is the outcome set for player $i$, who is associated with a \emph{value function} $v_i : \Gamma_i \to \bbR$ and two \emph{probability weighting functions} $w_i^\pm:[0,1] \to [0,1]$.
The value function $v_i$ partitions the set of outcomes $\Gamma_i$ into two parts: \emph{gains} and \emph{losses}; an outcome $\gamma_i \in \Gamma_i$ is said to be a gain if $v_i(\gamma_i) \geq 0$, and a loss otherwise.
The probability weighting functions $w_i^+$ and $w_i^-$ 
will be used for 
gains and losses, respectively. 
The probability weighting functions $w_i^\pm$ are assumed to satisfy the following:
\begin{inparaenum}[(i)]
	\item they are strictly increasing,
	\item $w_i^\pm(0) = 0$ and $w_i^\pm(1) = 1$.
\end{inparaenum}
We say that $(v_i, w_i^\pm)$ are the CPT features of 
player $i$.

Suppose 
player $i$ 
faces a \emph{lottery} $L_i \in \Delta(\Gamma_i)$
given by $\{(p_i^j,\gamma_i^j)\}_{1 \leq j \leq k_i}$.
Let $p_i := (p_i^1, \dots, p_i^{k_i})$.
Let $\beta_i := (\beta_i^1,\dots,\beta_i^{k_i})$ be a permutation of $(1,\dots,k_i)$ such that
\begin{equation}\label{eq: order}
	v_i(\gamma_i^{\beta_i^1}) \geq v_i(\gamma_i^{\beta_i^2}) \geq \dots \geq v_i(\gamma_i^{\beta_i^{k_i}}).
\end{equation}
Let $0 \leq j_r \leq k_i$ be such that $v_i(\gamma_i^{\beta_i^j}) \geq 0$, for $1 \leq j \leq j_r$, and $v_i(\gamma_i^{\beta_i^j}) < 0$, for $j_r < j \leq k_i$. 
(Here $j_r = 0$ when $v_i(\gamma_i^{\beta_i^j}) < 0$ for all $1 \leq j \leq k_i$.) 
The \emph{CPT value} $V_i(L_i)$ of the lottery $L_i$ is evaluated using the value function 
$v_i$ 
and the probability weighting functions 
$w_i^{\pm}$ 
as follows:
\begin{equation}\label{eq: CPT_value_discrete}
	V_i(L_i) := \sum_{j=1}^{j_{r}} \pi_i^+(j, p_i,\beta_i) v_i(\gamma_i^{\beta_i^j}) + \sum_{j=j_r+1}^{k_i} \pi_i^-(j, p_i,\beta_i) v_i(\gamma_i^{\beta_i^j}),
\end{equation}
where 
$\pi_i^+(j, p_i,\beta_i),1 \leq j \leq j_{r}$ and  $\pi_i^-(j, p_i,\beta_i), j_r < j \leq k_i$, 
are \emph{decision weights} defined via:
\begin{align*}
	\pi^+_{i}(1, p_i,\beta_i) &:= w_i^+(p_i^{\beta_i^1}),\\ 
	\pi_i^+(j, p_i,\beta_i) &:= w_i^+(p_i^{\beta_i^1} + \dots + p_i^{\beta_i^{j}}) - w_i^+(p_i^{\beta_i^1} + \dots + p_i^{\beta_i^{j-1}}), &\text{ for } &1 < j \leq k_i, \\
	 \pi_i^-(j, p_i,\beta_i) &:= w_i^-(p_i^{\beta_i^{k_i}} + \dots + p_i^{\beta_i^j}) - w_i^-(p_i^{\beta_i^{k_i}} + \dots + p_i^{\beta_i^{j+1}}), &\text{ for } &1 \leq j < k_i,\\
	 \pi^-_{i}(k_i, p_i,\beta_i) &:= w_i^-(p_i^{\beta_i^{k_i}}). 
\end{align*}
Although the expression on the right in equation~(\ref{eq: CPT_value_discrete}) depends on the permutation $\beta_i$, one can check that the formula evaluates to the same value $V_i(L_i)$ as long as the permutation $\beta_i$ satisfies~(\ref{eq: order}). 
The CPT value in equation~(\ref{eq: CPT_value_discrete}) can equivalently be written as:
\begin{align}\label{eq: CPT_value_cumulative}
 	V_i(L_i) &= \sum_{j = 1}^{j_r - 1} w_i^+\l(\sum_{i = 1}^j p_i^{\beta_i^j}\r)\l[v_i(\gamma_i^{\beta_i^j}) - v_i(\gamma_i^{\beta_i^{j+1}})\r] \nonumber\\
 	&+ w_i^+\l(\sum_{i = 1}^{j_r} p_i^{\beta_i^j}\r)v_i\l(\gamma_i^{\beta_i^{j_r}}\r) + w_i^-\l(\sum_{i = j_r + 1}^{k_i} p_i^{\beta_i^j}\r)v_i(\gamma_i^{\beta_i^{j_r+1}}) \nonumber \\
 	&+ \sum_{j = j_r + 1}^{k_i-1} w_i^-\l(\sum_{i = j+1}^{k_i} p_i^{\beta_i^j}\r)\l[v_i(\gamma_i^{\beta_i^{j+1}}) - v_i(\gamma_i^{\beta_i^{j}})\r].
 \end{align}

A person is said to have CPT preferences if, given a choice between lottery $L_i$ and lottery $L_i'$, she chooses the one with higher CPT value.

The \emph{expected utility} of player $i$ is completely characterized by her value function $v_i: \Gamma_i \to \bbR$.
For a lottery $L_i = \{(p_i^j, \gamma_i^j)\}_{1 \leq j \leq k_i}$, the expected utility is given by 
$$U_i(L_i) := \sum_{j = 1}^{k_i} p_i^j v_i(\gamma_i^j).$$
A person is said to have EUT preferences if, given a choice between lottery $L_i$ and lottery $L_i'$, she chooses the one with higher expected utility.
Observe that, if the probability weighting functions are linear, i.e. $w_i^\pm(p) = p$, for $p \in [0,1]$, then $V_i(L_i) = U_i(L_i)$ for all lotteries $L_i \in \Delta(\Gamma_i)$.
Thus EUT preferences are a special case of CPT preferences.



\subsection{Mechanism design framework}
\label{subsec: mechframe}

For each $i$, 
let $\Type_i$ denote the set from which 
the
permissible types for player $i$ are drawn.
Corresponding to any type $\type_i$ for player $i$,
let  
$v_i: \Gamma_i \to \bbR$ be her \emph{value function}, and
$w_i^{\pm}: [0,1] \to [0,1]$ be her \emph{probability weighting functions}.
Let $V_i^{\type_i}(L_i)$ denote the CPT value of the lottery $L_i \in \Delta(\Gamma_i)$ for player $i$ having type $\type_i$.
Thus, the type $\type_i$ completely determines the preferences of player $i$ over lotteries on her outcome set $\Gamma_i$.%
\footnote{
	Since we have assumed that the type of a player completely determines her CPT features, we are implicitly assuming \emph{private preferences}, i.e. the preference over lotteries on the outcome set for each player is her private information and does not depend on other players' information or types, also known as \emph{informational externalities} (see \citet{williams2008communication}).
}
We will assume 
that the sets $\Type_i$ are finite for all $i$.


Let $\type := (\type_1, \dots, \type_n)$ denote the profile of types of 
the 
players, and let $\Type := \prod_i \Type_i$.
We assume that each player knows her type but 
cannot observe the types of her opponents.

Let the set of players $[n]$, their corresponding type sets $\Type_i, i \in [n]$, 
the allocation set $A$, and the outcome spaces $\Gamma_i, i \in [n],$ together with the mapping $\zeta$ form an \emph{environment}, denoted by
\begin{equation}
\label{eq: def_environment}
	\cal{E} := \l([n], (\Type_i)_{i \in [n]}, A, (\Gamma_i)_{i \in [n]}, \zeta\r).
\end{equation}

A \emph{social choice function}
\begin{equation}
	\label{eq: def_scf}
	g: \Type \to \Delta(\Gamma)
\end{equation}
determines a 
lottery over the 
product of the outcome sets of the individual players 
given the type profile $\type$ of all the players.
The 
\emph{outcome choice function} for player $i$
corresponding to the social choice function $g$ is 
\begin{equation}
\label{eq: def_ocf}
	g_i: \Type \to \Delta(\Gamma_i),
\end{equation}
given by the restriction of $g$ to the set $\Gamma_i$,
and
represents the lottery faced by player $i$ given the type profile $\type$ of all the players.
We will treat the social choice function $g$ as the goal of the mechanism designer, i.e, 
the goal is 
to design a mechanism to implement a social choice function $g$
without having knowledge 
of the true types of the players.

Let an \emph{allocation choice function}
\begin{equation}
\label{eq: def_acf}
	f: \Type \to \Delta(A)
\end{equation}
represent the choice of the allocation to be implemented by the system operator given a type profile $\type \in \Type$.
Note that $f(\type)$ is a probability distribution over the allocations $A$.
Thus we allow the system operator to implement a randomized allocation.
A \emph{deterministic allocation choice function} maps each type profile to a unique allocation.
Since the mapping $\zeta$ is fixed and a part of the environment description, the allocation choice function $f$ effectively captures the goal of a mechanism designer.
More precisely, let $\cal{F}(g)$ denote the set of all allocation choice functions that induce the social choice function $g$, 
i.e. for all $\type \in \Type$, $g(\type)$ is the mixture probability distribution of the probability distributions $(\zeta(\alpha), \alpha \in A)$ with weights $f(\alpha|\type)$.
We note that the set $\cal{F}(g)$ is non-empty if and only if
\[
	g(\type) \in \co \{\zeta(\alpha): \alpha \in A \},
\]
for all $\type \in \Type$.
We wish to design a mechanism that would implement an allocation choice function in $\cal{F}(g)$.
Thus a social choice function is implementable if and only if we can implement an allocation choice function $f$ that satisfies
\begin{equation}
\label{eq: acf_to_scf}
	g(\gamma| \type) = \sum_{\alpha \in A} f(\alpha| \type) \zeta(\gamma| \alpha),
\end{equation}
for all $\gamma \in \Gamma, \type \in \Type$.
%
This raises 
the main question in mechanism design, 
namely whether we can design a game that results in the implementation of some given allocation choice function $f$ under certain rationality conditions on the players even when the system operator cannot observe the players' types. 

First, let us look at the 
the relationship between lotteries on allocations and lotteries on the outcome set of a given player. 
Any lottery $\mu \in \Delta(A)$ induces 
a lottery $L_i(\mu) \in \Delta(\Gamma_i)$ given by
\begin{equation}
\label{eq: all_lottery_to_outcome_lottery}
	L_i(\gamma_i|\mu) := \sum_{\alpha \in A} \mu(\alpha) \zeta_i(\gamma_i | \alpha).
\end{equation}
Given that player $i$ has type $\type_i$, we know that the CPT value of lottery $L_i(\mu)$ is $V_i^{\type_i}(L_i(\mu))$.
This induces a value for player $i$ with type $\type_i$ on the lottery $\mu$ denoted by
\begin{equation}
	\label{eq: def_W}
	W_i^{\type_i}(\mu) := V_i^{\type_i}(L_i(\mu)).
\end{equation}
This defines a \emph{utility function} $W_i^{\type_i} : \Delta(A) \to \bbR$ that gives the preference relation over the lotteries $\mu \in \Delta(A)$ for a player $i$ 
having type $\type_i$.
Let
\begin{equation}
\label{eq: def_u_of_alpha}
 	u_i^{\type_i}(\alpha) := V_i^{\type_i}(\zeta_i(\alpha)) = W_i^{\type_i}(\alpha)
 \end{equation} 
 be the CPT value of the lottery for player $i$ corresponding to allocation $\alpha$.
\footnote{
Even if $u_i^{\type_i} = u_i^{\tilde \type_i}$ for some $\type_i \neq \tilde \type_i$ it is 
sometimes
convenient to retain the connection to the underlying type.
	Notice that we have allowed different types of player $i$ to have the same CPT features.
	Later, when we discuss mechanism design with a common prior, which is a distribution on the types of all the players, it will let us differentiate between the types of players that have identical CPT features but distinct beliefs on the opponents' types.
	Mechanism design often focuses on ``naive type sets'', that is, the type set $\Type_i$ for each player $i$ is assumed 
	to be comprised of 
	exactly one element for each ``preference type'' of the player.
	Here, by preference type of a player we mean the preferences of the player on her outcome set.
	We borrow the expression ``naive type sets'' from \cite{borgers2011common}.
	In this paper, we do not assume the type sets to be naive.
	Such an assumption 
	would entail a 
	bijective correspondence between the types $\type_i$ and the CPT features $(v_i, w_i^\pm)$ for each player $i$.
	This distinction is relevant because besides having a preference type, a player can also have a ``belief type''.
	For example, the prior $F$ could be such that $F_{-i}(\type_i) \neq F_{-i}(\tilde \type_i)$ even when the value function and the probability weighting functions corresponding to the types $\type_i$ and $\tilde \type_i$ coincide. 
	(For more on this, see~\citet{bergemann2005robust,liu2009redundant}, and \citet[Chapter~10]{borgers2015introduction}.)
}
If player $i$ has EUT preferences, then we have that
\begin{equation}
\label{eq: W_EUT_linear}
	W_i^{\type_i}(\mu) = \sum_{\alpha \in A} \mu(\alpha) u_i^{\type_i}(\alpha).
\end{equation}

We now consider
a \emph{mechanism} 
\begin{equation}
\label{def: mechanism}
	\cM_0 := ((\Sig_i)_{i \in [n]}, h_0),
\end{equation}
consisting of a collection of finite \emph{signal sets} 
$\Sig_i$,
 one for each player $i$, 
 and an \emph{allocation function} 
 \begin{equation}
\label{def: allocation_function}
 	h_0: \Sig \to \Delta(A),
 \end{equation}
where $\Sig := \prod_{i \in [n]} \Sig_i$.
Note that the allocation function is allowed to be randomized.
Let $\sig_i$ denote
a typical element of $\Sig_i$, and $\sig := (\sig_i)_{i \in [n]}$ 
denote a typical element of $\Sig$, 
called a \emph{signal profile}.

It is straightforward to incorporate the feature that the outcome sets $\Gamma_i$
might be different from the allocation set $A$, and 
the corresponding allocation-outcome mapping $\zeta$, 
so as to 
extend the definition of a Bayes-Nash equilibrium strategy profile for the mechanism $\cM_0$ and the implementability of 
an
allocation choice function $f$ in Bayes-Nash equilibrium.
To do this, assume that 
the types of the individual players 
are drawn according to a prior distribution $F \in \Delta(\Type)$ and that this distribution is common knowledge among the agents 
and the system operator. 
Let $F_i \in \Delta(\Type_i)$ denote the marginal of $F$ on $\Type_i$.
Suppose player $i$ has type $\type_i$.
Then the belief of player $i$ about the types of other players is given by the conditional distribution 
$$F_{-i}(\type_{-i}|\type_i) := \frac{F(\type_i, \type_{-i})}{F_i(\type_i)}, \text{ for all } \type_{-i} \in \Type_{-i}, \type_i \in \supp F_i,$$
where $\type_{-i} := (\type_j)_{j \neq i}$ is the profile of types of all players other 
than 
player $i$, $\Type_{-i} := \prod_{j \neq i} \Type_j$.

{
Recall that $\sig_i$ denotes
a typical element of $\Sig_i$, and $\sig := (\sig_i)_{i \in [n]}$ 
denotes
a typical element of $\Sig$.
Let $\Sig_{-i} := \prod_{j \neq i} \Sig_j$ with a typical element denoted by $\sig_{-i}$.
Let 
\begin{equation}
\label{eq: strategy}
	\sigma_i: \Type_i \to \Delta(\Sig_i)
\end{equation}
be a \emph{strategy} for player $i$, and let $\sigma := (\sigma_1, \sigma_2, \dots, \sigma_n)$ denote a strategy profile.
Let $\sigma_{-i} := (\sigma_j)_{j \neq i}$ denote the strategy profile of all players other than player $i$.
For any type $\type_i$ (such that $F_i(\type_i) > 0$) and signal $\sig_i$,
consider the probability distribution 
$\mu_i(\type_i, \sig_i; \cM_0, F, \sigma_{-i}) \in \Delta(A)$ 
given by 
\begin{equation}
\label{eq: belief_bayes_induced}
	\mu_i(\alpha | \type_i, \sig_i; \cM_0, F, \sigma_{-i}) 
	:= \sum_{\type_{-i} \in \Type_{-i}} F_{-i}(\type_{-i}|\type_i) \sum_{\sig_{-i} \in \Sig_{-i}} \prod_{j \neq i} \sigma_j(\sig_j | \type_j) h_0(\alpha | \sig),
\end{equation}
for all  $\alpha \in A$.
Suppose player $i$ has type $\type_i$ (such that $F_i(\type_i) > 0$),
and she chooses to signal $\sig_i$. Then, by Bayes' rule, the lottery faced by player $i$ is given by 
$$L_i\l(\mu_i(\type_i, \sig_i; \cM_0, F, \sigma_{-i})\r).$$
This comes from the assumption that player $i$ knows her type $\type_i$, the common prior $F$, the strategies $\sigma_j, j \neq i$ of her opponents, and the mapping $\zeta_i$.
Given that player $i$ has type $\type_i$, her CPT value for the lottery 
$L_i(\mu_i(\type_i, \sig_i; \cM_0, F, \sigma_{-i})$
is given by 
$$W_i^{\type_i}(\mu_i(\type_i, \sig_i; \cM_0, F, \sigma_{-i})) = V_i^{\type_i}\l(L_i(\mu_i(\type_i, \sig_i; \cM_0, F, \sigma_{-i})\r),$$
where we recall that $W_i^{\type_i}(\mu)$ is the CPT value of player $i$ with type $\type$ for the lottery $L_i(\mu) \in \Delta(\Gamma_i)$ induced by the distribution $\mu \in \Delta(A)$. 
Let the \emph{best response strategy set} $BR_i(\sigma_{-i})$ for player $i$ to a strategy profile $\sigma_{-i}$ of her opponents consist of 
all strategies $\sigma_i^*: \Type_i \to \Delta(\Sig_i)$ such that 
\begin{align}
\label{eq: Bayes_Nash_W_def}
	W_i^{\type_i}(\mu_i(\type_i, \sig_i; \cM_0, F, \sigma_{-i})) \geq W_i^{\type_i}(\mu_i(\type_i, \sig_i'; \cM_0, F, \sigma_{-i})),
\end{align}
for all $\type_i \in \supp F_i, \sig_i \in \supp \sigma_i^*(\type_i), \sig_i' \in \Sig_i$.
\ifvenkat
\red (Q: Mention the connection between $W^i$ and
$V^i$ again to avoid confusion. Here $W^i$ seems pulled out of a hat.) \black
\fi
The rationale behind this definition is that a player's best response strategy $\sigma^*$ should not assign positive probability to any suboptimal signal $\sig_i$ given her type $\type_i$.

A strategy profile $\sigma^*$ is said to be an \emph{$F$-Bayes-Nash equilibrium} for the environment $\cal{E}$ and common prior $F$ with respect to the mechanism $\cM_0$ if, for each player $i$, we have
\begin{equation}
\label{eq: def_BNE}
	\sigma_i^* \in BR_i(\sigma^*_{-i}).
\end{equation}
We will refer to $\sigma^*$ simply as a Bayes-Nash equilibrium when the respective environment $\cal{E}$, the common prior $F$, and mechanism $\cM_0$ are clear from the context.


We say that the allocation choice function $f$ is 
\emph{implementable in $F$-Bayes-Nash equilibrium} 
by a mechanism
if there exists a mechanism $\cM_0$ and an $F$-Bayes-Nash equilibrium $\sigma$ 
such that $f$ is the induced distribution, i.e. 
for all $\type_i \in \supp F_i, \alpha \in A$, 
we have
\begin{equation}
\label{eq: def_implementable}
	f(\alpha|\type) = \sum_{\sig \in \Sig} \l(\prod_{i \in [n]} \sigma_i(\sig_i|\type_i)\r) h_0(\alpha| \sig).
\end{equation}
\ifvenkat
\red Why does $h_*$ have the subscript $*$? \blue Corrected
\black
\fi
}

An alternative notion is that of 
an allocation choice function $f$ being \emph{implementable in dominant equilibrium}. 
The traditional notion states that a strategy $\sigma_i$ is a \emph{dominant strategy} for player $i$ if the signals in the support of $\sigma_i(\type_i)$ are optimal given player $i$'s type $\type_i$ 
and any signal profile $\sig_{-i}$ of the opponents.
More precisely, if we let
\begin{align}
	\mu_i(\type_i, \sig_i; \cM_0, \sig_{-i}) := h_0(\sig_i, \sig_{-i}),
\end{align}
then $\sigma_i^*$ is a dominant strategy if, for all $\type_i \in \type_i$, $\sig_i \in \supp \sigma_i^*(\type_i)$, 
$\sig_i' \in \Sig_i$, 
and $\sig_{-i} \in \Sig_{-i}$, we have
\begin{equation}
\label{eq: dominant_strat}
	W_i^{\type_i}(\mu_i(\type_i, \sig_i; \cM_0, \sig_{-i})) \geq W_i^{\type_i}(\mu_i(\type_i, \sig_i'; \cM_0, \sig_{-i})).
\end{equation}
Thus, if player $i$ employs a dominant strategy, then regardless of the signal profile of the opponents she always signals a best response given her type.
A dominant equilibrium is one in which each player plays a dominant strategy.
We say that an allocation choice function $f$ is implementable in dominant equilibrium 
if there exists a mechanism $\cM_0$ and a strategy profile $\sigma^* = (\sigma_1^*, \ldots, \sigma_n^*)$
where each $\sigma_i^*$ is a dominant strategy (equivalently, $\sigma^*$ is a dominant equilibrium)
such that \eqref{eq: def_implementable} holds for all $\type_i \in \Theta, \alpha \in A$.

Under EUT, if a signal $\sig_i$ is a best response of player $i$ for all of the opponents' signal profiles, then it is also a best response for any belief $G_{-i} \in \Delta(\Sig_{-i})$ of player $i$ over her opponents' signal profiles.
However, under CPT, this need not hold.
(See example~\ref{ex: not_belief_dominant_VCG}.)
This observation leads us to 
the following stricter notion of dominant strategies under CPT.
We call a strategy $\sigma_i$ a
\emph{belief-dominant strategy} for player $i$ if the signals in the support of $\sigma_i(\type_i)$ are optimal given player $i$'s type $\type_i$ and any belief $G_{-i} \in \Delta(\Sig_{-i})$ 
she has
on the signal profile of 
her 
opponents.
If we let
\begin{align}
	\mu_i(\type_i, \sig_i; \cM_0, G_{-i}) := \sum_{\sig_{-i}} G_{-i}(\sig_{-i}) h_0(\sig_i, \sig_{-i}),
\end{align}
then $\sig_i^*$ is a belief-dominant strategy 
for player $i$
if, for all $\type_i \in \type_i$, $\sig_i \in \supp \sigma_i^*(\type_i)$, 
$\sig_i' \in \Sig_i$, 
and $G_{-i} \in \Delta(\Sig_{-i})$, we have
\begin{equation}
	\label{eq: belief_dominant_strat}
	W_i^{\type_i}(\mu_i(\type_i, \sig_i; \cM_0, G_{-i})) \geq W_i^{\type_i}(\mu_i(\type_i, \sig_i'; \cM_0, G_{-i})).
\end{equation}
It is straightforward to check that under EUT 
a strategy is dominant if and only if it is belief-dominant.
A \emph{belief-dominant equilibrium} is one in which every player plays a belief-dominant strategy.
We say that an allocation choice function $f$ is implementable in belief-dominant equilibrium 
if there exists a mechanism $\cM_0$ and a strategy profile $\sigma^* = (\sigma_1^*, \ldots, \sigma_n^*)$
where each $\sigma_i^*$ is a belief-dominant strategy (equivalently, $\sigma^*$ is a belief-dominant equilibrium)
such that \eqref{eq: def_implementable} holds for all $\type_i \in \Theta, \alpha \in A$.

Note that 
if $\sigma^*$ is a belief-dominant strategy profile, 
and thus a belief-dominant equilibrium,
then it is a dominant strategy profile,
i.e. a dominant equilibrium,
and also 
an $F$-Bayes-Nash equilibrium with respect to any prior distribution $F$
on type profiles.

\input{plots/wt_plot_prelec}

\begin{example}
\label{ex: not_belief_dominant_VCG}
	Let $n = 2$. Let $\Type_1 = \Type_2 = \{\Up, \D\}$. 
	Let $A = \{a, b, c\}$, $\Gamma_1 = \{\I, \II, \III, \IV, \V\}$, and $\Gamma_2 = A$.
	Let the allocation-outcome mapping be given by the product distribution of the marginals $\zeta_1$ and $\zeta_2$, given by
	$\zeta_1(a) = \{(1/2, \I); (1/2, \V) \}, \zeta_1(b) = \{(1/2, \II); (1/2, \IV) \}, \zeta_1(c) = \{(1, (\III)) \},$
and 
	$\zeta_2(\alpha) = \alpha, \forall \alpha \in A.$
Let the probability weighting functions for gains for the two players be given by
\[
	w_1^+(p) = \exp \{-(-\ln p)^{0.5}\}, w_2^+(p) = p,
\]
for $p \in [0,1]$.
In this example, we consider only lotteries with outcomes in the gains domain, and hence 
an arbitrary 
probability weighting function for the losses can be assumed.
Here, for player $1$'s probability weighting function, we employ the form 
suggested by \citet{prelec1998probability}
(see figure~\ref{fig: wt_pwf_prelec}).
Note that player $2$ has EUT preferences.
Let the value functions $v_1$ and $v_2$ be given by
\begin{center}
\begin{tabular}{| c | c | c | c | c | c |}
\hline
$v_1$ & I & II & III & IV & V\\
\hline
$\Up$ & $2x$ & $x+1$ & $1.99$ & $1$ & $0$\\ 
\hline
$\D$ & $0$ & $0$ & $1$ & $0$ & $0$\\ 
\hline
\end{tabular}
\hspace{1in}
\begin{tabular}{| c | c | c | c | c | c |}
\hline
$v_2$ & a & b & c\\
\hline
$\Up$ & $1$ & $0$ & $2$\\ 
\hline
$\D$ & $0$ & $1$ & $2$\\ 
\hline
\end{tabular}
\end{center}
where $x := 1/w_1^+(0.5) = 2.2992$.
Note that $2x = 4.5984$ and $x + 1 = 3.2992$.
We have,
\begin{align*}
	V_1^\Up(L_1(a)) &= 2x w_1^+(0.5) = 2, \\
	V_1^\Up(L_1(b)) &= 1 + x w_1^+(0.5) =2,
\end{align*}
and, 
\begin{align*}
	V_1^\Up(0.5L_1(a) + 0.5L_1(b)) &= w_1^+(0.75) + x w_1^+(0.5) + (x-1) w_1^+(0.25) \\
	&= 1.9851.
\end{align*}
(Here, we have $w_1^+(0.25) = 0.3081$, $w_1^+(0.5) = 0.4349$, and $w_1^+(0.75) = 0.5849$.)
Consider the mechanism $\cM = ((\Sig_i)_{i \in [n]}, h_0)$, where $\Sig_1 = \Sig_2 = \{\Up, \D\}$, and $h_0$ is given by
\begin{align*}
	h_0(\Up,\Up) = a, h_0(\Up,\D) = b, h_0(\D,\Up) = c, h_0(\D,\D) = c.
\end{align*}
Consider the strategies
\[
	\sigma_i(\Up) = \Up, \text{ and } \sigma_i(\D) = \D,
\]
for both the players $i$.
It is easy to see that these strategies are dominant for both the players.
However, if player $1$ has type $\Up$ and believes that there is an equal chance of player $2$ reporting her strategy to be $\Up$ and $\D$, then player $1$'s best response is to report $\D$.
Thus, $\sigma_1$ is not a belief-dominant strategy for player $1$.

\qed
\end{example}

We will now look at the remark made earlier about the absence of the distinction between the allocation set and the outcome sets in classical mechanism design, and why it is important to consider this distinction under CPT.
In Appendix~~\ref{sec: new_appendix}, we show that under EUT it suffices to consider the scenario where
the outcome set of each player is the same as the allocation set by the simple expedient of interpreting
each type $\type_i \in \Type_i$ in terms of the utility function on allocations that it defines via 
\eqref{eq: def_u_of_alpha}.

\black

While equation~\eqref{eq: W_EUT_linear} holds under EUT, under CPT in general it 
does not hold, and in general the utility function $W_i^{\type_i}$ is not completely determined by the values $u_i^{\type_i}(\alpha), \forall \alpha \in A$.
Thus, we can either characterize the type of a player by her utility function $W_i^{\type_i}$ or by her CPT features which, combined with the mapping $\zeta_i$, together 
define 
the utility function $W_i^{\type_i}$.
In any given setting, it is more natural to put behavioral assumptions on the CPT features $(v_i, w_i^\pm)$
\ifvenkat
\red (Q: rephrase - the way it is written seems to suggest that behavioral assumptions are being put on the $\zeta_i$) \black
\fi
than on the utility function $W_i^{\type_i}$.%
\footnote{
	Note that, in general, the preferences defined by the utility function $W_i^{\type}$ over the lotteries over the allocation set may not be given by CPT preferences directly, i.e. there need not exist any probability weighting functions $\tilde w_i^\pm$ such that, for all $\mu \in \Delta(A)$, $W_i^{\type_i}(\mu)$ is equal to the CPT value corresponding to the value function $u_i^{\type_i}$ on $A$ and the probability weighting functions $\tilde w_i^\pm$.
	To see this, consider a type $\type_i$ for player $i$ such that
	$V_i^{\type_i}(L_i') = V_i^{\type_i}(L_i'') > V_i^{\type_i}(0.5L_i' + 0.5L_i'')$, for lotteries $L_i', L_i'' \in \Delta(\Gamma_i)$.
	See \citet{phade2018learning} for an example of CPT preferences and
	lotteries (over $4$ outcomes) that satisfy the above condition.
	Let there be two allocations $\alpha'$ and $\alpha''$ such that $\zeta_i(\alpha') = L_i'$ and $\zeta_i(\alpha'') = L_i''$.
	If $W_i^{\type_i}$ were to correspond to any CPT preference directly on the allocation set then, by the first order stochastic dominance property of CPT, we would get $W_i^{\type_i}(0.5 \alpha' + 0.5 \alpha'') = W_i^{\type_i}(\alpha') = W_i^{\type_i}(\alpha'')$.
	But, since this is not true for the setting under consideration, we get that $W_i^{\type_i}$ cannot correspond to any CPT preference directly over $A$.
}
Hence, we include the sets $\Gamma_i$ and the mappings $\zeta_i$, for all $i$, in our system model under CPT.


\section{The revelation principle}
\label{sec: rev_prin}

A mechanism $\cM_0 = ((\Sig_i)_{i \in [n]}, h_0)$ is called a \emph{direct mechanism} if $\Sig_i = \Type_i,$ for all $i$.
Let 
$\cM^d_0 := ((\Type_i)_{i \in [n]}, h_0^d)$ 
denote a direct mechanism, where 
\begin{equation}
	\label{eq: def_direct_allocation_func}
	h_0^d: \Type \to \Delta(A)
\end{equation}
is the \emph{direct allocation function}.
Corresponding to a direct mechanism, let $\sigma^d_i : \Type_i \to \Type_i$ denote the truthful strategy for player $i$, given by
\begin{equation}
\label{eq: def_dir_strat}
	\sigma_i^d(\type_i) = \type_i,
\end{equation}
for all $\type_i \in \Type_i$.
An allocation choice function $f$ is said to be \emph{truthfully implementable} in $F$-Bayes-Nash equilibrium 
(resp. dominant equilibrium or belief-dominant equilibrium)
if there exists a direct mechanism $\cM^d_0$ such that the truthful strategy profile $\sigma^d$ 
is an $F$-Bayes-Nash equilibrium 
(resp. dominant equilibrium or belief-dominant equilibrium), 
and it induces $f$.

The \emph{revelation principle}%
\footnote{
	This is the version 
	of the revelation principle
	commonly referred to in the mechanism design context. 
	Another version of the revelation principle appears in the context of correlated equilibrium \citep{aumann1974subjectivity,aumann1987correlated}.
	This is concerned with 
	an $n$-player non-cooperative game in normal form.
	A mediator draws a message profile, comprised of a message for each player, from a fixed joint probability distribution on the set of message profiles, and sends each player her corresponding message.
	The joint distribution over message profiles used is assumed to be common knowledge between the mediator and
	all the players.
	Based on her received signal, each player chooses her action (possibly from a probability distribution over her action set).
	When the message set for each player is the same as her action set and the probability distribution on the set of message profiles (or equivalently action profiles) is such that truthful strategy, i.e. the strategy of choosing the action that is received as a message from the mediator, is a Nash equilibrium, then such a probability distribution is said to be a correlated equilibrium. 
	Under EUT, the set of all correlated equilibria of a game is characterized as the union over all possible message sets and mediator distributions, of the sets of joint distributions on the action profiles of all players, arising from all the Nash equilibria for the resulting game.
	See \citet{phade2018learning} for a discussion on the revelation principle for correlated equilibrium when players have CPT preferences.
	\citet{myerson1986multistage} has considered a further generalization to games with incomplete information in which each player first reports her type.
	Analyzing such settings under CPT would entail dynamic decision making and is beyond the scope of this paper.
} 
says that if an allocation choice function is implementable in Bayes-Nash equilibrium (resp. dominant equilibrium or belief-dominant equilibrium) by a mechanism, then it is also truthfully implementable in Bayes-Nash equilibrium (resp. dominant equilibrium or belief-dominant equilibrium) by a direct mechanism.
When the players are restricted to have EUT preferences and the outcome set of each player is assumed to 
be 
the same as the allocation set with the trivial allocation-outcome mapping, \citet{myerson1982optimal} proved that the revelation principle holds for both the versions - Bayes-Nash equilibrium and dominant equilibrium (and hence also for belief-dominant equilibrium, since dominant strategies are equivalent to belief-dominant strategies under EUT). 
It is easy to extend this result to the general setting where 
some of the individual outcome sets might differ from the allocation set,
provided the players are restricted to have EUT preferences.
Indeed, in Appendix~\ref{sec: new_appendix} it is proved that,
under EUT, an allocation choice function $f$ is implementable in $F$-Bayes-Nash (resp. dominant or belief-dominant) equilibrium by a mechanism $\cM_0$ for the environment $\cal{E}$ with the equilibrium strategy $\sigma$, if and only if, for the corresponding environment 
$\cal{E}'$ (defined in Appendix~\ref{sec: new_appendix}),
the corresponding allocation choice function $f'$ is implementable in $F'$-Bayes-Nash (resp. dominant or belief-dominant) by the same mechanism $\cM_0$ with the corresponding equilibrium strategy $\sigma'$.
We now observe that $\cM_0$ is a direct mechanism for environment $\cal{E}$ if and only if it is a direct mechanism for environment $\cal{E}'$.
Also, $\sigma_i$ is the truthful strategy with respect to the environment $\cal{E}$ and a direct mechanism $\cM_0$, if and only if, the corresponding strategy $\sigma_i'$ 
is the truthful strategy with respect to the environment $\cal{E}'$ and the same direct mechanism $\cM_0$.
These observations together give us the required revelation principle under EUT for the setting where 
the outcome sets of some of the players can differ from the allocation set.

The following example shows that the revelation principle need not hold when players have CPT preferences.
We will consider implementability in Bayes-Nash equilibrium in this example.

\begin{example}
\label{ex: no_rev_impl}
Let there be two players, i.e. $n = 2$. Let each player belong to one of the three types: Mildly Favorable ($\MF$), Unfavorable ($\UF$), and Super Favorable ($\SF$), i.e. $\Type_1 = \Type_2 = \{\MF, \UF, \SF\}$.
Let the outcome sets for both the players be $\Gamma_1 = \Gamma_2 = \{\I,\II,\III,\IV,\V\}$. 
Let the value functions $v_1$ and $v_2$ for both the players be as shown below.
\begin{center}
\begin{tabular}{| c | c | c | c | c | c |}
\hline
 & I & II & III & IV & V\\
\hline
$\MF$ & $13.616$ & $8.616$ & $5.816$ & $3.8$ & $0$\\ 
\hline
$\UF$ & $-190$ & $-100$ & $-1\K$ & $-50$ & $0$\\ 
\hline
$\SF$ & $0$ & $0$ & $1\M$ & $0$ & $0$\\
\hline
\end{tabular}
\end{center}
Observe that a player with type $\MF$ has mild gains for all the outcomes, a player with type $\UF$ has medium losses for all outcomes except outcome $\III$, where she has a big loss, and a player of type $\SF$ has a huge gain for outcome $\III$ and zero gains otherwise.

Let the probability weighting functions for both the players, for all of their types, be given by the following piecewise linear functions:
\begin{equation*}
	w^+(p) = \begin{cases}
		(8/7)p, &\text{ for } 0 \leq p < (7/32),\\
		(1/4) + (2/3)(p-7/32), &\text{ for } (7/32) \leq p < 25/32,\\
		(5/8) + (12/7)(p - 25/32), &\text{ for } (25/32) \leq p < 1,
	\end{cases}
\end{equation*}
for gains, and
\begin{equation*}
	w^-(p) = \begin{cases}
		(3/2)p, &\text{ for } 0 \leq p < (1/8),\\
		(3/16) + (1/2)(p-1/8), &\text{ for } (1/8) \leq p < 3/4,\\
		(1/2) + 2(p - 3/4), &\text{ for } (3/4) \leq p < 1,
	\end{cases}
\end{equation*}
for losses.
See figure~\ref{fig: ex_weight}.


%
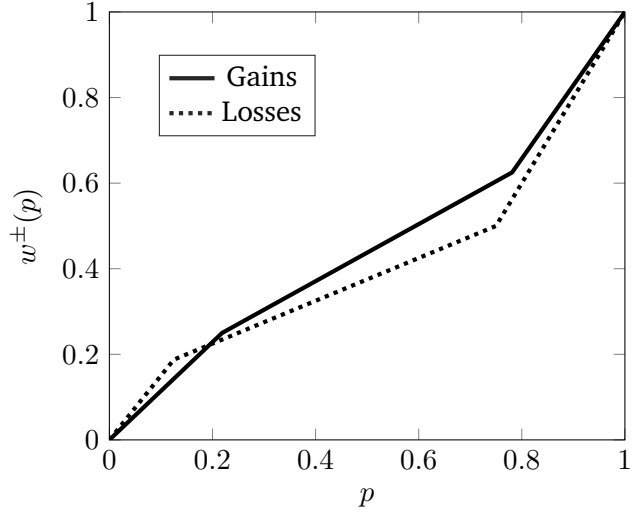
\begin{figure}
\centering
\begin{tikzpicture}[scale = 1]
    \begin{axis}[
    	xmin = 0,
    	xmax = 1,
    	ymin = 0,
    	ymax = 1,
        xlabel=$p$,
        ylabel=$w^\pm(p)$,
        every axis plot/.append style={ultra thick},
        legend style={ at={(0.4,0.9)}, anchor=north east, align=left, draw=white!15!black}]
\addplot []
  table[row sep=crcr]{%
	0   0\\
    0.2188   0.2500\\
    0.7812   0.6250\\
    1   1\\
};
\addlegendentry{Gains}

\addplot [dotted]
  table[row sep=crcr]{%
    0   0\\
    0.1250   0.1875\\
    0.75   0.5\\
    1   1\\
};
\addlegendentry{Losses}


\end{axis}
\end{tikzpicture}%
\caption{Probability weighting functions for the players in example~\ref{ex: no_rev_impl}.}
\label{fig: ex_weight}
\end{figure}

Let the prior distribution $F$ be such that the types of the players are independently sampled with probabilities, 
\begin{equation}\label{eq: ex_prior1}
	\bbP(\MF) = 1/2, \bbP(\UF) = 3/8, \bbP(\SF)  = 1/8.	
\end{equation}

Let $A = \{a, b, c\}$ be the allocation set, and let the allocation-outcome mapping be given by
\begin{align*}
	\zeta(a) &= \{(1/2, (\I,\I)); (1/2, (\V, \V)) \},\\
	\zeta(b) &= \{(1/2, (\II,\II)); (1/2, (\IV, \IV)) \},\\
	\zeta(c) &= (\III,\III).
\end{align*}

Consider the allocation choice function $f^*$ given by
\begin{align*}
	&f^*(\SF, \type_2) = f^*(\type_1, \SF) = c, \quad \forall \type_1 \in \Type_1, \type_2 \in \Type_2,\\
	&f^*(\UF, \type_2) = f^*(\type_1, \UF) = \{(1/2, a); (1/2, b)\}, \quad \forall \type_1 \in \{\MF, \UF \}, \type_2 \in \{\MF, \UF \},\\
	&f^*(\MF, \MF) = \{(1/2,a); (1/2,b)\}.
\end{align*}
We will now show that $f^*$ is \emph{not} truthfully implementable in $F$-Bayes-Nash equilibrium by a direct mechanism.
However, if we do not restrict ourselves to direct mechanisms, then we will show that it is possible to implement $f^*$ in $F$-Bayes-Nash equilibrium.
We will then conclude that the revelation principle does not hold 
for Bayes-Nash implementability
when the players have CPT preferences. 

We observe that if either of the players is of type $\SF$ then under 
the 
allocation $c$ the players with type $\SF$ get the maximum possible reward, 
i.e. $1\M$. 
This motivates implementing allocation $c$ if either of the players is of type $\SF$. 
Now suppose none of the players has type $\SF$.
If player $1$ is of type $\UF$, then  
in claim \ref{cl: ex_valuUF}, we show that player $1$'s CPT value for 
the 
lottery $L_i(\mu)$ corresponding to a distribution $\mu \in \Delta(A)$ is maximized when
\begin{equation}
\label{eq: mu_hf_hf}
	\mu = \{(1/2, a); (1/2, b); (0,c) \}.
\end{equation}
Thus, if at least one of the players has type $\UF$ and none of the players have type $\SF$, then the distribution in \eqref{eq: mu_hf_hf} gives the best CPT value for the players with type $\UF$.
This motivates the following definition:
we will call an allocation choice function $f$ \emph{special} if it satisfies 
\begin{equation}
\label{eq: ex_f_SF}
	f(\SF, \type_2) = f(\type_1, \SF) = \{(1, c)\}, \forall \type_1 \in \Type_1, \type_2 \in \Type_2,
\end{equation}
and
\begin{equation}
\label{eq: ex_f_UF}
	f(\UF, \type_2) = f(\type_1, \UF) = \{(1/2, a); (1/2, b)\}, \forall \type_1, \type_2 \in \{\MF,\UF\}.
\end{equation}
Note that $f^*$ is special.

After proving claim \ref{cl: ex_valuUF}, we will show that it is impossible to truthfully implement any special allocation choice function
in $F$-Bayes-Nash equilibrium by a direct mechanism.
In particular, this would imply that $f^*$ is not truthfully implementable by a direct mechanism.
We will then give a mechanism $\cal{M}_0$ that implements $f^*$ in $F$-Bayes-Nash equilibrium.

\begin{claim}
\label{cl: ex_valuUF}
 	The CPT value $V_1^{\UF}(L_1(\mu))$ is maximized when $\mu$ is given by \eqref{eq: mu_hf_hf}.
 \end{claim}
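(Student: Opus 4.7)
The plan is to write $V_1^{\UF}(L_1(\mu))$ in closed form as a function of $\mu=(p_a,p_b,p_c)$ on the simplex, and then maximize it in two stages: first show any maximizer has $p_c=0$, then optimize the remaining one-parameter problem. First, I would apply the cumulative CPT formula~\eqref{eq: CPT_value_cumulative} with $v_1(\cdot\,|\,\UF)$. Among the five outcomes, only $\V$ has nonnegative value, so $j_r=1$, and the unique gain contributes $w^+(p_a/2)\cdot 0=0$; the losses are $\IV,\II,\I,\III$ with values $-50,-100,-190,-1000$. Substituting the marginals of $\zeta_1$---namely $p_\V=p_a/2$, $p_\IV=p_b/2$, $p_\II=p_b/2$, $p_\I=p_a/2$, $p_\III=p_c$---and using $p_a+p_b+p_c=1$ to simplify the cumulative sums yields
\[
V_1^{\UF}(L_1(\mu)) = -50\,w^-\!\left(1-\tfrac{p_a}{2}\right) - 50\,w^-\!\left(\tfrac{1+p_c}{2}\right) - 90\,w^-\!\left(\tfrac{p_a}{2}+p_c\right) - 810\,w^-(p_c).
\]

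Second, I would eliminate $p_c$. Fix $p_a\in[0,1]$ and compare the displayed value at $p_c>0$ (with $p_b=1-p_a-p_c$) to its value at $p_c=0$ (with $p_b=1-p_a$). Because $w^-$ is strictly increasing, each of the three $p_c$-dependent arguments is weakly larger when $p_c>0$, and strictly so for the last term $-810\,w^-(p_c)$. Hence $V_1^{\UF}(L_1(\mu))<V_1^{\UF}(L_1(\mu)\big|_{p_c=0})$ whenever $p_c>0$, so every maximizer has $p_c=0$.

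Third, with $p_c=0$ the objective collapses to the one-variable function $V(p_a) := -50\,w^-(1-p_a/2) - 50\,w^-(1/2) - 90\,w^-(p_a/2)$ on $[0,1]$. Since $w^-$ is piecewise linear with kinks at $1/8$ and $3/4$, $V$ is piecewise linear with interior breakpoints at $p_a=1/4$ and $p_a=1/2$. Reading off the appropriate affine pieces of $w^-$ on each subinterval gives slopes $-17.5$ on $[0,1/4]$, $+27.5$ on $[1/4,1/2]$, and $-10$ on $[1/2,1]$, so $V$ has a kink-maximum at $p_a=1/2$ and a kink-minimum at $p_a=1/4$. Comparing the candidate values $V(0), V(1/4), V(1/2), V(1)$ confirms the global maximum is attained at $p_a=1/2$, corresponding precisely to $\mu=\{(1/2,a);(1/2,b);(0,c)\}$.

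The main subtlety sits in the third step: $V$ is not concave, so first-order conditions alone do not pin down the maximizer, and the kinks of $w^-$ force a case split over the three affine pieces. With only two interior breakpoints, however, this is a short bookkeeping exercise rather than a genuine obstacle; the conceptual content of the proof is carried by the reduction to $p_c=0$, which is immediate from monotonicity of $w^-$ together with the very large coefficient $-810$ arising from the catastrophic loss $v_1(\III\,|\,\UF)=-1000$.
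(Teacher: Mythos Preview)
Your proof is correct and follows essentially the same two-stage route as the paper: first eliminate $p_c$, then optimize the resulting one-variable piecewise-linear function. The only cosmetic differences are that the paper dispatches the first step by invoking strict stochastic dominance of CPT (outcome $\III$ is the worst for type $\UF$, so shifting mass off $c$ always helps) rather than inspecting the formula term by term, and for the second step the paper simply asserts ``we can verify'' with a figure, whereas you carry out the slope computation explicitly. Your version is, if anything, more self-contained.
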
 
\begin{proof}[Proof of Claim~\ref{cl: ex_valuUF}] 
	Consider a lottery
\[
	\mu = \{(x, a); (y, b); (z,c) \},
\]
where 
$x, y, z$ are nonnegative numbers with 
$x + y + z = 1$.
Then the outcome lottery for player $1$ is
\[
	L_1(\mu) = \{(x/2, \I); (y/2, \II); (z, \III); (y/2, \IV); (x/2, \V)\}.
\]
	CPT satisfies \emph{strict stochastic dominance} 
	\citep{chateauneuf1999axiomatization},
	i.e. shifting positive probability mass from an outcome to a strictly preferred outcome leads to a strictly preferred lottery. 
	This implies that $z = 0$ in the optimal solution.
	Taking $z = 0$ and $y = 1-x$, from \eqref{eq: CPT_value_cumulative},
	we have
	\begin{align*}
		E(x) &:= V_1^{\UF}(\{(x/2, \I); (1/2-x/2, \II); (0, \III); (1/2 - x/2, \IV); (x/2, \V)\}) \\
		&= - 90w^-(x/2) - 50w^-(1/2) - 50w^-(1 - x/2).
	\end{align*}
	We can verify that this function is maximized at $x = 1/2$. 
	See figure~\ref{fig: ex_valUF}.

\begin{figure}
\centering
\begin{tikzpicture}[scale = 1]
    \begin{axis}[
    	xmin = 0,
    	xmax = 1,
    	ymin = -75,
    	ymax = -65,
        xlabel=$x$,
        ylabel=$E(x)$,
        every axis plot/.append style={ultra thick},
        ]
\addplot []
  table[row sep=crcr]{%
	0   -68.7500\\
    0.25   -73.1250\\
    0.5   -66.25\\
    1   -71.2500\\
};


\end{axis}
\end{tikzpicture}%
\caption{Plot of expression $E(x)$ in Claim~\ref{cl: ex_valuUF}.}
\label{fig: ex_valUF}
\end{figure}
\end{proof}


Suppose we have a direct mechanism $\cM_0^d = h_0^d$ that truthfully implements a special allocation choice function $f$.
Then the allocation function $h_0^d$ must be equal to the allocation choice function $f$.
Since $f$ satisfies \eqref{eq: ex_f_SF} and \eqref{eq: ex_f_UF},
the only freedom left is in the choice of $f(\MF, \MF)$. 
Let
\[
	h_0^d(\MF, \MF) = f(\MF, \MF) = \{(x',a); (y',b); (z',c) \},
\]
where 
$x', y', z'$ are nonnegative numbers with 
$x' + y' + z' = 1$.
The lottery faced by a player of type $\MF$ signaling truthfully would then be
\begin{align*}
	L_1(\mu_{1}(\MF, \MF; \cM_0^d, F, \sigma^d_{-i})) =  \{(3/32 + &x'/4, \I);  (3/32 + y'/4, \II); \\
	&  (1/8 + z'/2, \III); (3/32 + y'/4, \IV); (3/32 + x'/4, \V)\}.
\end{align*}
We obtain this by using the belief $F_{-1}(\cdot| \MF)$ of player $1$ on the type of player $2$ given by \eqref{eq: ex_prior1}, the truthful strategy $\sigma_2^d$ for player $2$, and the allocation function $h_0^d$ in \eqref{eq: belief_bayes_induced}. 

\begin{claim}
\label{cl: ex_val_MF}
	For any 
	nonnegative 
	$x', y', z'$ such that $x' + y' + z' = 1$, we have
	$$V_1^{\MF}(L_1(\mu_{1}(\MF, \MF; \cM_0^d, F, \sigma^d_{-i}))) < 5.816.$$
\end{claim}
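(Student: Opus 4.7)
My plan is to treat $V := V_1^{\MF}(L_1(\mu_{1}(\MF, \MF; \cM_0^d, F, \sigma^d_{-i})))$ as a function on the 2-simplex $\{(x',y',z'): x',y',z' \geq 0,\, x'+y'+z' = 1\}$ and show that its maximum is strictly below $5.816$. First I would observe that for type $\MF$ every outcome value is nonnegative, so the lottery lies entirely in the gain domain and only $w^+$ enters the CPT value. Since the listed ordering $\I, \II, \III, \IV, \V$ is already the value ordering for $\MF$, setting $p_1 = p_5 = 3/32 + x'/4$, $p_2 = p_4 = 3/32 + y'/4$, $p_3 = 1/8 + z'/2$ and cumulative sums $q_j := p_1 + \dots + p_j$, the formula~\eqref{eq: CPT_value_cumulative} collapses to
\begin{equation*}
V \;=\; 5\, w^+(q_1) \,+\, 2.8\, w^+(q_2) \,+\, 2.016\, w^+(q_3) \,+\, 3.8\, w^+(q_4),
\end{equation*}
where the coefficients $5, 2.8, 2.016, 3.8$ are the successive value differences $v_1^\MF(\gamma_j) - v_1^\MF(\gamma_{j+1})$.

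Next I would exploit the piecewise linearity of $w^+$. Since $w^+$ has three linear pieces (with breakpoints at $7/32$ and $25/32$), the function $V$ is piecewise linear in $(x', y', z')$, hence its maximum on the 2-simplex is attained at a vertex of the refined polytopal subdivision obtained by intersecting the simplex with the hyperplanes where any $q_j$ crosses a breakpoint of $w^+$. Examining the ranges $q_1 \in [3/32, 11/32]$, $q_2 \in [6/32, 14/32]$, $q_3 \in [18/32, 26/32]$, $q_4 \in [21/32, 29/32]$ shows that the relevant constraints reduce to the three lines $x' = 1/8$ (from $q_1 = 7/32$), $x' = 1/2$ (from $q_4 = 25/32$), and $x' + y' = 1/8$ (from $q_2 = 7/32$ and, equivalently by symmetry, $q_3 = 25/32$). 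Intersecting these with the simplex yields a finite list of eight candidate vertices: the three simplex corners together with $(1/8, 7/8, 0), (1/2, 1/2, 0), (1/8, 0, 7/8), (1/2, 0, 1/2), (0, 1/8, 7/8)$.

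Finally I would evaluate $V$ at each vertex using the explicit piecewise formulas for $w^+$, and show that the largest value is attained at $(x', y', z') = (1, 0, 0)$ (symmetrically at $(0,1,0)$), where $q_1 = 11/32, q_2 = 7/16, q_3 = 9/16, q_4 = 21/32$ give $w^+$-values $1/3, 19/48, 23/48, 13/24$, so that $V = 5/3 + (2.8)(19/48) + (2.016)(23/48) + (3.8)(13/24) \approx 5.7994 < 5.816$; all other vertices produce strictly smaller values (for example, $V \approx 5.693$ at $(0,0,1)$ and $V \approx 5.699$ at $(1/2, 1/2, 0)$). Combined with piecewise linearity, this establishes the strict inequality.

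The main obstacle I anticipate is not the reduction to a vertex problem, which is routine, but rather verifying cleanly that no vertex is missed in the case analysis—the nonconvexity of $w^+$ (it is inverse-$S$ shaped, not globally concave) prevents a one-line bound and forces explicit enumeration. A tempting shortcut would be to bound $V$ by the EUT value of the same lottery, but since some outcomes are overweighted ($w^+(p) > p$ for small $p$ on the first piece), this does not obviously suffice; the piecewise-linear vertex enumeration seems unavoidable.
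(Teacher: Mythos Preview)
Your approach is correct but different from the paper's. The paper exploits a separation you did not use: since $q_1,q_4$ depend only on $x'$ while $q_2,q_3$ depend only on $z'$, one has $V = E_1(z') + E_2'(x')$ with $E_1(z') = 2.016\,w^+(18/32+z'/4) + 2.8\,w^+(14/32-z'/4)$ and $E_2'(x') = 5\,w^+(3/32+x'/4) + 3.8\,w^+(29/32-x'/4)$. The paper then maximizes $E_1$ over $z'\in[0,1]$ (maximum at $z'=0$), observes that $z'=0$ also gives the largest feasible range for $x'$, and finally maximizes $E_2'$ over $x'\in[0,1]$ (maximum $5.7993$ at $x'\in\{0,1\}$). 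This avoids any subdivision bookkeeping. Your vertex enumeration is more mechanical but equally valid, and does not depend on spotting the separable structure.

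One arithmetic slip to fix: $q_1 = 7/32$ gives $3/32 + x'/4 = 7/32$, i.e.\ $x' = 1/2$, not $x' = 1/8$. Thus the $q_1$ breakpoint coincides with the $q_4$ breakpoint, and there are really only two breaklines ($x'=1/2$ and $z'=7/8$), not three. This does not damage your argument, since your enumeration already includes $x'=1/2$ from the $q_4$ constraint, and adding a spurious extra line $x'=1/8$ only refines the subdivision; your eight listed vertices still cover every vertex of the (correct) coarser subdivision. But you should correct the attribution and note that the extra vertex $(1/8,7/8,0)$ is not actually a vertex of the piecewise-linear structure.
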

\begin{proof}[Proof of Claim~\ref{cl: ex_val_MF}]
	We have
	\begin{align*}
		V_1^{\MF}L_1(\mu_{1}(\MF, \MF; \cM_0^d, F, \sigma^d_{-i})) = &3.8w^+(29/32 - x'/4) + 2.016 w^+(18/32 + z'/4)\\
		 &+ 2.8w^+(14/32 - z'/4) + 5w^+(3/32 + x'/4).
	\end{align*}
	We observe that the expression,
	\[
		E_1(z') := 2.016 w^+(18/32 + z'/4) + 2.8w^+(14/32 - z'/4),
	\]
	is maximized at $z' = 0$ with value $E_1(0) = 2.0743$.
	See figure~\ref{fig: ex_valMFZ}.

\begin{figure}
\centering
\begin{tikzpicture}[scale = 1]
    \begin{axis}[
    	xmin = 0,
    	xmax = 1,
    	ymin = 1.95,
    	ymax = 2.10,
        xlabel=$z'$,
        ylabel=$E_1(z')$,
        every axis plot/.append style={ultra thick},
        ]
\addplot []
  table[row sep=crcr]{%
	0   2.0743\\
    0.875   1.9600\\
    1   1.9680\\
};


\end{axis}
\end{tikzpicture}%
\caption{Plot of expression $E_1(z')$ in Claim~\ref{cl: ex_val_MF}.}
\label{fig: ex_valMFZ}
\end{figure}
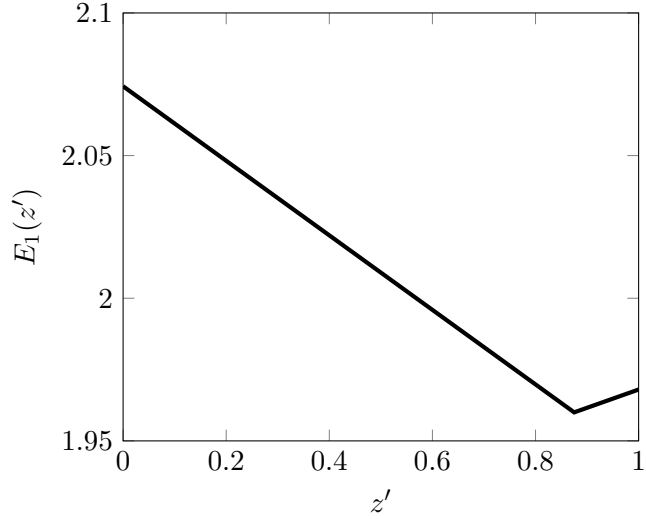
	We can therefore set $z' = 0$, since this choice would also lead to the least constrained problem of maximizing the expression
	\[
		E_2(x') := 3.8w^+(29/32 - x'/4) + 5w^+(3/32 + x'/4) + 2.0743,
	\]
	which we can see is maximized at $x' = 0$ and $x' = 1$.
	At $z'= 0$, and either $x' = 0$ or $x' =1$, 
	we have $V_1^{\MF}L_1(\mu_{1}(\MF, \MF)) = 5.7993$.
	See figure~\ref{fig: ex_valMFX}.

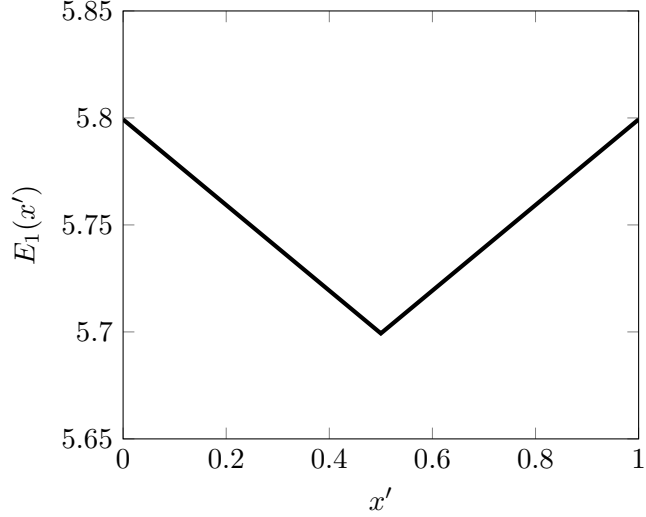
\begin{figure}
\centering
\begin{tikzpicture}[scale = 1]
    \begin{axis}[
    	xmin = 0,
    	xmax = 1,
    	ymin = 5.65,
    	ymax = 5.85,
        xlabel=$x'$,
        ylabel=$E_1(x')$,
        every axis plot/.append style={ultra thick},
        ]
\addplot []
  table[row sep=crcr]{%
	0   5.7993\\
    0.5   5.6993\\
    1   5.7993\\
};


\end{axis}
\end{tikzpicture}%
\caption{Plot of expression $E_2(x')$ in Claim~\ref{cl: ex_val_MF}.}
\label{fig: ex_valMFX}
\end{figure}
	This establishes the claim. 
\end{proof}

Thus, player $1$ will defect from 
the 
truthful strategy and report $\SF$ when her true type is $\MF$,
because if she does so the allocation $c$ will be implemented by the system operator, which results in her outcome being $\III$, hence giving her a value of $5.816$. 
Hence, truthful strategies do not form a Bayes-Nash equilibrium under the allocation function $h_0^d$. 
And hence, any allocation choice function $f$ that satisfies \eqref{eq: ex_f_SF} and \eqref{eq: ex_f_UF} is not truthfully implementable by a direct mechanism.

We will now show that the allocation choice function $f^*$ is implementable in Bayes-Nash equilibrium.
{
\allowdisplaybreaks
Consider the mechanism $\cM_0 = ((\Sig_i)_{i}, h_0)$ with the signal sets for 
the players being 
$\Sig_1 = \Sig_2 = \{\MF^a, \MF^b, \UF, \SF \}$,
and the allocation function $h_0$ given by
\begin{align*}
	&h_0(\SF, \sig_2) = h_0(\sig_1, \SF) = c, \quad \forall \sig_1 \in \Sig_1, \sig_2 \in \Sig_2,\\
	&h_0(\UF, \UF) =  \{(1/2, a); (1/2, b)\},\\
	&h_0(\UF, \MF^a) = a,\\
	&h_0(\UF, \MF^b) = b,\\
	&h_0(\MF^a, \UF) = a,\\
	&h_0(\MF^b, \UF) = b,\\
	&h_0(\MF^a, \MF^a) = a,\\
	&h_0(\MF^b, \MF^b) = b,\\
	&h_0(\MF^a, \MF^b) = h_0(\MF^b, \MF^a)  = \{(1/2, a); (1/2, b)\}.
\end{align*}
}


Now consider the strategies $\sigma_1^*$ and $\sigma_2^*$ given by
\begin{align}   \label{eq:example-strategy}
	&\sigma_i^*(\SF) = \SF, \nonumber\\
	&\sigma_i^*(\UF) = \UF, \nonumber \\
	&\sigma_i^*(\MF) = \{(1/2, \MF^a); (1/2, \MF^b)\},
\end{align}
for $i = 1, 2$.

One can check that the allocation function $h_0$ and the strategy profile $\sigma^*$ induce the 
allocation choice function 
$f^*$ defined above. We will now verify that $\sigma^*$ is a Bayes-Nash equilibrium and 
thus 
conclude that $f^*$ is implementable in Bayes-Nash equilibrium.

If player $i$ has type $\SF$ then clearly $\SF$ is a best response signal for her. 
To see this, observe that amongst all the lotteries $L_i \in \Delta(\Gamma_i)$, $V_i^{\SF}(L_i)$ is maximized when $L_i = \III$ (this follows from the first order stochastic dominance property of CPT preferences).
Since signaling $\SF$ produces the lottery $\III$ for player $i$, we get that it is her best response.
If player $i$ has type $\UF$, then 
signaling 
$\UF$ dominates 
signaling 
$\SF$.
To see this, note that amongst all the lotteries $L_i \in \Delta(\Gamma_i)$, $V_i^{\UF}(L_i)$ is minimized when $L_i = \III$ (this follows from the first order stochastic dominance property of CPT preferences).
Since signaling $\SF$ produces the lottery $\III$ for player $i$, we get that it is dominated by all other strategies, in particular, signaling $\UF$.
As for comparing with signaling $\MF^a$ or 
$\MF^b$, if she signals $\UF$ 
then she will face the 
lottery 
\[
	L_i(\mu_i(\UF, \UF; \cM_0, F, \sigma^*_{-i})) = \{(7/32, \I); (7/32, \II); (1/8, \III); (7/32, \IV); (7/32, \V)\}.
\]
If she signals $\MF^a$, then she will face the lottery
\[
	L_i(\mu_i(\UF, \MF^a;\cM_0, F, \sigma^*_{-i})) = \{(3/8, \I); (1/16, \II); (1/8, \III); (1/16, \IV); (3/8, \V)\}.
\]
If she signals $\MF^b$, then she will face the lottery
\[
	L_i(\mu_i(\UF, \MF^b;\cM_0, F, \sigma^*_{-i})) = \{(1/16, \I); (3/8, \II); (1/8, \III); (3/8, \IV); (1/16, \V)\}.
\]
The CPT values in each of these scenarios are as follows:
\begin{align*}
	&V_i^{\UF}(L_i(\mu_i(\UF, \UF;\cM_0, F, \sigma^*_{-i}))) \\
	&= -50w^-(25/32) - 50w^-(18/32) - 90w^-(11/32) - 810w^-(4/32) \\
	&= -227.0312,
\end{align*}
\begin{align*}
	&V_i^{\UF}(L_i(\mu_i(\UF, \MF^a;\cM_0, F, \sigma^*_{-i}))) \\
	&= -50w^-(20/32) - 50w^-(18/32) - 90w^-(16/32) - 810w^-(4/32) \\
	&= -227.8125,
\end{align*}
and,
\begin{align*}
	&V_i^{\UF}(L_i(\mu_i(\UF, \MF^b;\cM_0, F, \sigma^*_{-i})))\\
	&= -50w^-(30/32) - 50w^-(18/32) - 90w^-(6/32) - 810w^-(4/32) \\
	& = -235.6250.
\end{align*}
Thus, signaling $\UF$ is the best response of a player with type $\UF$.

Finally, let player $i$ have type $\MF$. Depending on what she signals, we have the following lotteries:
\begin{align*}
	&L_i(\mu_i(\MF, \MF^a; \cM_0, F, \sigma^*_{-i})) = \{(3/8, \I); (1/16, \II); (1/8, \III); (1/16, \IV); (3/8, \V)\},\\
	&L_i(\mu_i(\MF, \MF^b; \cM_0, F, \sigma^*_{-i})) = \{(1/16, \I); (3/8, \II); (1/8, \III); (3/8, \IV); (1/16, \V)\},\\
	&L_i(\mu_i(\MF, \UF; \cM_0, F, \sigma^*_{-i})) = \{(7/32, \I); (7/32, \II); (1/8, \III); (7/32, \IV); (7/32, \V)\},\\
	&L_i(\mu_i(\MF, \SF; \cM_0, F, \sigma^*_{-i})) = \III.
\end{align*}
The corresponding CPT values are as follows:
\begin{align*}
	&V_i^{\MF}(L_i(\mu_i(\MF, \MF^a; \cM_0, F, \sigma^*_{-i})))\\
	&= 3.8w^+(20/32) + 2.016 w^+(18/32) + 2.8 w^+(14/32) + 5 w^+(12/32)\\
	&= 5.8243,
\end{align*}
\begin{align*}
	&V_i^{\MF}(L_i(\mu_i(\MF, \MF^b; \cM_0, F, \sigma^*_{-i}))) \\
	& = 3.8w^+(30/32) + 2.016 w^+(18/32) + 2.8 w^+(14/32) + 5 w^+(2/32)\\
	&= 5.8243,
\end{align*}
\begin{align*}
	&V_i^{\MF}(L_i(\mu_i(\MF, \UF; \cM_0, F, \sigma^*_{-i}))) \\
	&= 3.8w^+(25/32) + 2.016 w^+(18/32) + 2.8 w^+(14/32) + 5 w^+(7/32)\\
	&= 5.6993,
\end{align*}
and,
\begin{align*}
	&V_i^{\MF}(L_i(\mu_i(\MF, \SF; \cM_0, F, \sigma^*_{-i}))) = 5.816.
\end{align*}
Thus $\sigma_i^*(\MF)$ has support on optimal signals, and hence is a best response. This completes the verification that $\sigma^*$ is a Bayes-Nash equilibrium.
With this, we end our example.
\qed
\end{example}

\ifvenkat
\red What do you mean by ``if honesty is an equilibrium"?
Do you mean Bayes-Nash equilibrium? It is not clear how to makes sense of this phrase for dominant and belief-dominant equilibria.
\black
\fi
\ifvenkat
\red Again, in the preceding sentence you are not being clear about what kind of equilibrium you are talking about.
\blue Removed this paragraph.
\black
\fi


In the previous example, 
let us focus on the behavior of player $i$ when she has type $\MF$.
For any mechanism 
with the signal sets for 
the players being 
$\Sig_1 = \Sig_2 = \{\MF^a, \MF^b, \UF, \SF \}$ as above 
(the mechanism 
$\cM_0 = ((\Sig_i)_{i}, h_0)$ 
considered above is an instance of such a mechanism), 
the
signals $\MF^a$ and $\MF^b$ allow this player to play so that the lotteries faced by her are 
$L_i' := L_i(\mu_i(\MF, \MF^a); \cM, F, \sigma_{-i})$ and 
$L_i'' := L_i(\mu_i(\MF, \MF^b); \cM, F, \sigma_{-i})$ respectively,
where $F$ denotes the prior distribution on types 
(i.e. the product distribution with marginals given as in 
\eqref{eq: ex_prior1} above) and $\sigma_{-i}$ denotes the strategy of the other player.
The lotteries $L_i'$ and $L_i''$ are equally preferred by 
\black
player $i$ when she has type $\MF$,
and they are preferred over the lotteries corresponding to signaling $\UF$ or $\SF$,
when the mechanism is $\cM_0 = ((\Sig_i)_{i}, h_0)$ as considered in example~\ref{ex: no_rev_impl}, and the other player plays
according to the strategy prescribed in \eqref{eq:example-strategy}.
Under the equilibrium strategy 
$\sigma_i^*$, as defined in \eqref{eq:example-strategy}, 
when player $i$ has type $\MF$ she signals $\MF^a$ or $\MF^b$ each with probability half.

We can think of player $1$ as tossing a fair coin to decide whether to signal $\MF^a$ or $\MF^b$ when
her type is $\MF$, and similarly for player $2$.
The outcome of the coin toss is private knowledge to the player tossing the coin.
The equilibrium strategies in \eqref{eq:example-strategy} correspond to each player signaling 
$\UF$ or $\SF$ truthfully if that is her type, while if
her type is $\MF$ then 
she signals $\MF^a$ or $\MF^b$ depending on the outcome of her coin toss.
From our analysis in the above example, we have that 
these strategies form an
$F$-Bayes-Nash equilibrium 
for this game and 
induce 
the allocation choice function $f^*$.

An alternate viewpoint is to think of 
the coins being tossed
at the beginning as before,
but now let us assume that the system operator observes the outcomes of both the coins.
We continue to assume that 
each player does not observe the result of the coin toss of the other player.
Suppose each player 
only has the option 
to signal from $\{\MF, \UF, \SF\}$.
The system operator collects these signals and implements a lottery on the allocation set according to the following rule: 
If player $i$ signals $\UF$ or $\SF$ then let $\sig_i' = \UF$ or $\sig_i' = \SF$ respectively. 
If player $i$ signals $\MF$ then, depending on the outcome of coin toss $i$, let $\sig_i' = \MF^a$ or $\MF^b$.
The system operator implements $h_0(\sig_1', \sig_2')$.
Now consider the strategy where each player reports her type truthfully.
We observe that this strategy is 
an $F$-Bayes-Nash equilibrium 
for this game and induces $f^*$.


Thus the issue with the revelation principle is superficial in the sense that the reason that it does not hold is not that player $i$ does not wish to reveal her type, but 
rather that she 
would like to have 
an
asymmetry in the information of the players.
In the above example, this asymmetry 
comes from the coin tosses and, as seen in the latter viewpoint, these coin tosses can be thought of as 
shared between each individual player and the system operator, so one could even think of the coins as being
tossed by system operator, with the result of each individual coin toss being shared with the respective player.
To capture this intuition, we propose a framework where there is a mediator who sends messages to each individual player before collecting their signals.
As we will see now, 
this way we can recover a form of the revelation principle.


\subsection{Mediated mechanisms and the revelation principle}
\label{subsec: med_mech_frame}

We now lay out the framework for a mechanism with messages from the mediator, along the lines of the 
augmented
framework 
for mechanism design 
motivated by the example above.
Let $\Msg_i$ be a finite \emph{message set} for each player $i$, with a typical element denoted by $\msg_i$, and
let $\Msg := \prod_i \Msg_i$.
Let $\mdt \in \Delta(\Msg)$ denote a \emph{mediator distribution} from which the mediator draws a profile of messages $\msg := (\msg_1, \dots, \msg_n)$.
\ifvenkat
\red $E$ is not good notation for a probability distribution. In probability one automatically thinks about expectation when one sees $E$, so this becomes very confusing. \blue Changed it to D. Please let me know if I missed to replace anywhere.
\black
\fi
Let $\mdt_i \in \Delta(\Msg_i)$ denote the marginal of $\mdt$ on $\Msg_i$.
For any $\msg_i \in \supp \mdt_i$,
 let the conditional distribution be given by
\begin{equation}
\label{eq: E_cond}
	\mdt_{-i}(\msg_{-i}|\msg_i) := \frac{\mdt(\msg_i, \msg_{-i})}{\mdt_i(\msg_i)}, 
	\text{ for all } \msg_{-i} \in \Msg_{-i},
\end{equation}
where $\msg_{-i} := (\msg_j)_{j \neq i}$ and $\Msg_{-i} := \prod_{j \neq i} \Msg_j$.
Let $\Sig_i$ be a finite set of signals as before.
Let 
\begin{equation}
	\label{eq: def_med_allocation_func}
	h: \Msg \times \Sig \to \Delta(A)
\end{equation}
be a \emph{mediated allocation function}.
The message sets $\Msg_i, i \in [n]$, a mediator distribution $\mdt \in \Delta(\Msg)$, and a mediated allocation function $h$ together constitute a \emph{mediated mechanism}, denoted by 
\begin{equation}
\label{eq: def_med_mech}
	\cal{M} := ((\Msg_i)_{i \in [n]}, \mdt, (\Sig_i)_{i \in [n]}, h).
\end{equation}

The mediator first draws a profile of messages $\msg$ from the distribution $\mdt$.
Each player $i$ observes her message $\msg_i$, and then sends a signal $\sig_i$ to the mediator.
The mediator collects the signals from all the players and 
then chooses an allocation 
according to 
the probability distribution 
$h(\msg, \sig)$.
A strategy for any player $i$ is thus given by 
\begin{equation}\label{eq: def_tau}
	\tau_i: \Msg_i \times \Type_i \to \Delta(\Sig_i).
\end{equation}
\ifvenkat
\red Do you mean $\Phi_i$ and $\Theta_i$? \blue Yes, changed. Thank you.
\black
\fi
Let $\tau_i(\sig_i|\msg_i, \type_i)$ denote the probability of signal $\sig_i$ under the distribution $\tau_i(\msg_i, \type_i)$.
Let $\tau := (\tau_1, \dots, \tau_n)$ denote the profile of strategies.
Suppose player $i$ has received message $\msg_i$ and has type $\type_i$ (thus, $\msg_i \in \supp \mdt_i$, and $\type_i \in \supp F_i$), and she chooses to signal $\sig_i$
(so $\sig_i \in \supp \tau_i(\msg_i,\type_i)$); 
then consider the probability distribution 
$\mu_i(\msg_i, \type_i, \sig_i ; \cM, F, \tau_{-i}) \in \Delta(A)$ 
given by
\begin{align}
\label{eq: mu_i_med}
	\mu_i(\alpha | \msg_i, \type_i, \sig_i ; \cM, F, \tau_{-i}) 
	:= \sum_{\msg_{-i}} \mdt_{-i}(\msg_{-i} | \msg_i)& \sum_{\type_{-i}} F_{-i}(\type_{-i}|\type_i) \nn \\
	& \times \sum_{\sig_{-i}} \prod_{j \neq i} \tau_j(\sig_j | \msg_j, \type_j) h(\alpha | \msg, \sig).
\end{align}

The \emph{best response strategy set} $BR_{i}(\tau_{-i})$ of player $i$ to a strategy profile $\tau_{-i}$ of 
her 
opponents consists of all strategies $\tau_i^*: \Msg_i \times \Type_i \to \Delta(\Sig_i)$ such that
\begin{equation}
W_i^{\type_i}(\mu_i(\msg_i, \type_i, \sig_i ; \cM, F, \tau_{-i})) \geq W_i^{\type_i}(\mu_i(\msg_i, \type_i, \sig_i' ; \cM, F, \tau_{-i})),
\end{equation}
for all $\msg_i \in \supp \mdt_i, \type_i \in \supp F_i, \sig_i \in \supp \tau_i^*(\msg_i, \type_i), \sig_i' \in \Sig_i$.

	A strategy profile $\tau^*$ is said to be an \emph{$F$-Bayes-Nash equilibrium} for the environment $\cal{E}$ with respect to the mediated mechanism $\cM$ if for each player $i$ we have
	\begin{equation}
	\label{eq: def_med_Bayes_Nash_eq}
		\tau_i^* \in BR_i(\tau^*_{-i}).
	\end{equation}

We will say that an allocation choice function $f: \Type \to \Delta(A)$ is implementable in $F$-Bayes-Nash equilibrium by a mediated mechanism if there exists a mediated mechanism $\cal{M}$ and an $F$-Bayes-Nash equilibrium $\tau$ with respect to this mediated mechanism such that $f$ is 
the induced allocation choice function, 
i.e. 
for all $\type \in \supp F, \alpha \in A$, 
we have
\begin{equation}
\label{eq: f_induced}
	f(\alpha | \type) = \sum_{\msg} \mdt(\msg) \sum_{\sig} \l(\prod_{i} \tau_i(\sig_i|\msg_i, \type_i)\r) h(\alpha | \msg, \sig).
\end{equation}

A mediated mechanism $\cal{M} = ((\Msg_i)_{i \in [n]}, \mdt, (\Sig_i)_{i \in [n]}, h)$ is called a \emph{direct mediated mechanism} if $\Sig_i = \Type_i$ for all $i$,
and we write it as 
$\cal{M}^d = ((\Msg_i)_{i \in [n]}, \mdt, (\Type_i)_{i \in [n]}, h^d)$,
where 
$$h^d: \Msg \times \Type \to \Delta(A)$$ 
is the corresponding
\emph{direct mediated allocation function}.

For a direct mediated mechanism, the truthful strategy $\tau_i^d$ for player $i$ should satisfy
$\tau_i^d(\msg_i, \type_i) = \type_i,$
for all $\msg_i \in \Msg_i,$ 
and $\type_i \in \Type_i$. 
Thus, if player $i$ receives a message $\msg_i$ and has type $\type_i$, she
reports her true type 
$\type_i$ 
irrespective of her received message. 
In a way, the messages are present only to align the beliefs of the players appropriately so that truth-telling is an equilibrium strategy (depending on the type of equilibrium 
under consideration, i.e.
Bayes-Nash, dominant, or belief-dominant
equilibrium).
Note that in the definition of the truthful strategy $\tau_i^d$ for player $i$ 
we require
$\tau_i^d(\msg_i, \type_i) = \type_i,$
for all $\type_i \in \Type_i$ and $\msg_i \in \Msg_i,$ 
and not just for $\type_i \in \supp F_i$ (when discussing an $F$-Bayes-Nash equilibrium)
and $\msg_i \in \supp D_i$. This is done to make the notion of a truthful strategy uniquely 
defined.

An allocation choice function 
$f$
is said to be truthfully implementable in mediated 
$F$-Bayes-Nash equilibrium 
if there exists a direct mediated mechanism $\cM^d$ such that the truthful strategy profile $\tau^d$ is a mediated 
$F$-Bayes-Nash equilibrium 
and it implements $f$.

Let
\begin{equation}
\label{eq: induce_lot_med_dom_eq}
	\mu_i(\msg_i, \type_i, \sig_i; \cM, \sig_{-i}) := \sum_{\msg_{-i}} \mdt_{-i}(\msg_{-i} | \msg_i) h(\msg, \sig),
\end{equation}
denote the lottery faced by player $i$ with type $\type_i$, who has received message $\msg_i$ (thus, $\msg_i \in \supp \mdt_i$) and believes that her opponents are going to report $\sig_{-i}$.
Similarly, let
\begin{equation}
\label{eq: induce_lot_med_belief_dom_eq}
	\mu_i(\msg_i, \type_i, \sig_i; \cM, G_{-i}) := \sum_{\msg_{-i}} \mdt_{-i}(\msg_{-i} | \msg_i) \sum_{\sig_{-i}} G_{-i}(\sig_{-i}) h(\msg, \sig),
\end{equation}
denote the lottery faced by player $i$ with type $\type_i$, who has received message 
$\msg_i \in \supp D_i$
and has belief $G_{-i} \in \Delta(\Sig_{-i})$ over her opponents' signal profiles.
We define strategy 
$\tau^*_i$ 
to be dominant if, for all $\msg_i \in \supp \mdt_i$, 
$\type_i \in \Type_i$, 
$\sig_i \in \supp \tau_i^*(\msg_i, \type_i)$, 
$\sig_i' \in \Sig_i$, 
and $\sig_{-i} \in \Sig_{-i}$, we have
\begin{equation}
\label{eq: med_dominant_strat}
	W_i^{\type_i}(\mu_i(\msg_i, \type_i, \sig_i; \cM, \sig_{-i})) \geq W_i^{\type_i}(\mu_i(\msg_i, \type_i, \sig_i'; \cM, \sig_{-i})).
\end{equation}
Similarly,
we define strategy 
$\tau^*_i$ 
to be belief-dominant if, for all $\msg_i \in \supp \mdt_i$, 
$\type_i \in \Type_i$, 
$\sig_i \in \supp \tau_i^*(\msg_i, \type_i)$, 
$\sig_i' \in \Sig_i$, 
and $G_{-i} \in \Delta(\Sig_{-i})$, we have
\begin{equation}
\label{eq: med_belief_dominant_strat}
	W_i^{\type_i}(\mu_i(\msg_i, \type_i, \sig_i; \cM, G_{-i})) \geq W_i^{\type_i}(\mu_i(\msg_i, \type_i, \sig_i'; \cM, G_{-i})).
\end{equation}

An allocation choice function $f$ is said to be \emph{implementable in dominant equilibrium}
by a mediated mechanism if there is a mediated mechanism $\cM$ and a dominant equilibrium $\tau$
(i.e. a strategy profile comprised of dominant strategies for the individual players) such that
$f$ is the allocation choice function induced by $\tau$ under $\cM$, i.e. \eqref{eq: f_induced} holds for
all $\type \in \Type$ and $\alpha \in A$.
$f$ is said to be \emph{truthfully implementable in dominant equilibrium} by a direct mediated
mechanism if there is a directed mediated mechanism $\cM^d$ such that the truthful strategy profile
is a dominant equilibrium and
induces $f$ under $\cM^d$. The notions of implementability by a mediated mechanism
and truthful implementability by a direct mediated mechanism of
an allocation choice function in belief-dominant equilibrium can be similarly defined.

If the message set $\Msg_i$  is a singleton for each player $i$, then we get back the original mechanism design framework.
Thus, the mediated mechanism design framework defined above is a generalization of the mechanism design framework.
This generalization allows us to establish a form of the revelation principle even when players have CPT preferences.

A special case of the mediated mechanism design framework is when the mediator message profile $\msg$ is publicly known.
That is, each player knows the entire message profile instead of privately knowing only her own message.
This would happen if $\Msg_i = \Msg_*$, for all $i \in [n]$, and 
$\mdt$ is a diagonal distribution, i.e. 
$\mdt(\msg) = 0$ for all message profiles $\msg = (\msg_i)_{i \in [n]}$ such that $\msg_i \neq \msg_j$ for some pair $i, j \in [n]$.
Let $\Msg_*$ denote the common message set and let $\mdt_* \in \Delta(\Msg_*)$ denote the mediator distribution on this set.
Let 
$$\cM_* := (\Msg_*, \mdt_*, (\Sig_i)_{i \in [n]}, h_*)$$
denote such a mediated mechanism with common messages,
where now
\begin{equation*}
	h_*: \Msg_* \times \Sig \to \Delta(A).
\end{equation*}
We will call 
$\cM_*$
a \emph{publicly mediated mechanism}.
The notions
of an allocation choice function being 
implementable 
in publicly mediated Bayes-Nash equilibrium, publicly mediated dominant 
equilibrium,
or 
publicly mediated belief-dominant 
equilibrium
can be defined similarly
to the corresponding earlier definitions that were made for general message sets. 
The notions
of an allocation choice function being 
truthfully implementable 
in direct publicly mediated Bayes-Nash equilibrium, direct publicly mediated dominant equilibrium, 
or 
direct publicly mediated belief-dominant equilibrium can also be defined similarly
to the corresponding earlier definitions that were made for general message sets. 

We are now in a position to state one of our main results.

\begin{theorem}[Revelation Principle]
\label{thm: rev_prin_1}
We have the following three versions of the revelation principle:
\begin{enumerate}[(i)]
	\item If an allocation choice function is implementable in Bayes-Nash equilibrium by a mediated mechanism, then it is also truthfully implementable in Bayes-Nash equilibrium by a 
	direct mediated 
	mechanism.
	\item If an allocation choice function is implementable in dominant equilibrium by a publicly mediated mechanism, then it is also truthfully implementable in dominant equilibrium by a 
	direct publicly mediated
	mechanism.
	\item If an allocation choice function is implementable in belief-dominant equilibrium by a mediated (resp. publicly mediated) mechanism, then it is also truthfully implementable in belief-dominant equilibrium by a 
	direct
	mediated (resp. 
	direct
	publicly mediated) 
	mechanism.
\end{enumerate}
\end{theorem}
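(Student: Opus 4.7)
The plan is to prove all three parts of Theorem~\ref{thm: rev_prin_1} through a single construction that enlarges the mediator's message space to include private ``signal recommendations'', and then to verify the appropriate equilibrium property for truthful reporting separately in each case. Given a mediated mechanism $\cM = ((\Msg_i)_i, \mdt, (\Sig_i)_i, h)$ together with an equilibrium strategy profile $\tau$ that implements $f$, I define $\Msg_i' := \Msg_i \times \Sig_i^{\Type_i}$, so an enlarged message $\msg_i' = (\msg_i, r_i)$ packages an original mediator message with a function $r_i : \Type_i \to \Sig_i$ interpreted as ``the signal player $i$ is recommended to send if her type turns out to be $\type_i$''. Let $\mdt' \in \Delta(\Msg')$ first sample $\msg \sim \mdt$ and then, independently across $i$ and across $\type_i \in \Type_i$, sample $r_i(\type_i) \sim \tau_i(\msg_i, \type_i)$; set $\Sig_i' := \Type_i$ and
\[
h^d\bigl(\alpha \mid (\msg, r), \type\bigr) \,:=\, h\bigl(\alpha \mid \msg, (r_i(\type_i))_{i \in [n]}\bigr).
\]
Two features follow directly from the construction: truthful reporting recreates the mixture $\prod_i \tau_i(\cdot \mid \msg_i, \type_i)$ over signals and hence induces the same allocation choice function $f$ as $\tau$ did; and every $\msg_i' \in \supp \mdt_i'$ forces $r_i(\type_i) \in \supp \tau_i(\msg_i, \type_i)$ for every $\type_i$.

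The central tool for transferring the equilibrium property is an \emph{equivalence of lotteries}: after marginalising out the independent randomness in $\mdt'$, the lottery over $A$ that player $i$ faces in $\cM^d$ when she reports $\tilde\type_i$ and opponents report truthfully equals exactly the lottery she would have faced in $\cM$ by sending signal $r_i(\tilde\type_i)$ while opponents played $\tau$. For part (i) the $F$-Bayes-Nash property of $\tau$, applied to $\sig_i = r_i(\type_i) \in \supp \tau_i(\msg_i, \type_i)$ versus $\sig_i' = r_i(\tilde\type_i)$, then delivers the $F$-Bayes-Nash property of truth-telling in $\cM^d$.

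For part (ii), the publicly mediated setting removes the averaging over $\msg_{-i}$ entirely: fixing a profile of opponents' reports $\type_{-i}$ in $\cM^d_*$ produces the deterministic signal profile $(r_j(\type_j))_{j\neq i}$ in $\cM_*$, against which the dominance of $\tau$ transports pointwise. The publicly mediated half of part (iii) is analogous: a belief $G'_{-i} \in \Delta(\Type_{-i})$ pushes forward through the deterministic map $\tilde\type_{-i} \mapsto (r_j(\tilde\type_j))_{j\neq i}$ to a belief $G_{-i} \in \Delta(\Sig_{-i})$ against which the belief-dominance of $\tau$ in $\cM_*$ directly yields the required inequality.

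The main obstacle is the privately mediated half of part (iii). A belief $G'_{-i}$ on opponents' reports in $\cM^d$ induces a joint belief on $(\msg_{-i}, \sig_{-i})$ whose conditional of $\sig_{-i}$ given $\msg_{-i}$ depends genuinely on $\msg_{-i}$ through the coupling introduced by $\tau_j(\cdot \mid \msg_j, \tilde\type_j)$, while the definition of belief-dominance in $\cM$ only covers beliefs in product form with $G_{-i}$ independent of $\msg_{-i}$. Under EUT the linearity of $W_i^{\type_i}$ in the induced outcome lottery makes this mismatch harmless, but under CPT it does not. The natural route is to decompose the $\cM^d$-lottery as a mixture over $\tilde\type_{-i} \sim G'_{-i}$ of ``opponent-type-known'' lotteries in $\cM$, each of which arises as an $F$-Bayes-Nash lottery for a degenerate point-mass prior, invoke the comparison slice-by-slice via the Bayes-Nash consequence of belief-dominance, and then argue that the particular structure of $\mdt'$ forces the slice inequalities to reassemble into the desired overall inequality; verifying that this reassembly respects the CPT nonlinearity of $W_i^{\type_i}$ is the step requiring the most care and is where the fine structure of the construction has to be exploited.
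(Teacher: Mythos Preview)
Your construction of $\cM^d$ via $\Msg_i' = \Msg_i \times \Sig_i^{\Type_i}$, the mediator distribution $\mdt'$, and the direct allocation function $h^d$ is exactly the paper's construction, and your arguments for part~(i), part~(ii), and the publicly mediated half of part~(iii) match the paper's proof essentially line by line: the paper records the key lottery-equivalence identity explicitly and, in the publicly mediated belief-dominant case, uses the same pushforward of $G'_{-i}$ through the deterministic map $\type_{-i} \mapsto (r_j(\type_j))_{j\neq i}$ that you describe.

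Where you diverge from the paper is the privately mediated half of part~(iii). You are right that a belief $G'_{-i} \in \Delta(\Type_{-i})$ in $\cM^d$ induces, after averaging over $\msg'_{-i}$, a distribution over $\sig_{-i}$ that varies with $\msg_{-i}$, whereas belief-dominance in $\cM$ is phrased only against a single $G_{-i} \in \Delta(\Sig_{-i})$ independent of $\msg_{-i}$. But your proposed fix --- decompose over $\type_{-i} \sim G'_{-i}$, compare slice by slice, then ``reassemble'' --- cannot succeed under CPT: each slice inequality is between $W_i^{\type_i}$-values of two allocation lotteries, and since $W_i^{\type_i}$ is not affine in the lottery there is no way to pass from $W(\lambda_s) \ge W(\lambda'_s)$ for all $s$ to $W\bigl(\sum_s p_s \lambda_s\bigr) \ge W\bigl(\sum_s p_s \lambda'_s\bigr)$; no ``fine structure of the construction'' rescues this. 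The paper does not decompose at all. It instead defines the single belief
\[
G_{-i}(\sig_{-i}) \;:=\; \sum_{\msg_{-i}} \mdt_{-i}(\msg_{-i}\mid \msg_i)\, \sum_{\type_{-i}} G'_{-i}(\type_{-i})\, \prod_{j\neq i}\tau_j(\sig_j\mid \msg_j,\type_j),
\]
asserts the identity $\mu_i'(\msg_i',\type_i,\tilde\type_i;\cM^d,G'_{-i}) = \mu_i(\msg_i,\type_i,r_i(\tilde\type_i);\cM,G_{-i})$ directly, and then invokes belief-dominance of $\tau_i$ against this one $G_{-i}$ in a single step. Note that your own diagnosis --- that the conditional law of $\sig_{-i}$ given $\msg_{-i}$ genuinely depends on $\msg_{-i}$ --- sits in tension with this asserted equality of allocation lotteries, so if you follow the paper's line you should verify that identity carefully rather than take it for granted; in any case, the slice-and-reassemble workaround you sketch is a dead end and is not how the paper proceeds.
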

\ifvenkat
\red Item (ii) of the statement of the theorem seems to contradict
the Bayes-Nash and publicly mediated box of the table. \blue fixed
\black
\fi

We prove this theorem in Appendix~\ref{sec: rev_proof}.
Theorem~\ref{thm: rev_prin_1}, in particular, implies that if an allocation choice function is implementable in Bayes-Nash equilibrium by a non-mediated mechanism then it is truthfully implementable in Bayes-Nash equilibrium by a direct mediated mechanism.
Similarly, if an allocation choice function is implementable in dominant strategies (resp. belief-dominant strategies) by a non-mediated mechanism, then it is truthfully implementable in dominant strategies (resp. belief-dominant strategies) by a 
direct publicly mediated
mechanism.
\ifnonarxiv
Owing to this observation, we will pay special attention to Bayes-Nash equilibrium in direct mediated mechanisms whereas dominant and belief-dominant strategies in 
\red
direct publicly mediated
\black
mechanisms.
\red (Q: The preceding sentence does not make sense to me. In what way does the preceding observation motivate this? The preceding observation was about 
non-mediated mechanisms and does not seems to provide any motivation for focusing on public mediated mechanisms in the dominated and belief-dominated cases.) \black
\fi
Table~\ref{tab: rev_prin} summarizes the different implementability settings under which the revelation principle does and does not hold.
Example~\ref{ex: no_rev_impl} shows that the revelation principle does not hold for the setting with Bayes-Nash equilibrium and non-mediated mechanism.
In example~\ref{ex: no_rev_impl_triv}, we show that the revelation principle does not hold for the settings with dominant equilibrium or belief-dominant equilibrium and non-mediated mechanism.
In example~\ref{ex: Bayes_Nash_no_rev_public}, we show that the revelation principle does not hold for the settings with Bayes-Nash equilibrium and publicly mediated mechanism.
The question of whether the revelation principle holds in the setting with dominant equilibrium and mediated mechanism
remains 
open for future investigation.

\begin{table}
\centering
\begin{tabular}{| c | c | c | c |}
\hline
 & Non-mediated & Publicly Mediated & Mediated \\
\hline
Bayes-Nash Equilibrium & \xmark & \xmark & \cmark\\ 
Dominant Strategies & \xmark & \cmark & \textbf{?} \\
Belief-dominant Strategies & \xmark & \cmark & \cmark \\
\hline
\end{tabular}
\caption{Settings in which the revelation principle holds.}
\label{tab: rev_prin}
\end{table}

\begin{example}
\label{ex: no_rev_impl_triv}
Consider the setting from example~\ref{ex: not_belief_dominant_VCG} with two players.
Recall that
$\Type_1 = \Type_2 = \{\Up, \D\}$, $A = \{a, b, c\}$, $\Gamma_1 = \{\I, \II, \III, \IV, \V\}$, $\Gamma_2 = A$.
The allocation-outcome mapping is given by the product distribution of the marginals $\zeta_1$ and $\zeta_2$, given by
$\zeta_1(a) = \{(1/2, \I); (1/2, \V) \}, \zeta_1(b) = \{(1/2, \II); (1/2, \IV) \}, \zeta_1(c) = \{(1, (\III)) \},$
and 
	$\zeta_2(\alpha) = \alpha, \forall \alpha \in A$.
The probability weighting functions for gains for the two players are
\[
	w_1^+(p) = \exp \{-(-\ln p)^{0.5}\}, w_2^+(p) = p,
\]
for $p \in [0,1]$
(see figure~\ref{fig: wt_pwf_prelec}).
Let the value functions $v_1$ and $v_2$ be given by
\begin{center}
\begin{tabular}{| c | c | c | c | c | c |}
\hline
$v_1$ & I & II & III & IV & V\\
\hline
$\Up$ & $2x$ & $x+1$ & $1.99$ & $1$ & $0$\\ 
\hline
$\D$ & $0$ & $0$ & $1$ & $0$ & $0$\\ 
\hline
\end{tabular}
\hspace{1in}
\begin{tabular}{| c | c | c | c | c | c |}
\hline
$v_2$ & a & b & c\\
\hline
$\Up$ & $1$ & $0$ & $2$\\ 
\hline
$\D$ & $0$ & $1$ & $2$\\ 
\hline
\end{tabular}
\end{center}
where $x := 1/w_1^+(0.5) = 2.2992$.

Consider a mechanism 
$\cM_0 = \{ (\Sig_1, \Sig_2), h_0\}$, 
where
$\Sig_1 = \{a,b,c\}$, $\Sig_2 = \{\Up, \D\}$, and
\begin{align*}
	h_0(a, \sig_2) &= a, \\
	h_0(b, \sig_2) &= b, \\
	h_0(c, \sig_2) &= c,
\end{align*}
for all $\sig_2 \in \Sig_2$.
The CPT values 
for player $1$ having type $\Up$ for the lotteries over her outcomes corresponding to the different allocations 
are
given by
\begin{align*}
	V_1^\Up(L_1(a)) &= 2xw_1^+(0.5) = 2,\\
	V_1^\Up(L_1(b)) &= w_1^+(1) + x w_1^+(0.5) = 2,\\
	V_1^\Up(L_1(c)) &= 1.99.
\end{align*}
Further, the CPT values
for player $1$ having type $\D$ for the above lotteries 
are
given by
\begin{align*}
	V_1^\D(L_1(a)) &= 0, &V_1^\D(L_1(b)) &= 0, &V_1^\D(L_1(c)) &= 1.
\end{align*}
Since the allocation choice function $h_0$ does not depend on the signal of player $2$, from the above the above calculations,
we observe that the strategy $\sigma_1$ given by
\begin{align*}
\sigma_1(\cdot|0) &= \{(0.5, a); (0.5, b)\},\\
\sigma_1(\cdot|1) &= c,
\end{align*}
is a dominant strategy.
and a belief-dominant strategy.
Let $\sigma_2$ be the truthful strategy for player $2$.
Again, since the allocation choice function $h_0$ does not depend on the signal of player $2$, $\sigma_2$ is trivially a dominant strategy and a belief-dominant strategy.
Thus $\sigma = (\sigma_1, \sigma_2)$ is a dominant equilibrium and a belief-dominant equilibrium.
The corresponding social choice function $f$ is given by
\begin{align*}
f(\Up, \type_2) &= \{(0.5, a);(0.5, b)\}, \\
f(\D, \type_2) &= c.
\end{align*}
Thus, the allocation choice function $f$ is implementable in dominant (resp. belief-dominant) equilibrium. 
Suppose there 
were 
a direct mechanism $\cM_0^d = h_0^d$ that truthfully implements the allocation choice function $f$ in dominant (resp. belief-dominant) equilibrium.
Then, $h_0^d = f$.
As observed in example~\ref{ex: not_belief_dominant_VCG}, the CPT value for player $1$ having type $\Up$ for the lottery corresponding to $\{(0.5, a); (0.5; b) \}$ is
\[
	V_1^{\Up}(L_1(\{(0.5, a); (0.5; b) \})) = 1.9851.
\] 
If player $1$ has type $\Up$ and believes that player $2$'s type report is $\Up$ (or equivalently, any other distribution over player $2$'s type report),
then player $1$ would deviate from her truthful strategy and report $\D$ instead, because it gives her a higher CPT value.
Hence the truthful strategy $\sigma_1^d$ is not 
a 
dominant (resp.  belief-dominant) equilibrium for the direct mechanism $\cM_0^d$.
Thus $f$ is not truthfully implementable in dominant (resp.  belief-dominant) equilibrium by a direct mechanism.
\qed
\end{example}


We will now show that the revelation principle does not hold for the setting with Bayes-Nash equilibrium and publicly mediated mechanism.
Let us first make an observation regarding the allocation choice functions that are truthfully implementable in $F$-Bayes-Nash equilibrium by a 
direct publicly mediated
mechanism.
Let $f$ be an allocation choice function that is truthfully implementable in $F$-Bayes-Nash equilibrium by a 
direct publicly mediated
mechanism 
\begin{equation}
	\cM_*^d = (\Msg_*, \mdt_*, (\Theta_i)_{i \in [n]}, h_*^d),
\end{equation}
where
\begin{equation}
	h_*^d : \Msg_* \times \Type \to \Delta(A),
\end{equation}
is the direct mediated allocation function for this 
direct publicly mediated
mechanism.
Since truthful strategies $\tau^d$ are an $F$-Bayes-Nash equilibrium, for each $\msg_* \in \supp \mdt_*$, we have
\begin{align}
\label{eq: direct_pub_mech_BNE_ineq}
	W_i^{\type_i}(\mu_i(\msg_*, \type_i, \type_i; \cM_*^d, F, \tau_{-i}^d)) \geq W_i^{\type_i}(\mu_i(\msg_*, \type_i, \tilde \type_i; \cM_*^d, F, \tau_{-i}^d)),
\end{align}
for all $\type_i \in \supp F_i, \tilde \type_i \in \Type_i, i \in [n]$,
where
\begin{align*}
	\mu_i(\msg_*, \type_i, \tilde \type_i; \cM_*^d, F, \tau_{-i}^d) 
	= \sum_{\type_{-i}} F_{-i}(\type_{-i} | \type_i) h_*^d(\msg_*, \tilde \type_i, \type_{-i}),
\end{align*}
is the lottery induced on the allocations for player $i$ receiving message $\msg_*$, having type $\type_i$, and deciding to report type $\tilde \type_i$.
Now,
fix $\msg_* \in \Msg_*$ with $\mdt_*(\msg_*) > 0$,
and consider a non-mediated direct mechanism 
$\cM^d_0 := ((\Type_i)_{i \in [n]}, h_0^d)$,
with its direct allocation function 
being $h_0^d(\cdot) := h_*^d(\msg_*, \cdot) : \Type \to \Delta(A)$.
It follows from \eqref{eq: direct_pub_mech_BNE_ineq} that truthful strategies corresponding to mechanism $\cM_0^d$ 
form an 
$F$-Bayes-Nash equilibrium.
Thus, we note that $h_*^d(\msg_*, \cdot)$ is the allocation function truthfully implemented by the non-mediated direct mechanism $\cM_0^d$.
Since mechanism $\cM_*^d$ truthfully implements the allocation function $f$ in $F$-Bayes-Nash equilibrium, we have that
\begin{align*}
	f(\type) = \sum_{\msg_*} \mdt_*(\msg_*) h_*^d(\msg_*, \type),
\end{align*}
for all $\type \in \supp F$.
From these two observations, we conclude that if $f$ is an allocation choice function that is truthfully implementable in $F$-Bayes-Nash equilibrium by a 
direct publicly mediated
mechanism, then $f$ is a convex combination of allocation choice functions each of which 
is
truthfully implementable in $F$-Bayes-Nash equilibrium by a non-mediated direct mechanism.
It is easy to see that the converse of this statement is also true.

In the following example, we will use this observation to establish that the revelation principle does not hold for the setting with Bayes-Nash equilibrium and publicly mediated mechanism.

\begin{example}
	\label{ex: Bayes_Nash_no_rev_public}
	Let there be two players, i.e. $n = 2$.
	Let $\Type_1 = \Type_2 = \{\Up, \D \}$.
	Let $\Gamma_1 = \Gamma_2 = \{\I, \II, \III, \IV, \V \}$.
	Let the value function $v_1$ for player $1$ be as shown below
	\begin{center}
	\begin{tabular}{| c | c | c | c | c | c |}
	\hline
	$v_1$ & I & II & III & IV & V\\
	\hline
	$\Up$ & $80$ & $57$ & $34$ & $17$ & $0$\\ 
	\hline
	$\D$ & $0$ & $0$ & $100$ & $0$ & $0$\\ 
	\hline
	\end{tabular}
	\end{center}
	and let the value function $v_2$ for player $2$ be as shown below
	\begin{center}
	\begin{tabular}{| c | c | c | c | c | c |}
	\hline
	$v_2$ & I & II & III & IV & V\\
	\hline
	$\Up$ & $-79$ & $-56$ & $-33$ & $-17$ & $0$\\ 
	\hline
	$\D$ & $0$ & $0$ & $100$ & $0$ & $0$\\ 
	\hline
	\end{tabular}
	\end{center}
	Let the probability weighting functions for both the players, for both types, for gains and losses, be given by the following piecewise linear function:
	\begin{equation*}
	w_1^{\pm}(p) = w_2^{\pm}(p) = w(p) = \begin{cases}
		(8/7)p, &\text{ for } 0 \leq p < (7/32),\\
		(1/4) + (2/3)(p-7/32), &\text{ for } (7/32) \leq p < 25/32,\\
		(5/8) + (12/7)(p - 25/32), &\text{ for } (25/32) \leq p < 1,
	\end{cases}
	\end{equation*}
	(See the probability weighting function for gains in figure~\ref{fig: ex_weight}.)
	Let the prior distribution $F$ be such that the types of the players are independently sampled with probabilities, 
\begin{equation}\label{eq: ex_prior}
	\bbP(\Up) = 3/4, \bbP(\D) = 1/4.	
\end{equation}
	Let $A = \{a, b, c\}$.
	Let
	\begin{align*}
	\zeta(a) &= \{(1/2, (\I,\I)); (1/2, (\V, \V)) \},\\
	\zeta(b) &= \{(1/2, (\II,\II)); (1/2, (\IV, \IV)) \},\\
	\zeta(c) &= (\III,\III).
\end{align*}
Consider the allocation choice function $f^*$ given by
\begin{align*}
	&f^*(\D, \type_2) = f^*(\type_1, \D) = c, \quad \forall \type_1 \in \Type_1, \type_2 \in \Type_2\\
	&f^*(\Up, \Up) = \{(1/2, a); (1/2, b)\}.
\end{align*}

We will now show that $f^*$ is implementable in $F$-Bayes-Nash equilibrium by a publicly mediated mechanism.
In fact, we will show that $f^*$ is implementable in $F$-Bayes-Nash equilibrium by a 
non-mediated mechanism.
We will then show that $f^*$ cannot be a convex combination of allocation choice functions each of which 
is
truthfully implementable by a 
non-mediated direct mechanism.
This will give us that $f^*$ is not truthfully implementable in $F$-Bayes-Nash equilibrium by a 
direct publicly mediated
mechanism.
We will then conclude that the revelation principle does not hold for the setting with Bayes-Nash equilibrium and publicly mediated mechanism.

Consider the mechanism $\cM_0 = ((\Sig_i)_{i \in [n]}, h_0)$, where $\Sig_1 = \{\Up^a, \Up^b, \D\}$, $\Sig_2 = \{ \Up, \D\}$, and the allocation function $h_0$ is given by
\begin{align*}
	&h_0(\D, \sig_2) = h_0(\sig_1, \D) = c, \quad \forall \sig_1 \in \Sig_1, \sig_2 \in \Sig_2,\\
	&h_0(\Up^a, \Up) = a,\\
	&h_0(\Up^a, \Up) = b.
\end{align*}
Consider the strategy $\sigma_1$ for player $1$ given by
\begin{align*}
	&\sigma_1(\Up) = \{(1/2,\Up^a); (1/2, \Up^b)\},\\
	&\sigma_1(\D) = \D,
\end{align*}
and the strategy $\sigma_2$ for player $2$ given by
\begin{align*}
	&\sigma_2(\Up) = \Up,\\
	&\sigma_2(\D) = \D.
\end{align*}
It is easy to see that this induces the allocation choice function $f^*$.

We will now verify that $\sigma$ is an $F$-Bayes-Nash equilibrium for $\cM_0$.
If player $1$ has type $\Up$, then the CPT values of the lotteries faced by her corresponding to her signals are as follows:
\begin{align*}
	W_1^{\Up}(\mu_1(\Up, \Up^a; \cM_0, F, \sigma_{-1})) &= V_1^{\Up}(\{(3/8, \I); (0, \II); (1/4, \III); (0, \IV); (3/8, \V)\})\\
	&= 46w(3/8) + 34w(5/8)\\
	&= 34.
\end{align*}
\begin{align*}
	W_1^{\Up}(\mu_1(\Up, \Up^b; \cM_0, F, \sigma_{-1})) 
	&= V_1^{\Up}(\{(0, \I); (3/8, \II); (1/4, \III); (3/8, \IV); (0, \V)\})\\
	&= 23w(3/8) + 17w(5/8) + 17\\
	&= 34. 
\end{align*}
\begin{align*}
	W_1^{\Up}(\mu_1(\Up, \D; \cM_0, F, \sigma_{-1})) 
	&= V_1^{\Up}(\III)= 34. 
\end{align*}
Thus player $1$ is indifferent between all signals when she has type $\Up$ and so the 
strategy of signaling $\sigma_1(\Up) = \{(1/2,\Up^a); (1/2, \Up^b)\}$ is optimal for her.

If player $1$ has type $\D$, then $\III$ is the best outcome and she receives this lottery if she signals $\D$.
Thus $\D$ dominates any other strategy, in particular, signaling $\Up^a$ or $\Up^b$.

If player $2$ has type $\Up$, then the CPT values of the lotteries faced by her corresponding to her signals are as follows:
\begin{align*}
	W_2^{\Up}(\mu_1(\Up, \Up; \cM_0, F, \sigma_{-2}))  &= V_1^{\Up}(\{(3/16, \I); (3/16, \II); (1/4, \III); (3/16, \IV); (3/16, \V)\})\\
	&= - 23w(3/16) - 23w(3/8) - 16w(5/8) - 17w(13/16)\\
	&= -32.94.
\end{align*}
\begin{align*}
	W_2^{\Up}(\mu_1(\Up, \D; \cM_0, F, \sigma_{-2}))  &= V_1^{\Up}(\III)= -33. 
\end{align*}
Hence the strategy of signaling $\sigma_2(\Up) = \Up$ is optimal for player $2$ when she has type $\Up$.

If player $2$ has type $\D$, then $\III$ is the best outcome and she receives this lottery if she signals $\D$.
Thus $\D$ dominates any other strategy, in particular, signaling $\Up$.

This shows that $\sigma$ is an $F$-Bayes-Nash equilibrium for $\cM_0$, and hence establishes that $f^*$ is implementable in $F$-Bayes-Nash equilibrium by a 
non-mediated
mechanism.

Suppose $f^*$ 
were
a convex combination of allocation choice functions each of which 
is
truthfully implementable by a 
non-mediated direct mechanism.
Let $f$ be one of the allocation choice functions in this convex combination.
Since $f^*(\D, \type_2) = f^*(\type_1, \D) = c$ for all $\type_1, \type_2$, and since $\{c\}$ is an extreme point of the simplex $\Delta(A)$, we get that
\begin{equation}
\label{eq: pub_med_no_rev_BN_prop1}
	f(\D, \type_2) = f(\type_1, \D) = c, \quad  \forall  \type_1 \in \Type_1, \type_2 \in \Type_2.
\end{equation}
Similarly, since $f^*(\Up, \Up)$ lies on the line joining the vertices $\{a\}$ and $\{b\}$ of the simplex $\Delta(A)$, we get that
\begin{equation}
\label{eq: pub_med_no_rev_BN_prop2}
	f(\Up, \Up) = \{(x,a);(1-x,b)\},
\end{equation}
where $0 \le x \le 1$.

Let $f$ be truthfully implementable in $F$-Bayes-Nash equilibrium by the 
non-mediated
direct mechanism $\cM_0^d = h_0^d$.
Then $h_0^d = f$.
If player $1$ has type $\Up$, then the lottery faced by her if she reports $\Up$ is given by
\[
	L_1(\mu_1(\Up, \Up; \cM_0^d, F, \sigma_{-1}^d)) = 
	\{(3x/8, \I); (3(1-x)/8, \II); (1/4, \III); (3(1-x)/8, \IV); (3x/8, \V)\},
\]
where $\sigma_{-1}^d = \sigma_2^d$ is the truthful strategy of player $2$.
Let
\[
	E_3(x) := 
	23w\l(\frac{3x}{8}\r) + 23w\l(\frac{3}{8}\r) + 17w\l(\frac{5}{8}\r) + 17w\l(\frac{1-3x}{8}\r), 
\]
for $x \in [0,1]$.
We observe that $E_3(x)$ is maximum at $x = 0$ and $x = 1$, and for all $x \in (0,1)$, $E_3(x) < 34$. (See figure~\ref{fig: ex_valE3}.)

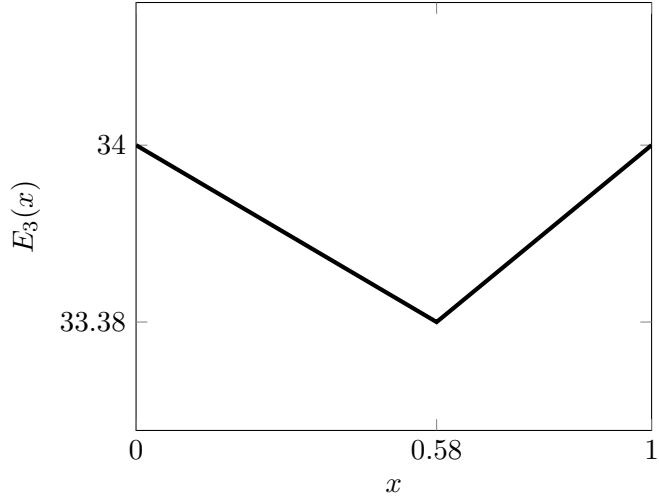
\begin{figure}
\centering
\begin{tikzpicture}[scale = 1]
    \begin{axis}[
    	xmin = 0,
    	xmax = 1,
    	ymin = 33,
    	ymax = 34.5,
        xlabel=$x$,
        ylabel=$E_3(x)$,
        xtick=\empty,
        ytick=\empty,
        extra x ticks={0,0.583,1},
        extra y ticks={34,33.38},
        every axis plot/.append style={ultra thick},
        ]
\addplot []
  table[row sep=crcr]{%
	0   34\\
    0.583   33.38\\
    1   34\\
};


\end{axis}
\end{tikzpicture}%
\caption{Plot of expression $E_3(x)$ in example~\ref{ex: Bayes_Nash_no_rev_public}.}
\label{fig: ex_valE3}
\end{figure}

Now, unless $x = 0$ or $x = 1$, player $1$ will defect from the truthful strategy and report $\D$ when her true type is $\Up$, because if she does so the allocation $c$ will be implemented by the system operator, which results in her outcome $\III$, hence giving her a value of $34$.
Thus, $x = 0$ or $x = 1$.

If player $2$ has type $\Up$, then the lottery faced by her if she reports $\Up$ is given by
\[
	L_2(\mu_2(\Up, \Up; \cM_0^d, F, \sigma_{-2}^d)) = 
	\{(3x/8, \I); (3(1-x)/8, \II); (1/4, \III); (3(1-x)/8, \IV); (3x/8, \V)\},
\]
where $\sigma_{-2}^d = \sigma_1^d$ is the truthful strategy of player $1$.
If $x = 0$, then the CPT value for player $2$ is given by
\begin{align*}
	&V_2^{\Up}(\{(0, \I); (3/8, \II); (1/4, \III); (3/8, \IV); (0, \V)\}\\
	 &= 
	 -23w(3/8) - 16w(5/8) - 17 \\ 
	&= -33.48.
\end{align*}
If $x = 1$, then the CPT value for player $2$ is given by
\begin{align*}
	&V_2^{\Up}(\{(3/8, \I); (0, \II); (1/4, \III); (0, \IV); (3/8, \V)\} \\
	&= -46w(3/8) - 33w(5/8)\\
	&= -33.48.
\end{align*}
Now, if $x = 0$ or $x = 1$, player $2$ will defect from the truthful strategy and report $\D$ when her true type is $\Up$, because if she does so the allocation $c$ will be implemented by the system operator, which results in her outcome $\III$, hence giving her a value of $-33$.
Thus $x$ cannot be $0$ or $1$, leading to a contradiction.
Thus, $f^*$ cannot be a convex combination of allocation choice functions each of which 
is 
truthfully implementable by a 
non-mediated direct mechanism.
This completes the argument.
\end{example}

\ifnonarxiv

\section{Incentive-compatibility}
\label{sec: IC}

In this section, we introduce the notion of incentive-compatibility and characterize the set of allocation choice functions that are truthfully implementable by non-mediated, publicly mediated, or mediated mechanisms in Bayes-Nash equilibrium, dominant equilibrium, or belief-dominant equilibrium.
\ifvenkat
\red The word ``equilibrium" seems to be missing from the last two items
in the preceding sentence. \blue fixed.
\black
\fi
Theorem~\ref{thm: rev_prin_1} then allows us to characterize the set of implementable allocation choice functions more generally in settings where the revelation principle holds.



Recall that in a direct mediated mechanism $\cM^d = ((\Msg_i)_{i \in [n]}, E, h^d)$ with common prior $F$,
the \emph{incentive-constraints} for player $i$ corresponding to the truthful strategy profile 
$\tau^d = (\tau_i^d)_{i \in [n]}$ 
can be written as
\begin{equation}
\label{eq: IC_ineq}
	W_i^{\type_i}(\mu_i(\cdot | \msg_i, \type_i, \type_i; \cM^d, F, \tau^d)) \geq W_i^{\type_i}(\mu_i(\cdot | \msg_i, \type_i, \type_i'; \cM^d, F, \tau^d)),
\end{equation}
for all $\msg_i \in \Msg_i$, $\type_i \in \supp F_i$, $\type_i' \in \Type_i$.
\ifvenkat
\red (1) $\tau$ has a superscript $d$ in the displayed equation but not in the line above it. \blue Fixed.
\black
\fi
\ifvenkat
\red Is individual rationality discussed anywhere in this document? \blue No.
\black
\fi
If all the incentive-constraints for all the players are satisfied, then the truthful strategy profile is a Bayes-Nash equilibrium.
Furthermore, it implements the following social choice function
\begin{equation}
\label{eq: f_induced_dir}
	f(\alpha | \type) = \sum_{\msg} \mdt(\msg) h^d(\alpha | \msg, \type), \text{ for all } \type \in \Type, \alpha \in A.
\end{equation}

The definition of incentive-compatibility can be motivated by the following observation. We have,
\begin{equation}
\label{eq: eta_to_mu}
	\mu_i(\alpha|\msg_i, \type_i, \type_i'; \cM, F, \tau^d) =  \sum_{\type_{-i}} F_{-i}(\type_{-i}|\type_i) \eta_i(\alpha |\phi_i, \type_i', \type_{-i}),
\end{equation}
for all $\alpha \in A$, $\msg_i \in \Msg_i$, $\type_i \in \supp F_i$, and $\type_i' \in \Type_i$,
where
\begin{equation}
\label{eq: eta_def}
	\eta_i(\alpha| \phi_i, \type) := \sum_{\msg_{-i}} \mdt_{-i}(\msg_{-i} | \msg_i)  h(\alpha | \msg, \type).
\end{equation}
Similar to the operation in equation~\eqref{eq: eta_to_mu}, corresponding to any function $\pi : \Type \to \Delta(A)$ and a prior distribution $F$, let us define
\[
	\mu_i(\alpha | \type_i, \type_i'; \pi, F) := \sum_{\type_{-i}} F_{-i}(\type_{-i}|\type_i) \pi(\cdot |\type_i', \type_{-i}),
\]
for all $\alpha \in A$, $\type_i \in \supp F_i$ and $\type_i' \in \Type_i$.

\begin{definition}
\label{def: pi_IC}
	A function $\pi : \Type \to \Delta(A)$ is said to be \emph{$F$-incentive-compatible} with respect to player $i$ if, for all $\type_i \in \supp F_i$, $\type_i' \in \Type_i$, we have
	\begin{align}
		W_i^{\type_i}\l(  \mu_i(\type_i, \type_i; \pi, F)\r) \geq W_i^{\type_i}\l( \mu_i(\type_i, \type_i'; \pi, F)\r).
	\end{align}
\end{definition}

Let $\cF_i(F)$ denote the space of all functions $\pi$ that are F-incentive-compatible with respect to player $i$. With this definition, we get that the truthful strategies form a Bayes-Nash equilibrium for the mechanism $\cM^d$ if and only if $\eta_i(\msg_i) \in \cF_i(F)$ for all $\msg_i \in \Msg_i$, for all $i$.

We now observe that the above definition of $F$-incentive-compatibility is consistent with the definition that appears in the context of direct non-mediated mechanisms in the literature. 
Indeed, suppose we restrict our attention to the non-mediated mechanism design framework, i.e. message sets $\Msg_i = \{\msg_i\}$ are singleton, for all players $i$.
Consider a direct mechanism $\cM_0^d = h_0^d$ characterized by its direct allocation function $h_0^d : \Type \to \Delta(A)$.
We have $\eta_i(\phi_i, \type) =  h_0^d(\type)$, for all $\type_i \in \Type_i, i \in [n]$, and the induced allocation choice function $f =  h_0^d$.
Thus, in the setting of a direct mechanism, we can adopt definition~\ref{def: pi_IC} to get the following:
\begin{definition}
	A direct mechanism $h_0^d$ is $F$-incentive-compatible with respect to player $i$ if, for all $\type_i, \type_i' \in \Type_i$, we have
	\begin{equation}
		W_i^{\type_i}\l(  \sum_{\type_{-i}} F_{-i}(\type_{-i}|\type_i)  h_0^d(\cdot |\type_i, \type_{-i})  \r) \geq W_i^{\type_i}\l(  \sum_{\type_{-i}} F_{-i}(\type_{-i}|\type_i)  h_0^d(\cdot |\type_i', \type_{-i}) \r).
	\end{equation}
\end{definition}

If player $i$ has EUT preferences, then a direct mechanism $h_0^d$ is $F$-incentive compatible with respect to her if, for all $\type_i, \type_i' \in \Type_i$, we have
\begin{equation}
		\sum_{\type_{-i}} \sum_{\alpha \in A} F_{-i}(\type_{-i}|\type_i) h_0^d(\alpha |\type_i, \type_{-i}) u_i^{\type_i}(\alpha) \geq  \sum_{\type_{-i}} \sum_{\alpha \in A} F_{-i}(\type_{-i}|\type_i)  h_0^d(\alpha |\type_i', \type_{-i}) u_i^{\type_i}(\alpha).
\end{equation}
 This is consistent with the notion of $F$-incentive-compatibility as defined in the mechanism design literature when players have EUT preferences.

\begin{definition}
\label{def: pi_DIC}
	A function $\pi : \Type \to \Delta(A)$ is said to be \emph{dominant-incentive-compatible} with respect to player $i$ if, for all $\type_i$, $\type_i' \in \Type_i$, and $\type_{-i} \in \Type_{-i}$, we have
	\begin{align}
		W_i^{\type_i}\l(  \pi(\type_i, \type_{-i})\r) \geq W_i^{\type_i}\l( \pi(\type_i', \type_{-i})\r).
	\end{align}
\end{definition}
Let $\tilde \cF_i$ denote the set of all allocation choice functions that are dominant-incentive-compatible with respect to player $i$.

\begin{definition}
\label{def: pi_BDIC}
	A function $\pi : \Type \to \Delta(A)$ is said to be \emph{belief-dominant-incentive-compatible} with respect to player $i$ if, for all $\type_i$, $\type_i' \in \Type_i$, and $F_{-i}' \in \Delta(A)$ we have
	\begin{equation}
		W_i^{\type_i}\l(  \sum_{\type_{-i}} F_{-i}'(\type_{-i})  \pi(\cdot |\type_i, \type_{-i})  \r) \geq W_i^{\type_i}\l(  \sum_{\type_{-i}} F_{-i}'(\type_{-i})  \pi(\cdot |\type_i', \type_{-i}) \r).
	\end{equation}
\end{definition}
Let $\cF_i$ denote the set of all allocation choice functions that are belief-dominant-incentive-compatible with respect to player $i$.
Note that
\[
	\cF_i = \cap_{F \in \Delta(A)} \cF_i(F).
\]

The function $\pi: \Type \to \Delta(A)$ can also be thought of as an allocation choice function.
Let $\cal{A}$ denote the space of all allocation choice functions $f: \Type \to \Delta(A)$.
Note that $\cal{A} = \Delta(A)^\Type$, and it is a subset of the Euclidean space $\bbR^{|A| \times |\Type|}$.
We now establish some elementary facts about the sets of incentive-compatible allocation choice functions.

Let $\cF(F) \subset \cal{A}$ denote the set of all allocation choice functions that are implementable in mediated Bayes-Nash equilibrium.

\begin{proposition}
\label{prop: F*_in_convexhull}
For any environment $\cal{E}$ with any common prior $F$, we have
	\begin{equation*}
		co\l(\cap_{i \in [n]} \cF_i(F) \r) \subset \cF(F) \subset \cap_{i \in [n]} co(\cF_i(F)).
	\end{equation*}
	\begin{enumerate}[(i)]
		\item If $f \in \cap_{i \in [n]} \cF_i(F)$, then it is truthfully implementable in Bayes-Nash equilibrium by a non-mediated direct mechanism.
		\item If $f \in co\l(\cap_{i \in [n]} \cF_i(F) \r)$, then it is truthfully implementable in Bayes-Nash equilibrium by a publicly mediated direct mechanism.
	\end{enumerate}
\end{proposition}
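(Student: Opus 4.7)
The plan is to prove the two items first and then deduce the two set inclusions from them, using the revelation principle (Theorem~\ref{thm: rev_prin_1}) together with the decomposition identity \eqref{eq: eta_to_mu}--\eqref{eq: eta_def} that relates a direct mediated allocation function to functions in $\cF_i(F)$ via $\eta_i(\msg_i,\cdot)$.

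For item (i), given $f \in \cap_{i} \cF_i(F)$ I would consider the non-mediated direct mechanism $\cM_0^d$ with $h_0^d = f$ and the truthful strategy profile $\sigma^d$. Player $i$ of type $\type_i$ who reports $\type_i'$ faces the lottery over allocations $\mu_i(\type_i, \type_i'; f, F) = \sum_{\type_{-i}} F_{-i}(\type_{-i}|\type_i) f(\cdot|\type_i',\type_{-i})$; by Definition~\ref{def: pi_IC} applied to $\pi = f$, reporting $\type_i' = \type_i$ maximizes $W_i^{\type_i}$ over $\type_i' \in \Type_i$, so $\sigma^d$ is an $F$-Bayes-Nash equilibrium and the induced allocation choice function is $f$ itself.

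For item (ii), write $f = \sum_{j=1}^k \lambda_j f^{(j)}$ with $f^{(j)} \in \cap_i \cF_i(F)$, $\lambda_j \geq 0$, $\sum_j \lambda_j = 1$, and construct the direct publicly mediated mechanism $\cM_*^d = (\Msg_*, \mdt_*, (\Type_i)_{i \in [n]}, h_*^d)$ with $\Msg_* = \{1,\dots,k\}$, $\mdt_*(j) = \lambda_j$, and $h_*^d(j,\type) = f^{(j)}(\type)$. Since the message is public, each player learns $j$ but acquires no additional information about $\type_{-i}$, so upon receiving $j$ the optimality calculation in item (i) applied to $f^{(j)} \in \cF_i(F)$ shows that truthful reporting is optimal; hence $\tau^d$ is a mediated $F$-Bayes-Nash equilibrium and \eqref{eq: f_induced} gives the induced allocation choice function $\sum_j \lambda_j f^{(j)} = f$.

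The inclusion $co(\cap_{i} \cF_i(F)) \subset \cF(F)$ now follows from (ii) together with the observation that a publicly mediated mechanism is a special case of a mediated mechanism. For the remaining inclusion $\cF(F) \subset \cap_{i} co(\cF_i(F))$, take $f \in \cF(F)$; by Theorem~\ref{thm: rev_prin_1}(i) there is a direct mediated mechanism $\cM^d = ((\Msg_i)_{i \in [n]}, \mdt, (\Type_i)_{i \in [n]}, h^d)$ truthfully implementing $f$ in $F$-Bayes-Nash equilibrium. Fix $i$. The incentive constraints \eqref{eq: IC_ineq} combined with \eqref{eq: eta_to_mu}--\eqref{eq: eta_def} say exactly that $\eta_i(\msg_i,\cdot) \in \cF_i(F)$ for every $\msg_i \in \supp \mdt_i$, and marginalizing \eqref{eq: f_induced_dir} over $\msg_{-i}$ gives
\[
f(\alpha|\type) = \sum_{\msg_i \in \supp \mdt_i} \mdt_i(\msg_i)\, \eta_i(\alpha|\msg_i,\type),
\]
exhibiting $f$ as a convex combination of members of $\cF_i(F)$; since $i$ was arbitrary, $f \in \cap_{i} co(\cF_i(F))$. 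The main conceptual step is this last one, and the only care needed is to ensure the identification between the incentive-compatibility of each conditional allocation function $\eta_i(\msg_i,\cdot)$ and truthfulness being a best response for player $i$ upon receiving the private message $\msg_i$, which is immediate from the definitions once one marginalizes over $\msg_{-i}$ using \eqref{eq: E_cond}.
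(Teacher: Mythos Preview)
Your proposal is correct and follows essentially the same approach as the paper's proof: item (i) is the direct mechanism $h_0^d=f$, item (ii) is the publicly mediated mechanism with $\Msg_*$ indexing the convex combination, the left inclusion follows from (ii), and the right inclusion uses the $\eta_i(\msg_i,\cdot)$ decomposition to exhibit $f$ as a convex combination of elements of $\cF_i(F)$. The only cosmetic difference is that you invoke Theorem~\ref{thm: rev_prin_1}(i) explicitly to pass to a direct mediated mechanism, whereas the paper leaves this step implicit; your version is slightly more careful in this respect.
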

\begin{proof}
If $f \in \cal{A}$ is implemented by a direct mediated mechanism $\cM^d$, then we have $\eta_i(\cdot | \phi_i, \type) \in \cF_i(F)$ for all $\msg_i$, for all $i$, and from \eqref{eq: f_induced}, we have
\begin{equation}
\label{eq: f_conv_eta}
	f(\alpha| \type) = \sum_{\msg_i} E_i(\msg_i) \eta_i(\alpha |  \phi_i, \type), \forall i.
\end{equation}
Thus, for all $i$, the allocation choice function $f$ belongs to the convex hull of $\cF_i(F)$.
Statements (i) follows directly from the definition.
For the proof of statement (ii), let $f \in \co (\cap_i \cF_i(F))$.
Let $f = \sum_{m \in M} a^m f^m$ be a convex representation of $f$ where $f^m \in \cap_i \cF_i(F)$ for all $m \in M$.
Let $\Msg_* = M$, $E_*(m) = a^m$, and $h_*^d(m, \cdot) = f^m(\cdot)$ for all $m \in M$.
The mechanism $\cM_*^d = (\Msg_*, E_*, h_*^d)$ truthfully implements $f$ in $F$-Bayes-Nash equilibrium.
This completes the proof.
\end{proof}

Let $\tilde \cF_* \subset \cal{A}$ and $\cF_* \subset \cal{A}$ denote the set of allocation choice functions that are implementable in dominant strategies and belief-dominant strategies, respectively, in publicly mediated mechanisms.
\begin{proposition}
\label{prop: Ftilde_F_characterize}
For any environment $\cal{E}$, we have
	\begin{equation*}
		\tilde \cF_* = \co (\cap_{i \in [n]} \tilde \cF_i), \text{ and }  \cF_* = \co (\cap_{i \in [n]} \cF_i).
	\end{equation*}
\end{proposition}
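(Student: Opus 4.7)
The plan is to prove the two equalities by establishing both inclusions, using a strategy closely parallel to that of Proposition~\ref{prop: F*_in_convexhull}, but leveraging the fact that in a \emph{publicly} mediated mechanism every player observes the same message, so after conditioning on that message one is effectively looking at a non-mediated direct mechanism on its own. The key observation is that in a direct publicly mediated mechanism $\cM_*^d = (\Msg_*, \mdt_*, (\Type_i)_{i \in [n]}, h_*^d)$, the lottery on allocations induced at player $i$ who has received message $\msg_*$, has type $\type_i$, and considers reporting $\tilde\type_i$ against a signal profile $\sig_{-i}$ (or belief $G_{-i}$) of the opponents depends only on the single slice $h_*^d(\msg_*, \cdot) : \Type \to \Delta(A)$. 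Hence the truthful strategy is dominant (resp.\ belief-dominant) for player $i$ in $\cM_*^d$ if and only if, for every $\msg_* \in \supp \mdt_*$, the slice $h_*^d(\msg_*, \cdot)$ is dominant-incentive-compatible (resp.\ belief-dominant-incentive-compatible) with respect to player $i$ in the sense of Definitions~\ref{def: pi_DIC} and \ref{def: pi_BDIC}.

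For the inclusion $\co(\cap_i \tilde\cF_i) \subset \tilde\cF_*$, given any convex representation $f = \sum_{m \in M} a^m f^m$ with $f^m \in \cap_i \tilde\cF_i$, I would construct the direct publicly mediated mechanism with $\Msg_* = M$, $\mdt_*(m) = a^m$, and $h_*^d(m, \type) = f^m(\type)$. The observation above shows that truthful reporting is then a dominant strategy for every player (since each slice is dominant-incentive-compatible for every player simultaneously), and \eqref{eq: f_induced} specialized to this public mediation yields that the induced allocation choice function is exactly $f$. The inclusion $\co(\cap_i \cF_i) \subset \cF_*$ is handled in the same way, replacing dominant by belief-dominant throughout.

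For the reverse inclusion $\tilde\cF_* \subset \co(\cap_i \tilde\cF_i)$, I would take $f \in \tilde\cF_*$ and, invoking statement~(ii) of Theorem~\ref{thm: rev_prin_1}, pass to a direct publicly mediated mechanism $\cM_*^d = (\Msg_*, \mdt_*, (\Type_i)_{i \in [n]}, h_*^d)$ in which the truthful profile is dominant. Applying the observation again, for each $\msg_* \in \supp \mdt_*$ the slice $h_*^d(\msg_*, \cdot)$ lies in $\cap_i \tilde\cF_i$; and since $f(\type) = \sum_{\msg_*} \mdt_*(\msg_*)\, h_*^d(\msg_*, \type)$, we see that $f$ is a convex combination of elements of $\cap_i \tilde\cF_i$. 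The belief-dominant version uses part~(iii) of Theorem~\ref{thm: rev_prin_1} in place of part~(ii), and the analogous slice-wise characterization in terms of Definition~\ref{def: pi_BDIC}.

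I do not anticipate a serious obstacle: the argument is essentially a bookkeeping exercise once one has (a) the revelation principle for publicly mediated mechanisms under dominant and belief-dominant equilibrium, and (b) the slice-wise decoupling property of public mediation noted above. The one delicate point worth stating carefully is that the definition of dominant strategy in a mediated mechanism (cf.\ \eqref{eq: med_dominant_strat}) quantifies over all $\sig_{-i} \in \Sig_{-i}$, not just over $\supp \tau^*_{-i}(\cdot,\cdot)$; this is what forces each slice $h_*^d(\msg_*, \cdot)$ to be dominant-incentive-compatible for \emph{every} opposing type profile, and hence to lie in $\cap_i \tilde\cF_i$ rather than in some weaker set that depends on the prior. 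The same remark applies, with beliefs $G_{-i}$ in place of signal profiles, for the belief-dominant case.
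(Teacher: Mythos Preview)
Your proposal is correct and follows essentially the approach the paper has in mind; the paper omits the proof entirely, pointing only to Proposition~\ref{prop: F*_in_convexhull}(ii), whose mechanism construction yields the inclusion $\co(\cap_i \tilde\cF_i) \subset \tilde\cF_*$ (and likewise for $\cF_*$). Your treatment is in fact more complete than the paper's hint, since you also spell out the reverse inclusion by invoking the revelation principle (Theorem~\ref{thm: rev_prin_1}(ii) and (iii)) and the slice-wise observation that in a direct publicly mediated mechanism each $h_*^d(\msg_*,\cdot)$ with $\msg_* \in \supp \mdt_*$ must lie in $\cap_i \tilde\cF_i$ (resp.\ $\cap_i \cF_i$); this is exactly the argument the paper sketches informally just before Example~\ref{ex: Bayes_Nash_no_rev_public} in the Bayes-Nash setting.
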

The proof is similar to the proof of statement (ii) in proposition~\ref{prop: F*_in_convexhull}, and hence we omit it.

Suppose all the types $\type_i \in \Type_i$ for player $i$ are such that for any $\mu', \mu'', \tilde \mu', \tilde \mu'' \in \cal{A}$, $\mu* := \ell \mu' + (1 - \ell) \mu'', \tilde \mu^* := \ell \tilde \mu' + (1 - \ell) \tilde \mu''$, for any $\ell \in [0,1]$, we have
\[
	W_i^{\type_i}(\mu') \geq W_i^{\type_i}(\tilde \mu') \text{ and } W_i^{\type_i}(\mu'') \geq W_i^{\type_i}(\tilde \mu'')
\]
implies
\[
	W_i^{\type_i}(\mu^*) \geq W_i^{\type_i}(\tilde \mu^*).
\]
Then from the definitions~\ref{def: pi_IC}, \ref{def: pi_DIC}, and $\ref{def: pi_BDIC}$ we get that the sets $\cF_i(F)$, $\tilde \cF_i$, $\cF_i$ are convex.
In particular, the above property is satisfied when player $i$ has EUT preferences for all her types $\type_i \in \Type_i$.
Another somewhat trivial instance when the above property is satisfied is when $|\Gamma_i| \leq 2$.
Note that if the outcome set is of size two, say $\Gamma_i = \{0,1\}$, and player $i$ has CPT preferences, then all we can elicit from a player is her ordinal preference.
Indeed, for any type $\type_i = (v_i, w_i^\pm)$ her ordinal preference between the two outcomes, i.e. $v_i(0) < v_i(1)$, $v_i(0) > v_i(1)$, or $v_i(0) = v_i(1)$, completely determines her preferences over the lotteries $L_i \in \Delta(\Gamma_i)$.

Thus, if either of the following two conditions holds for each player $i$:
 	\begin{inparaenum}[(a)]
 		\item player $i$ has EUT preferences for all her types $\type_i \in \Type_i$, or
 		\item player $i$ has at most two outcomes,
	\end{inparaenum}
then it is sufficient to restrict our attention to non-mediated direct mechanisms.

In Appendix~\ref{sec: relation_F}, we further study the properties of the set $\cF(F)$.
We first observe that the set $\cF(F)$ of all allocation choice functions that are implementable in mediated Bayes-Nash equilibrium is completely determined by the sets $\cF_i(F)$ of allocation choice functions that are incentive-compatible with respect to each player $i$.
We then derive some useful properties of the set $\cF(F)$ in relation to the sets $\cF_i(F)$.
Though a complete characterization remains an intriguing question for future research.

\fi

\section{Remarks and future directions}
\label{sec: discussion}

Generally in the settings where agents exhibit deviations from expected utility behavior, one would expect that the participating agents do not possess large computational power.
Hence, truthful strategies are especially suitable for such settings in contrast to 
the 
more complicated strategies that are 
permitted by the concept of 
Bayes-Nash equilibrium.
On the other hand, if our participating agents do not possess large computational power, then it is natural to question 
if 
they
have the ability to exhibit strategic behavior, in particular 
the requirement that the strategies form a Bayes-Nash equilibrium (or dominant equilibrium or belief-dominant equilibrium).
However, there can also be agents in the system who do possess large computational power.
Indeed, most of the systems such as online auctions and marketplaces or networked-systems such as transportation networks, Internet routing networks, etc. are comprised of players having varying degrees of computational and strategic abilities.
For example, a firm participating in an online marketplace has the resources to estimate the common prior and other players' strategies through extensive data collection, and thus can develop optimal strategies.
On the other hand, individual agents participating in the same system often lack such resources.
When truthful strategies are in equilibrium, we get the best of both the worlds -- it is easy for the players with limited resources to implement optimal strategies and at the same time there is no incentive for the players with large resources to deviate from these strategies.

Consider the setting when players have independent types, i.e. the common prior $F$ on the type profiles has a product distribution $F = \prod_{i} F_i$.
Let 
$\cM^d = ((\Msg_i)_{i \in [n]}, \mdt, (\Type_i)_{i \in [n]}, h^d)$ 
be a 
direct mediated
mechanism in such a setting.
We note that the lottery induced on the outcome set of player $i$ when she receives a message $\msg_i$, has type $\type_i$, and decides to report $\tilde \type$, is independent of her own type $\type_i$.
This is because her belief $F_{-i}(\cdot|\type_i)$ on the type profiles of her opponents is independent of her type $\type_i$.
With an abuse of notation, let us denote this belief by $F_{-i} \in \Delta(\Type_{-i})$.
Then the lottery induced on the outcome set of player $i$ when she receives a message $\msg_i \in \supp \mdt_i$, and decides to report $\tilde \type$, is given by
\[
	L_i^{\msg_i, \tilde \type_i}(\gamma_i) := \sum_{\msg_{-i}} \mdt_{-i}(\msg_{-i} | \msg_i) \sum_{\type_{-i}} F_{-i}(\type_{-i}) \sum_{\alpha} h^d(\alpha|\msg, \tilde \type, \type_{-i}) \zeta_i(\gamma_i|\alpha), \: \gamma_i \in \Gamma_i.
\]
We will now interpret the message profile as determining the menu of options to be presented to each player.
For example, if the message profile $\msg \in \Msg$ is drawn from the distribution $\mdt$, then player $i$ would be presented with the menu 
comprised of 
lotteries, one for each type $\tilde \type_i \in \Type_i$ of the player.
Let 
\[
	\cal{L}_i(\msg_i)  :=\{L_i^{\msg_i, \tilde \type_i}\}_{\tilde \type_i \in \Type_i},
\]
denote the list of lotteries presented to player $i$ when her message is $\msg_i \in \Msg_i$.
Depending on the player's type, she chooses the lottery that gives her maximum CPT value.
If truthful strategies form an $F$-Bayes-Nash equilibrium, then the lottery $L_i^{\msg_i, \type_i}$ is indeed the best option for a player with type $\type_i$.

In several practical situations, the players are unaware of the type sets of other players $\Type_j, j \neq i$, the allocation set $A$, the allocation-outcome mapping $\zeta$, and the common prior $F$.
It might also be preferable to relieve the players from the burden of knowing the message sets and the mediator distribution $\mdt$.
Note that
the system operator has 
enough knowledge to 
construct the list of lotteries $\cal{L}_i(\msg_i)$ for each player $i$ based on her sampled message $\msg_i$.
Now, using the knowledge of her own type 
$\type_i$,
namely her preferences on the lotteries over her outcome set, player $i$ can select the lottery that is optimal for her from the list $\cal{L}_i(\msg_i)$.
This provides a way to operate the mechanism $\cM^d$ under reasonable assumptions on the players' information.

Further, it is beneficial to limit the complexity of the list $\cal{L}_i(\msg_i)$ presented to the players.
A way to 
do this
would be to limit the size of the list and the complexity of each individual lottery in the list.
The complexity of each individual lottery can be restricted, for example, by limiting the size of the outcome set $\Gamma_i$ and by restricting the probabilities of each outcome to belong to a grid $\{k/K : 0 \leq k \leq K\}$, where $K > 0$ determines the granularity of the grid.
Our framework with separate allocation and outcome sets is helpful in imposing restrictions on the size of the outcome set $\Gamma_i$.
Subsequently, for any 
lottery $L_i \in \Delta(\Gamma_i)$, we can find an approximate lottery $\tilde L_i = \{(p_i(\gamma_i), \gamma_i)\}_{\gamma_i \in \Gamma_i}$ such that $p_i(\gamma_i) \in \{k/K : 0 \leq k \leq K\}$ for all $\gamma_i \in \Gamma_i$.


On the other hand, the size of the list $\cal{L}_i(\msg_i)$ is same as the size of the type set $\Type_i$ in the worst case.
This could make things practically infeasible.
For example, when considering type spaces 
comprised 
of general CPT preferences, it might be 
impossible in practice 
to elicit the probability weighting functions from the agents.
Restricting the type space can lead to inefficient social choice functions.
The mediated mechanism design framework could allow us to limit the size of menu options and at the same time have diversity in the social choice function across different types of the players, facilitated by the messaging stage.
Such multiple communication rounds have been studied under EUT and there is an extensive literature concerning the 
communication 
requirements in mechanism design. (See~\citet{mookherjee2014mechanism} and the references therein. See also 
the literature on 
computational mechanism design~\citep{conitzer2004computational}.)
Given that the non-EUT preferences can reliably be applied only to non-dynamic decision-making, we are especially interested in mechanisms that have a single stage of mediator messages to which the participating agents respond optimally by choosing their best option.
It would be interesting to study the design of mechanisms that optimally elicit CPT preferences under 
communication 
restrictions such as limiting the size of the menu options.
For example, we could consider mechanism designs where the mediated allocation function $h^d$ for a 
direct mediated
mechanism has to satisfy $|\{ \cal{L}_i(\msg_i\} | \leq B$, for all messages $\msg_i$, for some bound $B$.

\ifnonarxiv
In example~\ref{ex: not_belief_dominant_VCG}, we observed that VCG mechanisms need not be belief-dominant strategies for CPT agents. 
The VCG mechanism although quite general makes some strong assumptions regarding the utilities of the agents.
It requires the agents to have quasi-linear utilities.
Few papers have considered VCG mechanism for agents with non quasi-linear utility preferences (still EUT preferences) with limited success. (See, for example, \citet{ma2018social}).
It would be interesting to consider VCG like mechanisms by retaining the assumption of quasi-linearity but allowing risk-sensitivity in the agents preferences via different probability weighting functions.
Similar remark holds for other settings of interest in mechanism design such as public goods, bilateral trade, revenue maximization, etc.
\fi

In this paper, we focused on the mechanism design framework and the revelation principle for agents having CPT preferences. 
It is just the first step towards mechanism design for non-EUT players, with
several interesting directions for future work.

\theendnotes

\appendix
\appendixpage


\section{Proof of the Revelation Principle}
\label{sec: rev_proof}
	We will first consider the revelation principle in the setting of mediated mechanisms.
	This corresponds to statement (i) and a part of statement (ii) of theorem~\ref{thm: rev_prin_1}.
	In this setting we will show that if an allocation choice function $f$
	is implementable in Bayes-Nash equilibrium (resp. belief-dominant equilibrium) 
	by a mediated mechanism
	then it is truthfully implementable in Bayes-Nash equilibrium (resp. belief-dominant equilibrium) by a 
	direct mediated
	mechanism.
	We will then consider the setting of publicly mediated mechanisms and show that if an allocation choice function $f$
	is implementable in dominant equilibrium (resp. belief-dominant equilibrium) 
	by a publicly mediated mechanism 
	then it is truthfully implementable in dominant equilibrium (resp. belief-dominant equilibrium) by a 
direct publicly mediated
	mechanism.
	This will complete the proof of statement (ii) and the remaining part of statement (iii) of theorem~\ref{thm: rev_prin_1}.

	For the first setting, let
	$$\cM = ((\Msg_i)_{i \in [n]}, \mdt, (\Sig_i)_{i \in [n]}, h ),$$ 
	be a mediated mechanism and let
	$\tau$ be a strategy profile that induces $f$ for this mechanism.
	Consider now the 
	direct mediated
	mechanism
$$\cM^d = ((\Msg_i')_{i \in [n]}, \mdt', (\Type_i)_{i \in [n]}, h^d),$$ 
	where the message set is given by 
	\begin{equation}
	\label{eq: def_typeset_rev_proof_mod}
	\Msg_i' := \Msg_i \times \l(\Sig_i \r)^{\Type_i},
	\end{equation}
	with a typical element denoted by 
	\begin{equation}
	\label{eq: msg_prime_def}
		\msg_i' := (\msg_i, (\sig_i^{\type_i'})_{\type_i' \in \Type_i}),
	\end{equation}
	and the mediator distribution $\mdt'$ is given by 
	\begin{equation}
	\label{eq: E_prime_def}
		\mdt'(\msg') := \mdt(\msg) \prod_{i \in [n]} \prod_{\type_i' \in \Type_i} \tau_i\l(\sig_i^{\type_i'} | \msg_i, \type_i'\r) \text{ for all } \msg' \in \Msg'.
	\end{equation}
	The modified mediator messages and the mediator distribution can be interpreted as encapsulating the randomness in the strategies of the players for each of their types into their private messages.

	{
\allowdisplaybreaks
	We now observe that
	\begin{align}
		\label{eq: E'_i_formula}
		\mdt_i'(\msg_i') = \mdt_i(\msg_i) \prod_{\type_i' \in \Type_i} \tau_i\l( \sig_i^{\type_i'} | \msg_i, \type_i' \r),
	\end{align}
	and
		\begin{align*}
		\sum_{\msg_i' \in \Msg_i'} \mdt_i'(\msg_i') = \sum_{\msg' \in \Msg'} \mdt'(\msg') = 1.
	\end{align*}
	Thus, $\mdt' \in \Delta(\Msg')$ is indeed a valid distribution.
	Equation \eqref{eq: E'_i_formula} can be formally proved as follows:
	\begin{align*}
	\mdt_i'(\msg_i') &= \sum_{\msg_{-i}' \in \Msg_{-i}'} \mdt'(\msg_i', \msg_{-i}')\nonumber \\ \nonumber
	&=\sum_{\msg_{-i} \in \Msg_{-i}} \mdt(\msg_i, \msg_{-i})  \sum_{
	\substack{
	(\sig_j^{\type_j'} )_{\type_j' \in \Type_j, j \neq i}\\ \nonumber
	\in \prod_{j \neq i} ( \Sig_j)^{\Type_j}
	}
	}
	\l( \prod_{j \neq i} \prod_{\type_j' \in \Type_j} \tau_j\l(\sig_j^{\type_j'} | \msg_j, \type_j'  \r) \r) 
	\prod_{\type_i' \in \Type_i} \tau_i\l( \sig_i^{\type_i'} | \msg_i, \type_i' \r)\\ \nonumber
	&= \sum_{\msg_{-i} \in \Msg_{-i}} \mdt(\msg_i, \msg_{-i}) \prod_{\type_i' \in \Type_i} \tau_i\l( \sig_i^{\type_i'} | \msg_i, \type_i' \r)
	\l( \prod_{j \neq i} \prod_{\type_j' \in \Type_j} 
	\sum_{\sig_j^{\type_j'} \in \Sig_j}
	\tau_j\l(\sig_j^{\type_j'} | \msg_j, \type_j'  \r) \r)  \\	\nonumber
	&= \sum_{\msg_{-i} \in \Msg_{-i}} \mdt(\msg_i, \msg_{-i}) \prod_{\type_i' \in \Type_i} \tau_i\l( \sig_i^{\type_i'} | \msg_i, \type_i' \r)
	\l( \prod_{j \neq i} \prod_{\type_j' \in \Type_j} 1\r)\nonumber \\ 
	&= \mdt_i(\msg_i) \prod_{\type_i' \in \Type_i} \tau_i\l( \sig_i^{\type_i'} | \msg_i,\type_i' \r).
	\end{align*}

	Let the direct mediated allocation function be given by
	\begin{equation}
	\label{eq: h_prime_def}
		h^d(\msg', \type') := h\l( \msg, \l(\sig_i^{\type_i'}\r)_{i \in [n]}\r) \text{ for all } \msg' \in \Msg', \type' \in \Type.
	\end{equation}
	Note that the construction of the 
	direct mediated
	mechanism is independent of the prior distribution $F$.

	The modified mediator messages and the direct mediated allocation function $h^d$ essentially transfer the randomness in the strategies of the players to the mediator messages, thus allowing each player to simply report her type.
	We observe that the truthful strategies 
	\[
		\tau_i^d(\tilde \type_i|\msg_i', \type_i) = \1\{\tilde \type_i = \type_i\},
	\]
	for all players $i$, implement the allocation choice function $f$ for the direct mediated mechanism $\cM^d$.
	Here is a formal proof.
}

{
\allowdisplaybreaks
Let us compute the distribution on the allocation set induced by the truthful strategy for the direct mediated mechanism.
	For any fixed $\type \in \Type$ and $\alpha \in A$, we have
	\begin{align*}
		&\sum_{\msg' \in \Msg'} \mdt'(\msg') \sum_{\tilde \type \in \Type} \l(\prod_{i \in [n]} \tau_i^d(\tilde \type_i | \msg_i', \type_i) \r) h^d(\alpha | \msg', \tilde \type)\\
		&= \sum_{\msg' \in \Msg'} \mdt'(\msg') \sum_{\tilde \type \in \Type} \l(\prod_{i \in [n]} \1\{\tilde \type_i = \type_i\}\r) h^d(\alpha | \msg', \tilde \type) \\
		& \pushright{\text{ ... because $\tau^d$ is a truthful strategy }}\\
		&= \sum_{\msg' \in \Msg'} \mdt'(\msg')  h^d(\alpha | \msg', \type)\\
		&= \sum_{\msg' \in \Msg'} \mdt(\msg) \l(\prod_{i \in [n]} \prod_{\type_i' \in \Type_i} \tau_i\l(\sig_i^{\type_i'} | \msg_i, \type_i'\r)\r) h^d(\alpha | \msg', \type)\\
		&  \pushright{\text{ ... from \eqref{eq: E_prime_def}}} \\
		&= \sum_{\msg \in \Msg} \mdt(\msg) \sum_{
		\substack{
		(\sig_i^{\type_i'})_{\type_i' \in \Type_i, i \in [n]}\\
		\in \prod_{i \in [n]} \l(\Sig_i\r)^{\Type_i}
		}} \l(\prod_{i \in [n]} \prod_{\type_i' \in \Type_i} \tau_i\l(\sig_i^{\type_i'} | \msg_i, \type_i'\r)\r) h^d(\alpha | \msg', \type)\\
		&\pushright{\text{ ... from \eqref{eq: msg_prime_def} }} \\
		&= \sum_{\msg \in \Msg} \mdt(\msg) \sum_{
		\substack{
		(\sig_i^{\type_i'})_{\type_i' \in \Type_i, i \in [n]}\\
		\in \prod_{i \in [n]} \l(\Sig_i\r)^{\Type_i}
		}}  
		\l(\prod_{i \in [n]} \prod_{\type_i' \in \Type_i} \tau_i\l(\sig_i^{\type_i'} | \msg_i, \type_i'\r)\r) 
		h\l(\alpha \big | \msg, \l(\sig_i^{\type_i}\r)_{i \in [n]}\r)\\
		& \pushright{\text{ ... from \eqref{eq: h_prime_def} }} \\
		&=\sum_{\msg \in \Msg} \mdt(\msg) \sum_{
		\substack{
		(\sig_i^{\type_i})_{i \in [n]}\\
		\in \prod_{i \in [n]} \Sig_i
		}
		} \sum_{
		\substack{
		(\sig_i^{\type_i'})_{\type_i' \neq \type_i, i \in [n]}\\
		\in \prod_{i \in [n]} (\Sig_i)^{\Type_i \back \type_i}
		}
		} 
		\l(\prod_{i \in [n]} \prod_{\type_i' \neq \type_i} \tau_i\l(\sig_i^{\type_i'} | \msg_i, \type_i'\r)\r)\\
		 & \hspace{2in} \times \l( \prod_{i \in [n]} \tau_i\l(\sig_i^{\type_i} | \msg_i, \type_i\r) \r) 
		 h\l(\alpha \big | \msg, \l(\sig_i^{\type_i}\r)_{i \in [n]}\r)\\
		&= \sum_{\msg \in \Msg} \mdt(\msg) \sum_{
		\substack{
		(\sig_i^{\type_i})_{i \in [n]}\\
		\in \prod_{i \in [n]} \Sig_i
		}
		} \l(\prod_{i \in [n]} \tau_i\l(\sig_i^{\type_i} | \msg_i, \type_i\r)\r) 
		h\l(\alpha \big | \msg, \l(\sig_i^{\type_i}\r)_{i \in [n]}\r)\\ 
		& \hspace{2in} \times \sum_{
		\substack{
		(\sig_i^{\type_i'})_{\type_i' \neq \type_i, i \in [n]}\\
		\in \prod_{i \in [n]} (\Sig_i)^{\Type_i \back \type_i}
		}
		} \l(\prod_{i \in [n]} \prod_{\type_i' \neq \type_i} \tau_i\l(\sig_i^{\type_i'} | \msg_i, \type_i'\r)\r)\\
		&=\sum_{\msg \in \Msg} \mdt(\msg) \sum_{
		\substack{
		(\sig_i^{\type_i})_{i \in [n]}\\
		\in \prod_{i \in [n]} \Sig_i
		}
		} \l(\prod_{i \in [n]} \tau_i\l(\sig_i^{\type_i} | \msg_i, \type_i\r)\r) 
		h\l(\alpha \big | \msg, \l(\sig_i^{\type_i}\r)_{i \in [n]}\r) \\
		& \hspace{2in} \times \l(\prod_{i \in [n]} \prod_{\type_i' \neq \type_i} \sum_{\sig_i^{\type_i'} \in \Sig_i} \tau_i\l(\sig_i^{\type_i'} | \msg_i, \type_i'\r)\r)\\
		&= \sum_{\msg \in \Msg} \mdt(\msg) \sum_{
		\substack{
		(\sig_i^{\type_i})_{i \in [n]}\\
		\in \prod_{i \in [n]} \Sig_i
		}
		} \l(\prod_{i \in [n]} \tau_i\l(\sig_i^{\type_i} | \msg_i, \type_i\r)\r) h\l(\alpha \big | \msg, \l(\sig_i^{\type_i}\r)_{i \in [n]}\r) \l(\prod_{i \in [n]} \prod_{\type_i' \neq \type_i} 1\r)\\
		& \pushright{\text{ ... because $\tau_i(\cdot|\msg_i, \type_i') \in \Delta(\Sig_i)$ }}\\
		&= \sum_{\msg \in \Msg} \mdt(\msg) \sum_{\sig \in \Sig} \l(\prod_{i \in [n]} \tau_i\l(\sig_i | \msg_i, \type_i\r)\r) h\l(\alpha \big | \msg, \sig \r)\\
		&= f(\alpha | \type) \mbox{  if $\type \in \supp F$ } \\
		&\pushright{\text{ ... from \eqref{eq: f_induced}.} }
	\end{align*}
	This confirms that the truthful strategy profile implements the social choice function for the direct mediated mechanism $\cM^d$.
}


{
\allowdisplaybreaks
We will now show that if $\tau$ is an $F$-Bayes-Nash equilibrium for $\cM$, then $\tau^d$ is an $F$-Bayes-Nash equilibrium for $\cM^d$.
We will then show that if $\tau$ is a belief-dominant equilibrium for $\cM$, then $\tau^d$ is a belief-dominant equilibrium for $\cM^d$.
To prove these two statements, we first make the following observation concerning the lottery induced over the allocations for player $i$
in the setting of the 
direct mediated
mechanism $\cM^d$, 
when she 
receives the message
$\msg_i':= (\msg_i, (\sig_i^{\type_i'})_{\type_i' \in \Type_i}) \in \supp \mdt_i'$, has 
type 
$\type_i \in \Type_i$, 
has a belief $G_{-i}' \in \Delta(\Type_{-i})$ on the 
opponents' 
type reports (which are the signals of the opponents in this 
direct mediated
mechanism), and decides to report $\tilde \type_i$. 
The lottery induced over the allocations for player $i$ satisfies
\begin{align}
\label{eq: rev_mu_match}
	\mu_i'(\msg_i', \type_i, \tilde \type_i; \cM^d, G_{-i}')
	&:= \sum_{\msg_{-i}' \in \Msg_{-i}'} \mdt_{-i}'(\msg_{-i}' | \msg_i')
	\sum_{\type_{-i} \in \Type_{-i}} G_{-i}'(\type_{-i}) h^d(\msg', \tilde \type_i, \type_{-i})\nn \\
	&= \sum_{\msg_{-i} \in \Msg_{-i}} \mdt_{-i}(\msg_{-i} | \msg_i)
	\sum_{\type_{-i} \in \Type_{-i}} G_{-i}'(\type_{-i}) \nn \\
	&\hspace{1in} \times 
	\sum_{\sig_{-i}}
	\l( \prod_{j \neq i} \tau_j\l(\sig_j | \msg_j, \type_j  \r) \r)
	h\l(\msg, \sig_i^{\tilde \type_i}, \sig_{-i}\r).
\end{align}
We give a formal proof of this in Appendix~\ref{sec: proof_rev_mu_equiv}.
Let us see how this observation helps us prove the two statements above, namely, $\tau^d$ is an equilibrium ($F$-Bayes-Nash or belief-dominant resp.) of $\cM^d$ given that $\tau$ is an equilibrium ($F$-Bayes-Nash or belief-dominant resp.) of $\cM$.

Suppose $F$ is the common prior and $\tau$ is an $F$-Bayes-Nash equilibrium for the mediated mechanism $\cM$.
Let $\msg_i' \in \supp \mdt_i'$ and $\type_i \in \supp F_i$.
From \eqref{eq: E'_i_formula}, we know that $\mdt'_i(\msg_i') > 0$ implies $\mdt_i(\msg_i) > 0$ and $\tau_i(\sig_i^{\type_i'} | \msg_i, \type_i') > 0$, for all $\type_i' \in \Type_i$,
(and in particular, we have $\tau_i(\sig_i^{\type_i} | \msg_i, \type_i) > 0$).
Since $\tau$ is a Bayes-Nash equilibrium for $\cM$, we have
\[
	W_i^{\type_i}\l(\mu_i(\msg_i, \type_i, \sig_i^{\type_i}; \cM, F, \tau_{-i})\r) \geq W_i^{\type_i}\l(\mu_i(\msg_i, \type_i, \tilde \sig_i; \cM, F, \tau_{-i})\r),
\]
for all $\tilde \sig_i \in \Sig_i$.
(Note that $\sig_i^{\type_i} \in \supp \tau_i(\cdot| \msg_i, \type_i)$, $\msg_i \in \supp \mdt_i$, $\type_i \in \supp F_{i}$.)
Taking $G_{-i}' = F_{-i}(\cdot | \type_i)$ in \eqref{eq: rev_mu_match}, we get that
\begin{align}
	\mu_i'(\msg_i', \type_i, \tilde \type_i; \cM^d, F, \tau_{-i}^d) = \mu_i(\msg_i, \type_i, \sig_i^{\tilde \type_i}; \cM, F, \tau_{-i}),
\end{align}
for all $\tilde \type_i \in \Type_i$, and thus,
\begin{align}
\label{eq: IC_ineq_med_BN}
	W_i^{\type_i}\l(\mu_i'(\msg_i', \type_i, \type_i; \cM^d, F, \tau_{-i}^d) \r) &= W_i^{\type_i}\l(\mu_i(\msg_i, \type_i, \sig_i^{\type_i}; \cM, F, \tau_{-i}) \r) \nn \\
	&\geq W_i^{\type_i}\l(\mu_i(\msg_i, \type_i, \sig_i^{\tilde \type_i}; \cM, F, \tau_{-i}) \r)\nn \\
	&= W_i^{\type_i}\l(\mu_i'(\msg_i', \type_i, \tilde \type_i; \cM^d, F, \tau_{-i}^d) \r),
\end{align}
for all $\tilde \type_i \in \Type_i.$
This establishes that the truthful strategy $\tau^d$ is 
an $F$-Bayes-Nash equilibrium 
for $\cM$.

Now suppose $\tau$ is a belief-dominant strategy for $\cM$.
Let $\msg_i' \in \supp \mdt_i'$ and $\type_i \in \Type_i$.
Again, this implies $\mdt_i(\msg_i) > 0$ and $\sig_i^{\type_i} \in \supp \tau_i(\msg_i, \type_i)$.
Corresponding to a belief $G_{-i}' \in \Delta(\Type_{-i})$,
consider the belief $G_{-i} \in \Delta(\Sig_{-i})$ given by
\begin{align}
	G_{-i}(\sig_{-i}) := \sum_{\msg_{-i} \in \Msg_{-i}} \mdt_{-i}(\msg_{-i} | \msg_i)
	\sum_{\type_{-i} \in \Type_{-i}} G_{-i}'(\type_{-i})
	\l( \prod_{j \neq i} \tau_j\l(\sig_j | \msg_j, \type_j  \r) \r)
\end{align}
Then, from \eqref{eq: rev_mu_match}, we have that
\begin{align}
	\mu_i'(\msg_i', \type_i, \tilde \type_i; \cM^d, G_{-i}') = \mu_i(\msg_i, \type_i, \sig_i^{\tilde \type_i}; \cM, G_{-i} )
\end{align}
Noting that 
$\sig_i^{\tilde \type_i}\in \supp \tau_i(\msg_i, \tilde \type_i)$ for all 
$\tilde \type_i \in \Type_i$
 and $\msg_i \in \supp \mdt_i$, and $\tau_i$ being a belief-dominant strategy, we get that
\begin{align}
\label{eq: IC_ineq_med_BD}
	W_i^{\type_i}\l(\mu_i'(\msg_i', \type_i, \type_i; \cM^d, G_{-i}') \r)
	\geq W_i^{\type_i}\l(\mu_i'(\msg_i', \type_i, \tilde \type_i; \cM^d, G_{-i}') \r)
\end{align}
for all $\tilde \type_i \in \Type_i$.
Thus, the truthful strategy $\tau^d$ is a belief-dominant strategy for $\cM^d$.


}

This completes the proof of statement (i) in theorem~\ref{thm: rev_prin_1} and part of statement (iii) corresponding to mediated mechanisms.
We now consider the setting of publicly mediated mechanisms and establish the rest of the theorem.

Let
\[
	\cM_* = (\Msg_*, \mdt_*, (\Sig_i)_{i \in [n]}, h_*)
\]
be a publicly mediated mechanism and 
for each player $i$ let
$\tau_i: \Msg_* \times \Type_i \to \Delta(\Sig_i)$ be her strategy such that the strategy profile $\tau$ induces 
the allocation choice function
$f$ for this mechanism.
We now consider the 
direct publicly mediated
mechanism
\[
	\cM_*^d := (\Msg_*', \mdt_*', (\Type_i)_{i \in [n]}, h_*^d),
\]
where the message set is
given by
	$$\Msg_*' := \Msg_* \times \prod_{i = 1}^n \l(\Sig_i \r)^{\Type_i},$$
	with a typical element denoted by 
	\begin{equation}
	\label{eq: msg_prime_def*}
		\msg_*' := (\msg_*, (\sig_i^{\type_i'})_{\type_i' \in \Type_i, i \in [n]}),
	\end{equation}
	and the mediator distribution $\mdt_*'$ is
	given by
	\begin{equation}
	\label{eq: E_prime_def*}
		\mdt_*'(\msg_*') := \mdt_*(\msg_*) \prod_{i \in [n]} \prod_{\type_i' \in \Type_i} \tau_i\l(\sig_i^{\type_i'} | \msg_i, \type_i'\r) \text{ for all } \msg' \in \Msg'.
	\end{equation}
	Similar to the previous setting, here the modified mediator messages and the mediator distribution can be interpreted as encapsulating the randomness in the strategies of the players for each of their types into the public messages.
	We can similarly verify that $\mdt_*'$ is indeed a probability distribution on $\Msg_*'$.
	The direct mediated allocation function $h_*^d$ in the direct publicly mediated mechanism
	$\cM^d$ is
	given by
	\begin{equation}
	\label{eq: h_prime_def*}
		h_*^d(\msg_*', \type') := h_*\l( \msg_*, \l(\sig_i^{\type_i'}\r)_{i \in [n]}\r) \text{ for all } 
		\msg_*' \in \Msg_*',
		\type' \in \Type.
	\end{equation}
	We can similarly verify that the truthful strategies 
	\[
		\tau^d(\msg_*', \type_i) = \type_i 
	\]
	implement the allocation choice function $f$ for $\cM_*^d$.

	
	Fix $\msg_*' \in \supp \mdt_*'$.
	Note that
	\begin{align}
	\label{eq: dominant_allocation_direct_equiv}
		h_*^d(\msg_*', \tilde \type_i, \type_{-i}) = h_*(\msg_*, \sig_i^{\tilde \type_i}, (\sig_j^{\type_j})_{j \neq i}),
	\end{align}
	for all $\tilde \type_i \in \Type_i$.
	From \eqref{eq: E_prime_def*}, we have $\msg_* \in \supp \mdt_*$ and $\sig_i^{\type_i} \in \supp \tau_i(\msg_*, \type_i)$
	for all $\type_i \in \Type_i$.

	Now suppose $\tau$ is a dominant equilibrium for $\cM_*$.
	The lottery induced over the allocations for player $i$ when she receives a publicly mediated message $\msg_*'$, has type $\type_i$, believes that the 
	opponents
	are reporting $\type_{-i}$, and decides to report $\tilde \type_i$ is given by
	\begin{align}
	\label{eq: mu_prime_pub_med_mech_prime}
		\mu_i'(\msg_*', \type_i, \tilde \type_i; \cM_*^d, \type_{-i} ) = h_*^d(\msg_*', \tilde \type_i, \type_{-i}).
	\end{align}
	We get this from \eqref{eq: induce_lot_med_dom_eq} by considering the special case of publicly mediated mechanisms.
	From \eqref{eq: dominant_allocation_direct_equiv}, we get that this is equal to the lottery induced over the allocations for player $i$ when she receives a publicly mediated message $\msg_*$, has type $\type_i$, believes that the 
	opponents
	are reporting $\sig_j^{\type_j}, j \neq i$, and decides to report $\sig_i^{\tilde \type}$, namely,
	\begin{align*}
		\mu_i(\msg_*, \type_i, \sig_i^{\tilde \type}; \cM_*, (\sig_j^{\type_j})_{j \neq i} ) = h_*(\msg_*, \sig_i^{\tilde \type}, (\sig_j^{\type_j})_{j \neq i}).
	\end{align*}
	Since $\tau_i$ is a dominant strategy, $\msg_* \in \supp \mdt_*$, and $\sig_i^{\type_i} \in \supp \tau_i(\msg_*, \type_i)$, we have
	\begin{align*}
		W_i^{\type_i}(\mu_i(\msg_*, \type_i, \sig_i^{\type_i}; \cM_*, (\sig_j^{\type_j})_{j \neq i} )) \geq W_i^{\type_i}(\mu_i(\msg_*, \type_i,  \tilde \sig_i; \cM_*, (\sig_j^{\type_j})_{j \neq i} )),
	\end{align*}
	for all $\tilde \sig_i \in \Sig_i$.
	Hence, we have
	\begin{align*}
		W_i^{\type_i}(\mu_i'(\msg_*', \type_i, \type_i; \cM_*^d, \type_{-i} )) \geq W_i^{\type_i}(\mu_i'(\msg_*', \type_i, \tilde \type_i; \cM_*^d, \type_{-i} )),
	\end{align*}	
	for all $\tilde \type_i \in \Type_i$.
	Thus, $\tau^d$ is a dominant equilibrium of $\cM_*^d$.

	Now suppose that $\tau$ is a belief-dominant equilibrium for $\cM_*$.
	Consider the fixed message
	$$\msg_*' = (\msg_*, (\sig_i^{\type_i'})_{\type_i' \in \Type_i, i \in [n]}) \in \supp \mdt_*',$$ 
	as before.
	Corresponding to a belief $G_{-i}' \in \Delta(\Type_{-i})$, consider $G_{-i, *} \in \Delta(\Sig_{-i})$ 
	given by
	\begin{align}
	\label{eq: def_belief_dom_belief_equiv}
		G_{-i, *}(\tilde \sig_{-i}) := \sum_{
		\substack{
			\type_{-i} \in \Type_{-i}\\
			\text{s.t.} \sig_j^{\type_j} = \tilde \sig_{-i}, \forall j \neq i
		}
		} G_{-i}'(\type_{-i}),
	\end{align}
	for all $\tilde \sig_{-i} \in \Sig_{-i}$, where $\sig_j^{\type_j}$ are the signals corresponding to the types as defined by the message $\msg_*'$.

{
\allowdisplaybreaks	
    As observed in equation \eqref{eq: mu_prime_pub_med_mech_prime}, the lottery induced over the allocations for player $i$, when she receives message $\msg_*'$, has type $\type_i$, believes that the opponents' are reporting $\type_{-i}$, and decides to report $\tilde \type_i$ is given by $h^d_*(\msg_*', \tilde \type_i, \type_{-i})$.
    Now suppose that she has belief 
    $G_{-i}'$ on her opponents' type report instead.
    Then, the induced lottery over the allocations for player $i$ is given by
	\begin{align*}
		\mu_{i}'(\msg_*', \type_i, \tilde \type_i; \cM_*^d, G_{-i}') &= \sum_{\type_{-i} \in \Type_{-i}} G_{-i}'(\type_{-i}) h_*^d(\msg_*', \tilde \type_i, \type_{-i})\\
		&= \sum_{\type_{-i} \in \Type_{-i}} G_{-i}'(\type_{-i}) h_*(\msg_*, \sig_i^{\tilde \type_i}, (\sig_j^{\type_j})_{j \neq i})\\
		& \pushright{\text{ ... from \eqref{eq: dominant_allocation_direct_equiv} }}\\
		&= \sum_{\tilde \sig_{-i} \in \Sig_{-i}} h_*(\msg_*, \sig_i^{\tilde \type_i}, \tilde \sig_{-i}) \sum_{
		\substack{
			\type_{-i} \in \Type_{-i}\\
			\text{s.t.} \sig_j^{\type_j} = \tilde \sig_{-i}, \forall j \neq i
		}
		} G_{-i}'(\type_{-i})\\
		\red
		&=  \sum_{\tilde \sig_{-i} \in \Sig_{-i}} h_*(\msg_*, \sig_i^{\tilde \type_i}, \tilde \sig_{-i}) G_{-i, *}(\tilde \sig_{-i})\\
		\black
		& \pushright{\text{ ... from \eqref{eq: def_belief_dom_belief_equiv}  }}\\
		&= \mu_i(\msg_*, \type_i, \sig_i^{\tilde \type_i}; \cM_*, G_{-i, *} ).
	\end{align*}

Since $\tau$ is a belief-dominant equilibrium, $\msg_* \in \supp \mdt_*$, and $\sig_i^{\type_i} \in \supp \tau_i(\msg_*, \type_i)$
for all $\type_i \in \Type_i$,
we have 
\begin{align*}
		W_i^{\type_i}(\mu_i(\msg_*, \type_i, \sig_i^{\type_i}; \cM_*, G_{-i, *}) \geq W_i^{\type_i}(\mu_i(\msg_*, \type_i, \tilde \sig_i; \cM_*, G_{-i, *}),
	\end{align*}
	for all $\tilde \sig_i \in \Sig_i$.
	Hence, we have
	\begin{align*}
		W_i^{\type_i}(\mu_i'(\msg_*', \type_i, \type_i; \cM_*^d, G_{-i}' )) \geq W_i^{\type_i}(\mu_i'(\msg_*', \type_i, \tilde \type_i; \cM_*^d, G_{-i}' )),
	\end{align*}	
	for all $\tilde \type_i \in \Type_i$.
	Thus, $\tau^d$ is a belief-dominant equilibrium of $\cM_*^d$.
}


This completes the proof of the theorem.


\section{Proof of \eqref{eq: rev_mu_match}}
\label{sec: proof_rev_mu_equiv}

{
\allowdisplaybreaks
Let us recall the first setting considered in Appendix~\ref{sec: rev_proof}. 
We have
a mediated mechanism
	$$\cM = ((\Msg_i)_{i \in [n]}, \mdt, (\Sig_i)_{i \in [n]}, h ),$$ 
and a corresponding strategy profile $\tau$.
We had constructed a 
direct mediated
mechanism
$$\cM^d = ((\Msg_i')_{i \in [n]}, \mdt', (\Type_i)_{i \in [n]}, h^d),$$ 
given by \eqref{eq: def_typeset_rev_proof_mod}, \eqref{eq: msg_prime_def}, \eqref{eq: E_prime_def}, and \eqref{eq: h_prime_def}. 
We are interested in the situation when player $i$ receives message
$\msg_i':= (\msg_i, (\sig_i^{\type_i'})_{\type_i' \in \Type_i}) \in \supp \mdt_i'$, has 
type $\type_i \in \Type_i$, 
and belief $G_{-i}' \in \Delta(\Type_{-i})$ on the 
opponents' 
type reports, and decides to report $\tilde \type_i$.
Since $\mdt'(\msg_i') > 0$ by assumption, we have
\begin{align*}
	&\sum_{\msg_{-i}' \in \Msg_{-i}'} \mdt_{-i}'(\msg_{-i}' | \msg_i')
	\sum_{\type_{-i} \in \Type_{-i}} G_{-i}'(\type_{-i}) h^d(\msg', \tilde \type_i, \type_{-i})\\
	&= \frac{\sum_{\msg_{-i}' \in \Msg_{-i}'} \mdt'(\msg_i', \msg_{-i}')
	\sum_{\type_{-i} \in \Type_{-i}} G_{-i}'(\type_{-i}) h^d(\msg', \tilde \type_i, \type_{-i})}
	{\sum_{\msg_{-i}' \in \Msg_{-i}'} \mdt'(\msg_i', \msg_{-i}')}.
\end{align*}
Let the denominator be denoted by
$$C_1 := \sum_{\msg_{-i}' \in \Msg_{-i}'} \mdt'(\msg_i', \msg_{-i}') = \mdt_i'(\msg_i').$$ 
We now focus on the numerator, to get
\begin{align*}
	&\sum_{\msg_{-i}' \in \Msg_{-i}'} \mdt'(\msg_i', \msg_{-i}')
	\sum_{\type_{-i} \in \Type_{-i}} G_{-i}'(\type_{-i}) h^d(\msg', \tilde \type_i, \type_{-i})\\
	=& \sum_{\msg_{-i} \in \Msg_{-i}} \mdt(\msg_i, \msg_{-i})
	 \sum_{
	\substack{
	(\sig_j^{\type_j'} )_{\type_j' \in \Type_j, j \neq i}\\
	\in \prod_{j \neq i} ( \Sig_j)^{\Type_j}
	}
	}
	\l( \prod_{j \neq i} \prod_{\type_j' \in \Type_j} \tau_j\l(\sig_j^{\type_j'} | \msg_j, \type_j'  \r) \r) \prod_{\type_i' \in \Type_i} \tau_i\l( \sig_i^{\type_i'} | \msg_i,  \type_i' \r) \\
	& \hspace{2in} \times \sum_{\type_{-i} \in \Type_{-i}} G_{-i}'(\type_{-i}) h^d(\msg', \tilde \type_i, \type_{-i})\\
	&\pushright{\text{ ... from \eqref{eq: E_prime_def}}}\\
	=& \sum_{\msg_{-i} \in \Msg_{-i}} \mdt(\msg_i, \msg_{-i})
	 \sum_{
	\substack{
	(\sig_j^{\type_j'} )_{\type_j' \in \Type_j, j \neq i}\\
	\in \prod_{j \neq i} ( \Sig_j)^{\Type_j}
	}
	}
	\l( \prod_{j \neq i} \prod_{\type_j' \in \Type_j} \tau_j\l(\sig_j^{\type_j'} | \msg_j, \type_j'  \r) \r) 
	\l(\prod_{\type_i' \in \Type_i} \tau_i\l( \sig_i^{ \type_i'} | \msg_i, \type_i' \r)\r) \\
	& \hspace{2in} \times \sum_{\type_{-i} \in \Type_{-i}} G_{-i}'(\type_{-i}) h\l(\msg_i, \msg_{-i}, \sig_i^{\tilde \type_i}, \l(\sig_j^{\type_j}\r)_{j \neq i}\r)\\
	&\pushright{\text{ ... from \eqref{eq: h_prime_def}}}\\
	=& \l(\prod_{\type_i' \in \Type_i} \tau_i\l( \sig_i^{\type_i'} | \msg_i, \type_i' \r) \r)
	\sum_{\msg_{-i} \in \Msg_{-i}} \mdt(\msg_i, \msg_{-i})
	 \sum_{
	\substack{
	(\sig_j^{\type_j'} )_{\type_j' \in \Type_j, j \neq i}\\
	\in \prod_{j \neq i} ( \Sig_j)^{\Type_j}
	}
	}
	\l( \prod_{j \neq i} \prod_{\type_j' \in \Type_j} \tau_j\l(\sig_j^{\type_j'} | \msg_j, \type_j'  \r) \r)  \\
	& \hspace{2in} \times \sum_{\type_{-i} \in \Type_{-i}} G_{-i}'(\type_{-i}) h\l(\msg_i, \msg_{-i}, \sig_i^{\tilde \type_i}, \l(\sig_j^{\type_j}\r)_{j \neq i}\r)
\end{align*}
Let
\[
	C_2 := \prod_{\type_i' \in \Type_i} \tau_i\l( \sig_i^{\type_i'} | \msg_i, \type_i' \r).
\]
We have,
\begin{align*}
	& \sum_{\msg_{-i} \in \Msg_{-i}} \mdt(\msg_i, \msg_{-i})
	 \sum_{
	\substack{
	(\sig_j^{\type_j'} )_{\type_j' \in \Type_j, j \neq i}\\
	\quad \in \prod_{j \neq i} ( \Sig_j)^{\Type_j}
	}
	}
	\l( \prod_{j \neq i} \prod_{\type_j' \in \Type_j} \tau_j\l(\sig_j^{\type_j'} | \msg_j,  \type_j'  \r) \r)  \\
	& \hspace{2in} \times \sum_{\type_{-i} \in \Type_{-i}} G_{-i}'(\type_{-i}) h\l(\msg_i, \msg_{-i}, \sig_i^{\tilde \type_i}, \l(\sig_j^{\type_j}\r)_{j \neq i}\r)\\
	&= \sum_{\msg_{-i} \in \Msg_{-i}} \mdt(\msg_i, \msg_{-i})
	\sum_{\type_{-i} \in \Type_{-i}} G_{-i}'(\type_{-i}) 
	 \sum_{
	\substack{
	(\sig_j^{\type_j'} )_{\type_j' \in \Type_j, j \neq i}\\
	\quad \in \prod_{j \neq i} ( \Sig_j)^{\Type_j}
	}
	}
	\l( \prod_{j \neq i} \prod_{\type_j' \in \Type_j} \tau_j\l(\sig_j^{\type_j'} | \msg_j, \type_j'  \r) \r)  \\
	& \hspace{2in} \times 
	h\l(\msg_i, \msg_{-i}, \sig_i^{\tilde \type_i}, \l(\sig_j^{\type_j}\r)_{j \neq i}\r)\\
	&= \sum_{\msg_{-i} \in \Msg_{-i}} \mdt(\msg_i, \msg_{-i})
	\sum_{\type_{-i} \in \Type_{-i}} G_{-i}'(\type_{-i}) \\
	&\hspace{1in} \times
	\sum_{
	\substack{
	(\sig_j^{\type_j})_{j \neq i} \\
	\quad \in (\Sig_j)_{j \neq i}
	}
	}
	 \sum_{
	\substack{
	(\sig_j^{\type_j'} )_{\type_j' \in \Type_j \back  \type_j, j \neq i}\\
	\quad \in \prod_{j \neq i} ( \Sig_j)^{\Type_j \back \type_j}
	}
	}
	\l( \prod_{j \neq i} \prod_{\type_j' \in \Type_j \back \type_j} \tau_j\l(\sig_j^{\type_j'} | \msg_j, \type_j'  \r) \r)  \\
	& \hspace{2in} \times 
	\l( \prod_{j \neq i} \tau_j\l(\sig_j^{\type_j} | \msg_j, \type_j  \r) \r)
	h\l(\msg_i, \msg_{-i}, \sig_i^{\tilde \type_i}, \l(\sig_j^{\type_j}\r)_{j \neq i}\r)\\
	&= \sum_{\msg_{-i} \in \Msg_{-i}} \mdt(\msg_i, \msg_{-i})
	\sum_{\type_{-i} \in \Type_{-i}} G_{-i}'(\type_{-i}) \\
	&\hspace{1in} \times 
	\sum_{
	\substack{
	(\sig_j^{\type_j})_{j \neq i} \\
	\quad \in (\Sig_j)_{j \neq i}
	}
	}
	\l( \prod_{j \neq i} \tau_j\l(\sig_j^{\type_j} | \msg_j, \type_j  \r) \r)
	h\l(\msg_i, \msg_{-i}, \sig_i^{\tilde \type_i}, \l(\sig_j^{\type_j}\r)_{j \neq i}\r) \\
	& \hspace{2in} \times 
	 \sum_{
	\substack{
	(\sig_j^{\type_j'} )_{\type_j' \in \Type_j \back \type_j, j \neq i}\\
	\quad \in \prod_{j \neq i} ( \Sig_j)^{\Type_j \back \type_j}
	}
	}
	\l( \prod_{j \neq i} \prod_{\type_j' \in \Type_j \back \type_j} \tau_j\l(\sig_j^{\type_j'} | \msg_j, \type_j'  \r) \r)  \\
	&= \sum_{\msg_{-i} \in \Msg_{-i}} \mdt(\msg_i, \msg_{-i})
	\sum_{\type_{-i} \in \Type_{-i}} G_{-i}'(\type_{-i}) \\
	&\hspace{1in} \times 
	\sum_{
	\substack{
	(\sig_j^{\type_j})_{j \neq i} \\
	\quad \in (\Sig_j)_{j \neq i}
	}
	}
	\l( \prod_{j \neq i} \tau_j\l(\sig_j^{\type_j} | \msg_j, \type_j  \r) \r)
	h\l(\msg_i, \msg_{-i}, \sig_i^{\tilde \type_i}, \l(\sig_j^{\type_j}\r)_{j \neq i}\r) \\
	& \hspace{2in} \times 
	\l( \prod_{j \neq i} \prod_{\type_j' \in \Type_j \back \type_j} 
	\sum_{\sig_j^{\type_j'} \in \Sig_j}
	\tau_j\l(\sig_j^{\type_j'} | \msg_j, \type_j'  \r) \r)  \\
	&= \sum_{\msg_{-i} \in \Msg_{-i}} \mdt(\msg_i, \msg_{-i})
	\sum_{\type_{-i} \in \Type_{-i}} G_{-i}'(\type_{-i}) \\
	&\hspace{1in} \times 
	\sum_{
	\substack{
	(\sig_j^{\type_j})_{j \neq i} \\
	\quad \in (\Sig_j)_{j \neq i}
	}
	}
	\l( \prod_{j \neq i} \tau_j\l(\sig_j^{\type_j} | \msg_j, \type_j  \r) \r)
	h\l(\msg_i, \msg_{-i}, \sig_i^{\tilde \type_i}, \l(\sig_j^{\type_j}\r)_{j \neq i}\r) \\
	& \hspace{2in} \times 
	\l( \prod_{j \neq i} \prod_{\tilde \type_j' \in \Type_j \back \type_j} 1\r)  \\
	&= \sum_{\msg_{-i} \in \Msg_{-i}} \mdt(\msg_i, \msg_{-i})
	\sum_{\type_{-i} \in \Type_{-i}} G_{-i}'(\type_{-i}) \\
	&\hspace{1in} \times 
	\sum_{
	\substack{
	(\sig_j^{\type_j})_{j \neq i} \\
	\quad \in (\Sig_j)_{j \neq i}
	}
	}
	\l( \prod_{j \neq i} \tau_j\l(\sig_j^{\type_j} | \msg_j, \type_j  \r) \r)
	h\l(\msg_i, \msg_{-i}, \sig_i^{\tilde \type_i}, \l(\sig_j^{\type_j}\r)_{j \neq i}\r) \\
	&= \sum_{\msg_{-i} \in \Msg_{-i}} \mdt(\msg_i, \msg_{-i})
	\sum_{\type_{-i} \in \Type_{-i}} G_{-i}'(\type_{-i}) \\
	&\hspace{1in} \times 
	\sum_{\sig_{-i}}
	\l( \prod_{j \neq i} \tau_j\l(\sig_j | \msg_j, \type_j  \r) \r)
	h\l(\msg_i, \msg_{-i}, \sig_i^{\tilde \type_i}, \sig_{-i}\r) \\
	&= \mdt_i(\msg_i) \sum_{\msg_{-i} \in \Msg_{-i}} \mdt_{-i}(\msg_{-i} | \msg_i)
	\sum_{\type_{-i} \in \Type_{-i}} G_{-i}'(\type_{-i}) \\
	&\hspace{1in} \times 
	\sum_{\sig_{-i}}
	\l( \prod_{j \neq i} \tau_j\l(\sig_j | \msg_j, \type_j  \r) \r)
	h\l(\msg_i, \msg_{-i}, \sig_i^{\tilde \type_i}, \sig_{-i}\r).
\end{align*}
We recall that ${\mdt_i(\msg_i) C_2}/{C_1} = 1$ from \eqref{eq: E'_i_formula}, and hence, we get \eqref{eq: rev_mu_match}.
}

\section{Outcome sets can be identified with the allocation set under EUT}
\label{sec: new_appendix}

{
\allowdisplaybreaks

Consider a setting in which all the players have EUT preferences for all their types.
For this restricted setting, 
we will now 
construct
an environment 
\[
	\cal{E}' := \l([n], (\Type_i')_{i \in [n]}, A, (\Gamma_i')_{i \in [n]}, \zeta'\r),
\]
that we call the reduced environment
corresponding to the environment 
(as defined in \eqref{eq: def_environment})
\[
	\cal{E} := \l([n], (\Type_i)_{i \in [n]}, A, (\Gamma_i)_{i \in [n]}, \zeta\r).
\]

From~\eqref{eq: W_EUT_linear}, we observe that,
since we are dealing with EUT preferences,
the utility function $W_i^{\type_i}$ is completely determined by the values $u_i^{\type_i}(\alpha), \forall \alpha \in A$.
Suppose the mechanism designer models the outcome 
set of each player $i$
by $\Gamma_i' = A$
instead of the true outcome set $\Gamma_i$,
with the trivial allocation-outcome mapping $\zeta_i'$ instead of 
the original allocation-outcome mapping $\zeta_i$. 
Let $\zeta'$ denote 
the product
of the trivial allocation-outcome mappings $\zeta_i', i \in [n]$.
Corresponding to a type $\type_i \in \Type_i$ for player $i$, 
the mechanism designer models her type by $\type_i'$, 
which is 
characterized by the utility function $u_i^{\type_i} : \Gamma_i' \to \bbR$ 
as defined in \eqref{eq: def_u_of_alpha}.
Since the players are assumed to have EUT preferences, the probability weighting functions 
under each type $\type_i'$
are modeled to be $w_i^\pm(p) = p, \forall p \in [0,1]$.
Let $\Type_i'$ denote the set comprised of all types $\type_i'$ corresponding to the types $\type_i \in \Type_i$.
Let $\cal{T}_i : \Type_i \to \Type_i'$ denote the function for this correspondence.
Suppose the mechanism designer treats the environment as if given by
\[
	\cal{E}' := \l([n], (\Type_i')_{i \in [n]}, A, (\Gamma_i')_{i \in [n]}, \zeta'\r).
\]

Let $\Type' := \prod_{i} \Type_i'$.
Let $\cal{T} : \Type \to \Type'$ denote the product transformation defined by the functions $\cal{T}_i, i \in [n]$.
Notice that the function $\cal{T}_i$ is a bijection 
since, as pointed out earlier, 
even if $u_i^{\type_i} = u_i^{\tilde \type_i}$ for some $\type_i \neq \tilde \type_i$, we will treat $\cal{T}_i(\type_i)$ and $\cal{T}_i(\tilde \type_i)$ as different elements of $\Type_i'$.
%
For any prior $F \in \Delta(\Type)$, let $F' \in \Delta(\Type')$ be the corresponding prior induced by the bijection $\cal{T}$.

Note that, for any player $i$, having any type $\type_i$, and any lottery $\mu \in \Delta(A)$, we have
\begin{equation}
\label{eq: obs_W_EUT_same}
	W_i^{\type_i}(\mu) = W_i^{\cal{T}_i(\type_i) }(\mu).
\end{equation}
(Here, $W_i^{\cal{T}_i(\type_i)}$ should be interpreted as the utility function for player $i$ with type $\type_i' = \cal{T}_i(\type_i)$ corresponding to the reduced environment $\cal{E}'$.)
Let $f' : \Type' \to A$ be an allocation choice function 
that is implementable in $F'$-Bayes-Nash equilibrium $\sigma' := (\sigma_i')_{i \in [n]}$ (where $\sigma_i' : \Type_i' \to \Delta(\Sig_i')$) for
the mechanism
\[
	\cM_0 = ((\Sig_i)_{i \in [n]}, h_0).
\]

Now suppose the system operator uses the same mechanism $\cM_0$ in environment $\cal{E}$.
Consider the allocation choice function $f: \Type \to \Delta(A)$ given by
\[
	f(\type) = f'(\cal{T}(\type)).
\]
For each player $i$, consider the strategy $\sigma_i: \Type_i \to \Delta(\Sig_i)$ given by
\[
	\sigma_i(\type_i) = \sigma_i'(\cal{T}_i(\type_i)).
\]
Similar to \eqref{eq: belief_bayes_induced}, for any $\type_i' \in \supp F_i'$ and signal $\sig_i$, let 
\begin{equation}
\label{eq: belief_bayes_induced_EUTreduced}
	\mu_i'(\type_i', \sig_i; \cM_0, F', \sigma'_{-i})  
	:= \sum_{\type_{-i} \in \Type_{-i}} F'_{-i}(\type_{-i}|\type_i) \sum_{\sig_{-i} \in \Sig_{-i}} \prod_{j \neq i} \sigma_j(\sig_j | \type_j') h_0(\sig),
\end{equation}
be the belief of player $i$ on the allocation set corresponding to the reduced environment $\cal{E}'$.
Note that 
\[
	\mu_i(\type_i, \sig_i; \cM_0, F, \sigma_{-i}) = \mu_i'(\cal{T}_i(\type_i), \sig_i; \cM_0, F', \sigma'_{-i}).
\]
From observation~\eqref{eq: obs_W_EUT_same} and the definition of $F$-Bayes-Nash equilibrium in \eqref{eq: Bayes_Nash_W_def} and \eqref{eq: def_BNE}, we get that the allocation choice function $f$ is implementable in $F$-Bayes-Nash equilibrium by the mechanism $\cM_0$ with the equilibrium strategy $\sigma$.

On the other hand, suppose we have an allocation choice function $f: \Type \to \Delta(A)$.
Consider the corresponding allocation choice function $f':\Type' \to \Delta(A)$ given by
\[
	f'(\type') = f(\cal{T}^{-1}(\type)).
\]
We now observe that if $f$ is implementable in $F$-Bayes-Nash equilibrium by a mechanism $\cM_0$ and an $F$-Bayes-Nash equilibrium $\sigma$, then so is $f'$ by the same mechanism $\cM_0$ and the $F'$-Bayes-Nash equilibrium $\sigma'$ comprised of
\[
	\sigma_i'(\type_i') = \sigma_i(\cal{T}^{-1}_i(\type_i')),
\]
for all $i \in [n], \type_i' \in \Type_i'$.

We can similarly show that if $f'$ is implementable in dominant (resp. belief-dominant) equilibrium by a mechanism $\cM_0$ with the equilibrium strategy 
profile
$\sigma'$ for the reduced environment $\cal{E}'$, then so is $f$ by the same mechanism $\cM_0$ with the corresponding equilibrium strategy 
profile
$\sigma$ for the environment $\cal{E}$, and vice versa.
%


Hence, under EUT, from the mechanism designer's point of view, it is enough to model the types of player $i$
by setting the outcome set $\Gamma_i' = A$, assuming the trivial allocation-outcome mapping $\zeta_i'$, and the types $\type_i' \in \Type_i'$. 
\black
}
\ifnonarxiv

\section{Relation between $\cF(F)$ and $(\cF_i(F), i \in [n])$}
\label{sec: relation_F}

First, we observe that if $f \in \cF(F)$ is truthfully implemented by a direct mediated mechanism $\cM^d = ((\Msg_i)_{i \in [n]}, E, h^d)$ and there exist distinct messages $\msg_i, \tilde \msg_i \in \Msg_i$ such that $\eta_i(\msg_i, \cdot) = \eta_i(\tilde \msg_i, \cdot)$ (recall the definition from equation~\eqref{eq: eta_def}), then $f$ is also truthfully implemented by the direct mediated mechanism $\hat \cM^d = ((\hat \Msg_i)_{i \in [n]},\hat E, \hat h^d)$ obtained by merging the messages $\msg_i$ and $\tilde \msg_i$ in the following sense: 
\[
	\hat \Msg_i := \Msg_i \back \tilde \msg_i, \quad \hat \Msg_j := \Msg_j, j \neq i,
\]
\[
	\hat E(\hat \msg) := \begin{cases}
		E(\hat \msg), &\text{ if } \hat \msg_i \neq \msg_i,\\
		E(\msg_i, \hat \msg_{-i}) + E(\tilde \msg_i, \hat \msg_{-i}), &\text{ otherwise},
	\end{cases}
\]
for all $\hat \msg \in \hat \Msg := (\hat \Msg_i)_{i \in [n]}$, and
\[
	\hat h^d(\hat \msg) := \begin{cases}
		h^d(\hat \msg), &\text{ if } \hat \msg_i \neq \msg_i,\\
		\frac{E(\msg_i, \hat \msg_{-i}) h^d(\msg_i, \hat \msg_{-i}) + E(\tilde \msg_i, \hat \msg_{-i}) h^d(\tilde \msg_i, \hat \msg_{-i})}{\hat E(\hat \msg)} , &\text{ if } \hat \msg_i = \msg_i, \hat E(\hat \msg) \neq 0,\\
		\pi, &\text{ otherwise},
	\end{cases}
\]
where $\pi \in \cal{A}$ is any social choice function.
Hence, it is enough to focus on direct mediated mechanisms such that, for each $i$, the social choice functions $\eta_i(\msg_i, \cdot)$ are distinct for all $\msg_i \in \Msg_i$.

\begin{proposition}
	If $n = 1$, then $\cF(F) = \co(\cF_1(F))$. Further, for any $f \in \cF(F)$ there exists a direct mediated mechanism $\cM^d = (\Msg_1, E, h^d)$ such that $|\Msg_1| \leq |\Type_1| \times |A|$ that implements $f$.
\end{proposition}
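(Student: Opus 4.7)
The plan is to derive the equality $\cF(F) = \co(\cF_1(F))$ directly from the sandwich established in Proposition~\ref{prop: F*_in_convexhull}, and then use a standard dimension argument to bound the size of the message set.

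For the first part, I would apply Proposition~\ref{prop: F*_in_convexhull} specialized to $n = 1$. In this case $\cap_{i \in [n]} \cF_i(F) = \cF_1(F)$, so both the inner and outer bounds in the sandwich collapse to $\co(\cF_1(F))$, forcing $\cF(F) = \co(\cF_1(F))$. (It is worth noting that when $n=1$ the incentive-compatibility condition from Definition~\ref{def: pi_IC} simplifies, since $\Type_{-1}$ is empty and $\mu_1(\type_1, \type_1'; \pi, F) = \pi(\type_1')$; the condition just says $W_1^{\type_1}(\pi(\type_1)) \geq W_1^{\type_1}(\pi(\type_1'))$ for all $\type_1 \in \supp F_1$, $\type_1' \in \Type_1$. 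This is only a sanity check and plays no formal role in the proof.)

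For the second part, I would identify the ambient space of allocation choice functions $\pi : \Type_1 \to \Delta(A)$ with a subset of $\bbR^{|A| \times |\Type_1|}$. Since each $\pi(\type_1)$ is constrained by $\sum_{\alpha} \pi(\alpha|\type_1) = 1$, the affine dimension of $\Delta(A)^{\Type_1}$ is $(|A|-1) \cdot |\Type_1|$. By Carathéodory's theorem, any $f \in \co(\cF_1(F))$ can be written as a convex combination $f = \sum_{k=1}^m a_k f^k$ of at most $m \leq (|A|-1) \cdot |\Type_1| + 1 \leq |A| \cdot |\Type_1|$ points $f^k \in \cF_1(F)$ with weights $a_k \geq 0$ summing to $1$.

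Given such a convex representation, I would construct the direct mediated mechanism $\cM^d = (\Msg_1, E, h^d)$ explicitly by taking $\Msg_1 := \{1, \ldots, m\}$, $E(k) := a_k$, and $h^d(k, \type_1) := f^k(\type_1)$. Because each $f^k$ is $F$-incentive-compatible with respect to player $1$, truth-telling after receiving any message $k$ is a best response for player $1$, so the truthful strategy profile is an $F$-Bayes-Nash equilibrium. The induced allocation choice function is $\sum_{k} E(k) h^d(k, \cdot) = \sum_k a_k f^k = f$, and by construction $|\Msg_1| = m \leq |\Type_1| \times |A|$, as required.

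There is no real obstacle here: the equality is an immediate specialization of the previously proved sandwich bound, and the cardinality bound is a standard Carathéodory argument applied to the finite-dimensional representation of allocation choice functions.
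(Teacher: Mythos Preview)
Your proposal is correct and follows essentially the same approach as the paper: both invoke Carath\'eodory's theorem to bound the number of extreme points needed and then build the direct mediated mechanism by taking messages indexed by those points, with the mediator distribution given by the convex weights. Your argument is slightly more explicit than the paper's in that you spell out the appeal to Proposition~\ref{prop: F*_in_convexhull} for the equality $\cF(F) = \co(\cF_1(F))$ and sharpen the dimension count to $(|A|-1)|\Type_1|+1$ before observing it is at most $|A|\cdot|\Type_1|$, but these are cosmetic differences.
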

\begin{proof}
	If $f \in \co(\cF_1(F))$, then by Caratheodory's theorem there exist allocation choice functions $f_1^{m_1}, m_1 \in M_1,$ that belong to $\cF_1(F)$ and represent $f$ as a convex combination with corresponding coefficients $a_1^{m_i}$, where $|M_1| \leq |\Type_1| \times |A|$.
	Taking $\Msg_1 = M_1$, $E(m_1) = a_1^{m_1}$, for all $m_1 \in M_1$, and $h^d(m_1, \type_1) = f_1^{m_1}(\type_1)$, for all $m_1 \in M_1, \type_1 \in \Type_1$, we get the required direct mediated mechanism that truthfully implements the social choice function $f$.
\end{proof}


When there are several players in the system, 
it is not true, in general, that $\cF(F) = \cap_{i \in [n]} \co(\cF_i(F) )$. 
(See example~\ref{ex: non_existence_marginals}.)
Consider an allocation choice function $f \in \cap_{i \in [n]} \co(\cF_i(F))$, and let, for all $i$,
\begin{equation}
\label{eq: f_marginal_consistent}
	f = \sum_{m_i = 1}^{M_i} a_i^{m_i} f_i^{m_i},
\end{equation}
be a representation of $f$ as a convex combination of functions in $\cF_i(F)$. 
(Here, the coefficients $a_i^{m_i}$ are non-negative, they sum up to $1$, and the allocation choice functions $f_i^{m_i}$ are $F$-incentive-compatible with respect to player $i$.)
Let $m := (m_1, \dots, m_n)$, and let $M := M_1 \times \dots \times M_n$ denote the set of all such $m$.
Let $m_{-i} := (m_j)_{j \neq i}$ and $M_{-i} := \prod_{j \neq i} M_j$.
Now fix a profile of types $\type \in \Type$.
Consider an $M$-matrix 
\begin{equation}
	\bar a := (a^m)_{m \in M},
\end{equation}
such that
\begin{equation}
\label{eq: bar_a_marginal}
	\sum_{m_{-i}} a^{m_i, m_{-i}}  = a^{m_i}, \text{ for all } m_i \in M_i, i \in [n].
\end{equation}
We say that the matrix $\bar a$ is a $\type$-coupling matrix corresponding to the representations \eqref{eq: f_marginal_consistent}, if there exists a function $k^\type: M \to \Delta(A)$ such that
\begin{equation}
	\label{eq: type_coupling_def}
	\sum_{m_{-i}} a^{m_i, m_{-i}} k^\type(m_i, m_{-i}) = a_i^{m_i} f_i^{m_i}(\type), \text{ for all } m_i \in M_i, i \in [n].
\end{equation}
We will say that the function $k^\type$ validates the matrix $\bar a$ as a $\type$-coupling matrix if the above relation holds.
Let $A(\type)$ denote the set of all $\type$-coupling matrices corresponding to representations \eqref{eq: f_marginal_consistent}.

\begin{lemma}
The set $A(\type)$ is non-empty.
\end{lemma}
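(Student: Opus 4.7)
The plan is to exhibit an explicit coupling. First I would reformulate the problem as that of producing a joint distribution on $M \times A$ with prescribed pairwise $(M_i, A)$-marginals, and then write down such a distribution by the standard ``glue along the common $A$-marginal'' formula.

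Specifically, I would observe that a pair $(\bar a, k^\type)$ satisfying \eqref{eq: bar_a_marginal} and \eqref{eq: type_coupling_def} is equivalent to a nonnegative function $p : M \times A \to \bbR_{\geq 0}$ satisfying, for every player $i$ and every $(m_i, \alpha)$,
\[
\sum_{m_{-i}} p(m_i, m_{-i}, \alpha) = a_i^{m_i} f_i^{m_i}(\alpha \mid \type),
\]
via $a^m := \sum_\alpha p(m, \alpha)$ and $k^\type(m)(\alpha) := p(m, \alpha)/a^m$ when $a^m > 0$, with $k^\type(m)$ chosen arbitrarily in $\Delta(A)$ otherwise. The $n$ prescribed $(M_i, A)$-marginals are automatically consistent with one another: for each $i$ the $M_i$-marginal equals $a_i^{m_i}$ and the $A$-marginal equals $f(\cdot \mid \type)$, both by \eqref{eq: f_marginal_consistent}.

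My candidate is
\[
p(m, \alpha) := \begin{cases} f(\alpha \mid \type) \displaystyle\prod_{i \in [n]} \dfrac{a_i^{m_i} f_i^{m_i}(\alpha \mid \type)}{f(\alpha \mid \type)}, & \text{if } f(\alpha \mid \type) > 0, \\ 0, & \text{otherwise.} \end{cases}
\]
The degenerate case is harmless: if $f(\alpha \mid \type) = 0$, then nonnegativity of the convex combination \eqref{eq: f_marginal_consistent} forces $a_i^{m_i} f_i^{m_i}(\alpha \mid \type) = 0$ for every $i$ and every $m_i$, so the prescribed marginal value at such $\alpha$ is already zero.

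Verification is a short computation. For fixed $i$ and $m_i$, summing $p(m, \alpha)$ over $m_{-i}$ separates out the $i$-th factor, and each of the remaining $n-1$ factors sums to one by $\sum_{m_j} a_j^{m_j} f_j^{m_j}(\alpha \mid \type) = f(\alpha \mid \type)$. A further sum over $\alpha$ then yields $a_i^{m_i}$ using $\sum_\alpha f_i^{m_i}(\alpha \mid \type) = 1$, which gives \eqref{eq: bar_a_marginal}; and \eqref{eq: type_coupling_def} is immediate from the definitions of $a^m$ and $k^\type$. I do not anticipate a serious obstacle here; the only subtlety worth flagging is the ambiguity in the choice of $k^\type(m)$ when $a^m = 0$, which is harmless because \eqref{eq: type_coupling_def} places no constraint on such profiles.
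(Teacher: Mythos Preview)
Your proposal is correct and is essentially the same as the paper's proof: your $p(m,\alpha)$ equals the paper's $\xi(m,\alpha)=\prod_i a_i^{m_i}f_i^{m_i}(\alpha\mid\type)/f(\alpha\mid\type)^{n-1}$ (just written with the $n$ factors of $f(\alpha\mid\type)$ in the denominator made explicit), and your $(a^m,k^\type)$ are defined from $p$ exactly as the paper defines $(\bar a, k^\type)$ from $\xi$. You supply the marginal verification and the $f(\alpha\mid\type)=0$ observation that the paper leaves to the reader, but the construction is identical.
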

\begin{proof}
	Consider the function $\xi: M \times \alpha \to \bbR^+$ given by
	\[
		\xi(m, \alpha) := \begin{cases}
			 \frac{\prod_{i \in [n]} a_i^{m_i} f_i^{m_i}(\alpha | \type)}{f(\alpha | \type)^{n-1}}, &\text{ if } f(\alpha | \type) > 0,\\
			0, &\text{ otherwise.}
		\end{cases}
	\]
	Let
	$\bar a(m) = \sum_{\alpha \in A} \xi(m, \alpha),$
	and let
	$k^\type(\cdot| m) = \xi(m,\cdot)/\bar a(m)$ if $\bar a(m) > 0$ and $k^\type(\alpha| m) = \mu$, for any lottery $\mu \in \Delta(A)$, if $a(m) = 0$.
	One can now verify that equations \eqref{eq: bar_a_marginal} and \eqref{eq: type_coupling_def} are satisfied.
	Hence $\bar a \in A(\type)$, and $A(\type)$ is non-empty.
\end{proof}

\begin{proposition}
	An allocation choice function $f \in \cap_{i \in [n]} \co(\cF_i(F) )$ that satisfies equation~\eqref{eq: f_marginal_consistent} is truthfully implementable by a direct mediated mechanism $\cM^d$ such that $\Msg_i = M_i, \forall i,$ $E_i(m_i) = a_i^{m_i}, \forall i, m_i$, and 
	$$\eta_i(\cdot|\msg_i, \cdot) = f_i^{m_i}, \text{for all $m_i \in \supp E_i$ (i.e. $a_i^{m_i} > 0$)},$$ 
	if and only if
	\begin{equation}
	\label{eq: coupling_matrix_int}
		\cap_{\type \in \Type} A(\type) \neq \phi.
	\end{equation}
		
\end{proposition}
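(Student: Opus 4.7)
The plan is to establish a direct correspondence between direct mediated mechanisms $\cM^d$ of the specified form and the $\type$-coupling matrices in $\cap_{\type \in \Type} A(\type)$. Namely, I will identify the mediator distribution $E \in \Delta(M)$ with the matrix $\bar a$, and the direct mediated allocation kernel $h^d(\cdot, \type) : M \to \Delta(A)$ with a kernel $k^\type$ that validates $\bar a$ as a $\type$-coupling matrix. Under this identification, the marginal requirement $E_i(m_i) = a_i^{m_i}$ becomes \eqref{eq: bar_a_marginal}, and the requirement $\eta_i(\cdot \mid m_i, \cdot) = f_i^{m_i}$ becomes, after multiplying through by $a_i^{m_i}$, exactly the coupling identity \eqref{eq: type_coupling_def}.

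For the forward direction, suppose such a $\cM^d$ is given; set $\bar a := E$ and $k^\type(m) := h^d(m, \type)$. The marginal identity \eqref{eq: bar_a_marginal} is immediate from $E_i(m_i) = a_i^{m_i}$. For $m_i$ with $a_i^{m_i} > 0$, the hypothesis $\eta_i(\alpha \mid m_i, \type) = f_i^{m_i}(\alpha \mid \type)$ together with the definition \eqref{eq: eta_def} of $\eta_i$ yields
\[
  \sum_{m_{-i}} \bar a(m_i, m_{-i})\, k^\type(\alpha \mid m_i, m_{-i}) = a_i^{m_i}\, f_i^{m_i}(\alpha \mid \type),
\]
which is \eqref{eq: type_coupling_def}. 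For $m_i$ with $a_i^{m_i} = 0$, the marginal condition together with non-negativity of $\bar a$ forces $\bar a(m_i, m_{-i}) = 0$ for every $m_{-i}$, so both sides of \eqref{eq: type_coupling_def} vanish. Thus $\bar a \in A(\type)$ for every $\type$, giving \eqref{eq: coupling_matrix_int}.

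For the backward direction, pick any $\bar a \in \cap_{\type \in \Type} A(\type)$ and, for each $\type \in \Type$, pick a kernel $k^\type$ validating it. Define $\cM^d$ by $\Msg_i := M_i$, $E := \bar a$, and $h^d(m, \type) := k^\type(m)$. Condition \eqref{eq: bar_a_marginal} gives $E_i(m_i) = a_i^{m_i}$. For $m_i \in \supp E_i$, dividing \eqref{eq: type_coupling_def} by $a_i^{m_i}$ yields $\eta_i(\cdot \mid m_i, \type) = f_i^{m_i}(\type)$ for every $\type$; in particular $\eta_i(m_i) = f_i^{m_i} \in \cF_i(F)$, so by the characterization stated right after Definition~\ref{def: pi_IC} the truthful strategies form an $F$-Bayes-Nash equilibrium. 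The induced allocation choice function is
\[
  \sum_{m \in M} \bar a(m)\, k^\type(m) = \sum_{m_i \in M_i} a_i^{m_i}\, f_i^{m_i}(\type) = f(\type),
\]
where the first equality is \eqref{eq: type_coupling_def} summed over $m_i$ and the second is the representation \eqref{eq: f_marginal_consistent}. Hence $\cM^d$ truthfully implements $f$.

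I expect no serious technical obstacle; the whole argument is essentially a bookkeeping translation between the two formalisms. The only point requiring mild care is the treatment of the zero-marginal case $a_i^{m_i} = 0$, which is disposed of by observing that non-negativity together with the marginal constraint forces the corresponding slice of $\bar a$ to be identically zero, making the prescribed identity for $\eta_i(\cdot \mid m_i, \cdot)$ vacuous precisely where it is not required (i.e., outside $\supp E_i$).
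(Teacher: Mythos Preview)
Your proof is correct and follows essentially the same approach as the paper's: identify the mediator distribution $E$ with the matrix $\bar a$ and the allocation kernel $h^d(\cdot,\type)$ with the validating function $k^\type$, then verify in each direction that the marginal condition \eqref{eq: bar_a_marginal} and the coupling identity \eqref{eq: type_coupling_def} translate into the required constraints on $\cM^d$. You are in fact more explicit than the paper in handling the $a_i^{m_i}=0$ slice, in checking that truthful reporting is an $F$-Bayes-Nash equilibrium via $f_i^{m_i}\in\cF_i(F)$, and in verifying that the induced allocation choice function equals $f$.
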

\begin{proof}
	Suppose we have a direct mediated mechanism $\cM^d$ such that $\Msg_i = M_i, \forall i,$ $E_i(m_i) = a_i^{m_i}, \forall m_i, i,$ and 
	$\eta_i(\cdot|\msg_i, \cdot) = f_i^{m_i}$, for all $m_i \in \supp E_i$, that implements the allocation choice function~$f$.
	Consider the $M$-matrix $\bar a$ given by $\bar a(m) := E(m)$ for all $m \in M$.
	Since $E_i(m_i) = a_i^{m_i}$, the matrix $\bar a$ satisfies~\eqref{eq: bar_a_marginal}.
	For any type profile $\type$, taking $k = h^d(m, \type)$, we get that $k$ satisfies \eqref{eq: type_coupling_def}.
	Thus $\bar a \in A(\type)$ for all $\type \in \Type$, and hence \eqref{eq: coupling_matrix_int} holds.
	Similarly, for the other direction of the proof, suppose there exists a matrix $\bar a \in \cap_{\type \in \Type} A(\type)$.
	Let $E(m) := \bar a(m)$.
	For each type profile $\type$, since $\bar a \in A(\type)$, there exists a function $k^\type: M \to \Delta(A)$ that satisfies \eqref{eq: type_coupling_def}.
	Take $h^d(m, \type) := k^\type(m).$
	This give us the required mechanism $\cM^d$, and hence completes the proof.
\end{proof}

When $n = 2$, we can find the necessary and sufficient conditions for a matrix $\bar a$ to be a $\type$-coupling matrix corresponding to representations~\eqref{eq: f_marginal_consistent}.
Let $\bar a$ be an $M$-matrix that satisfies~\eqref{eq: bar_a_marginal}.
We observe that the $M$-matrix $\bar a \in A(\type)$ iff there exists a $3$-dimensional product distribution $K(\alpha, m_1, m_2) \in \Delta(A \times M_1 \times M_2)$ such that its two dimensional marginals satisfy the following:
\begin{align}
	\sum_{m_1 \in M_1} K(\alpha, m_1, m_2) &= a_2^{m_2} f_2^{m_2}(\alpha | \type), \forall \alpha, m_2,\label{eq: K_2marg_1}\\
	\sum_{m_2 \in M_2} K(\alpha, m_1, m_2) &= a_1^{m_1} f_1^{m_1}(\alpha | \type), \forall \alpha, m_1,\label{eq: K_2marg_2}\\
	\sum_{\alpha \in \alpha} K(\alpha, m_1, m_2) &= a^{m_1, m_2}, \forall m_1, m_2.\label{eq: K_2marg_3}
\end{align}
Note that the two dimensional marginals have consistent one dimensional marginals, i.e.,
\begin{align}
	\sum_{\alpha \in \alpha} a_2^{m_2} f_2^{m_2}(\alpha | \type) &=& a_2^{m_2} & =& \sum_{m_1 \in M_1} a^{m_1, m_2}, \forall m_2 \in M_2,\label{eq: consistent_1_marg_1}\\
	\sum_{\alpha \in \alpha} a_1^{m_1} f_1^{m_1}(\alpha | \type) &=& a_1^{m_1} &=& \sum_{m_2 \in M_2} a^{m_1, m_2}, \forall m_1 \in M_1,\label{eq: consistent_1_marg_2}\\
	\sum_{m_1 \in M_1} a_1^{m_1} f_1^{m_1}(\alpha | \type) &=& f(\alpha | \type)&=&  \sum_{m_2 \in M_2} a_2^{m_2} f_2^{m_2}(\alpha | \type) , \forall \alpha \in \alpha.\label{eq: consistent_1_marg_3}
\end{align}
If there exists a product distribution $K(\alpha, m_1, m_2)$ that satisfies equations \eqref{eq: K_2marg_1}--\eqref{eq: K_2marg_3}, then we have 
\begin{equation*}
	 	K(\alpha,m_1, m_2) \leq \kappa(\alpha, m_1, m_2, \type) := \min \{a^{m_1, m_2}, a_1^{m_1} f_1^{m_1}(\alpha | \type), a_2^{m_2} f_2^{m_2}(\alpha|\type)\},
\end{equation*}
for all $\alpha, m_1, m_2$.
This implies that the following inequalities should hold:
\begin{align}
	 	\sum_{m_1 = 1}^{M_1} \kappa(\alpha, m_1, m_2, \type) &\geq a_2^{m_2} f_2^{m_2}(\alpha|\type), \forall \alpha, \type, m_2, \label{eq: 2pl_sch_max_1}\\
	 	\sum_{m_2 = 1}^{M_2} \kappa(\alpha, m_1, m_2, \type) &\geq a_1^{m_1} f_1^{m_1}(\alpha|\type), \forall \alpha, \type, m_1, \label{eq: 2pl_sch_max_2}\\
	 	\sum_{\alpha \in \alpha} \kappa(\alpha, m_1, m_2, \type) &\geq a^{m_1, m_2}, \forall m_1, m_2, \type, \label{eq: 2pl_sch_max_3}.
\end{align}
\citet{shell1955distribution} proves that \eqref{eq: K_2marg_1}--\eqref{eq: consistent_1_marg_1} are indeed the necessary and sufficient conditions for the product distribution $K$ to exist.

\begin{proposition}
\label{prop: 2player_schell}
	For a two player environment, there exists a direct mediated mechanisms $\cM^d$
	that truthfully implements the allocation choice function $f$ satisfying \eqref{eq: f_marginal_consistent},
	 such that $\Msg_i = M_i, \forall i,$ $E_i(m_i) = a_i^{m_i}, \forall i, m_i$ and 
	$$\eta_i(\cdot|\msg_i, \cdot) = f_i^{m_i}, \text{for all } m_i \in \supp E_i,$$ 
	 if and only if there exists an $(M_1 \times M_2)$-matrix $\bar a$ that satisfies \eqref{eq: bar_a_marginal} and \eqref{eq: 2pl_sch_max_1}--\eqref{eq: 2pl_sch_max_3} for all type profiles $\type \in \Type$.
\end{proposition}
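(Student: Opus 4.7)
The plan is to combine the preceding proposition with Shell's $1955$ theorem on the existence of a joint distribution with prescribed bivariate marginals. By the preceding proposition, a direct mediated mechanism $\cM^d$ of the prescribed form truthfully implementing $f$ exists if and only if $\cap_{\type \in \Type} A(\type)$ is non-empty, so it suffices to characterize, for each fixed $\type \in \Type$, membership of an $M$-matrix $\bar a$ satisfying \eqref{eq: bar_a_marginal} in $A(\type)$ in terms of the inequalities \eqref{eq: 2pl_sch_max_1}--\eqref{eq: 2pl_sch_max_3}.

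First I would rewrite the defining condition of $A(\type)$ in terms of a joint probability distribution on $A \times M_1 \times M_2$. Given $\bar a$ satisfying \eqref{eq: bar_a_marginal} and a validator $k^{\type}$ as in \eqref{eq: type_coupling_def}, the function $K(\alpha, m_1, m_2) := \bar a(m_1, m_2)\, k^{\type}(\alpha \mid m_1, m_2)$ is a probability distribution on $A \times M_1 \times M_2$ whose three two-dimensional marginals are exactly those prescribed in \eqref{eq: K_2marg_1}--\eqref{eq: K_2marg_3}; conversely any $K$ satisfying \eqref{eq: K_2marg_1}--\eqref{eq: K_2marg_3} yields a validator by setting $k^{\type}(\alpha \mid m) := K(\alpha, m) / \bar a(m)$ whenever $\bar a(m) > 0$, and by taking $k^{\type}(\cdot \mid m)$ to be an arbitrary lottery otherwise. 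Thus $\bar a \in A(\type)$ if and only if there exists a joint distribution $K$ on $A \times M_1 \times M_2$ satisfying \eqref{eq: K_2marg_1}--\eqref{eq: K_2marg_3}.

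Next I would invoke Shell's theorem, which asserts that a joint distribution on a product of three finite sets with prescribed bivariate marginals exists if and only if (i) the prescribed bivariate marginals induce consistent univariate marginals, and (ii) for each prescribed bivariate marginal value, the sum over the third coordinate of the pointwise minimum of the three prescribed bivariate marginal values dominates that prescribed value. Condition (i) is precisely \eqref{eq: consistent_1_marg_1}--\eqref{eq: consistent_1_marg_3}; I would verify that these hold automatically by combining the hypothesis \eqref{eq: f_marginal_consistent} (each $f_i^{m_i}$ is an allocation choice function, hence sums to one over $A$, and the convex coefficients $a_i^{m_i}$ sum to one) with the marginal constraint \eqref{eq: bar_a_marginal}. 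Condition (ii), with $\kappa$ defined as in the preamble to the proposition, reduces exactly to \eqref{eq: 2pl_sch_max_1}--\eqref{eq: 2pl_sch_max_3}. Combining the two equivalences for each $\type$ and then intersecting over $\type \in \Type$ yields the claim.

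The substantive content of the proof is entirely carried by Shell's theorem; the main obstacle is the bookkeeping needed to verify that the one-dimensional consistency conditions \eqref{eq: consistent_1_marg_1}--\eqref{eq: consistent_1_marg_3} are automatic consequences of the hypotheses on $f$, on the convex representations in \eqref{eq: f_marginal_consistent}, and on $\bar a$ via \eqref{eq: bar_a_marginal}, and that the three bivariate-marginal inequalities of Shell's theorem align notationally with \eqref{eq: 2pl_sch_max_1}--\eqref{eq: 2pl_sch_max_3}.
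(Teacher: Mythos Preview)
Your proposal is correct and follows essentially the same approach as the paper: the paper does not give a separate proof for this proposition but rather derives it from the preceding discussion, which reduces (via the previous proposition) to characterizing $\bar a \in A(\type)$ as the existence of a joint distribution $K$ on $A \times M_1 \times M_2$ with the prescribed bivariate marginals \eqref{eq: K_2marg_1}--\eqref{eq: K_2marg_3}, verifies the univariate consistency \eqref{eq: consistent_1_marg_1}--\eqref{eq: consistent_1_marg_3}, and then invokes Schell's 1955 result to obtain \eqref{eq: 2pl_sch_max_1}--\eqref{eq: 2pl_sch_max_3} as the necessary and sufficient conditions. Your write-up is in fact slightly more explicit than the paper's in spelling out the two directions of the $K \leftrightarrow (\bar a, k^{\type})$ correspondence.
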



For the setup of proposition~\ref{prop: 2player_schell},
the following example shows that a matrix $\bar a$ satisfying the conditions stated in the proposition need not exist.
\begin{example}
\label{ex: non_existence_marginals}
Let $\Type = \{\Lt,\Rt\}$ and $A = \{\Up, \D \}$.
Let $f(\alpha| \type)$ take values as shown in the table below:
\begin{align*}
f = 
\begin{tabular}{c | c | c |}
	 \multicolumn{1}{c}{}	& \multicolumn{1}{c}{$\Lt$}   &  \multicolumn{1}{c}{$\Rt$}  \\
	 \cline{2-3}
	$\Up$ & $2/5$  &  $1/3$\\
	 \cline{2-3}
	$\D$ & $3/5$ & $2/3$\\
	 \cline{2-3}
	 \end{tabular} 
\end{align*}

Let $M_1 = M_2 = \{1, 2\}$.
Let $a_1^1 = 1/5, a_1^2 = 4/5, a_2^1 = 1/3, a_2^2 = 2/3$, and let
\begin{align*}
	f_1^1 &= 
\begin{tabular}{c | c | c |}
	 \multicolumn{1}{c}{}	& \multicolumn{1}{c}{$\Lt$}   &  \multicolumn{1}{c}{$\Rt$}  \\
	 \cline{2-3}
	$\Up$ & $1/2$  &  $0$\\
	 \cline{2-3}
	$\D$ & $1/2$ & $1$\\
	 \cline{2-3}
	 \end{tabular}
	 &f_1^2 &= 
\begin{tabular}{c | c | c |}
	 \multicolumn{1}{c}{}	& \multicolumn{1}{c}{$\Lt$}   &  \multicolumn{1}{c}{$\Rt$}  \\
	 \cline{2-3}
	$\Up$ & $3/8$  &  $5/12$\\
	 \cline{2-3}
	$\D$ & $5/8$ & $7/12$\\
	 \cline{2-3}
	 \end{tabular}\\
	f_2^1 &= 
\begin{tabular}{c | c | c |}
	 \multicolumn{1}{c}{}	& \multicolumn{1}{c}{$\Lt$}   &  \multicolumn{1}{c}{$\Rt$}  \\
	 \cline{2-3}
	$\Up$ & $1$  &  $1$\\
	 \cline{2-3}
	$\D$ & $0$ & $0$\\
	 \cline{2-3}
	 \end{tabular}
	 &f_2^2 &= 
\begin{tabular}{c | c | c |}
	 \multicolumn{1}{c}{}	& \multicolumn{1}{c}{$\Lt$}   &  \multicolumn{1}{c}{$\Rt$}  \\
	 \cline{2-3}
	$\Up$ & $1/10$  &  $0$\\
	 \cline{2-3}
	$\D$ & $9/10$ & $1$\\
	 \cline{2-3}
	 \end{tabular}
\end{align*}
Note that equations~\eqref{eq: f_marginal_consistent} hold.
Suppose there exists a matrix $\bar a \in A(\Lt) \cap A(\Rt)$.
Let $k^{\Lt}, k^{\Rt} : M \to \Delta(A)$ be functions that validate that $\bar a$ is $\Lt$ and $\Rt$-coupling respectively.
Let $B: M_1 \times M_2 \times \Type \times A \to \bbR^+$ be given by
\begin{align}
	B(m_1, m_2, \type, \alpha) := a^{m_1, m_2} k^{\type}(\alpha | m_1, m_2),
\end{align}
for all $m_1 \in M_1, m_2 \in M_2, \type \in \Type, \alpha \in A$.
Note that it satisfies the following conditions:
\begin{align}
	\sum_{m_1 \in  M_1} B(m_1, m_2, \type, \alpha) = a_2^{m_2}f_2^{m_2}(\alpha | \type), \forall m_1, \type, \alpha,\label{eq: B_matrix_1}\\
	\sum_{m_2 \in  M_2} B(m_1, m_2, \type, \alpha) = a_1^{m_1}f_1^{m_1}(\alpha | \type), \forall m_1, \type, \alpha,\label{eq: B_matrix_2}\\
	\sum_{\alpha} B(m_1, m_2, \Lt, \alpha) = \sum_{\alpha} B(m_1, m_2, \Rt, \alpha), \forall m_1, m_2.\label{eq: B_matrix_3}
\end{align}
The above equalities are in fact sufficient to ensure that a common coupling matrix $\bar a$ exists.
That is, if we can find a matrix $B$ that satisfies equations~\eqref{eq: B_matrix_1}, \eqref{eq: B_matrix_2}, and \eqref{eq: B_matrix_3}, then the matrix $\bar a = \sum_{\alpha} B(m_1, m_2, \Lt, \alpha)$ belongs to $A(\Lt)$ and $A(\Rt)$.
Thus, it is enough to check whether a function $B$ satisfying equations~\eqref{eq: B_matrix_1}, \eqref{eq: B_matrix_2}, and \eqref{eq: B_matrix_3} exists.
Since $f_1^1(\Up|\Rt) = 0$, we have $B(1, 1, \Rt, \Up) = 0$ from \eqref{eq: B_matrix_2}.
Since $f_2^1(\D|\Rt) = 0$, we have $B(1, 1, \Rt, \D) = 0$ from \eqref{eq: B_matrix_1}.
Since $B(1, 1, \Rt, \Up) = 0$ and $B(1, 1, \Rt, \D) = 0$, we have 
$B(1, 1, \Lt, \Up) = 0$ and $B(1, 1, \Lt, \D) = 0$ from \eqref{eq: B_matrix_3}.
Thus, from \eqref{eq: B_matrix_2} we get that $B(1, 2, \Lt, \Up) = a_1^1 f_1^1(\Up | \Lt) = 1/10$.
However, we now have $B(1, 2, \Lt, \Up) > a_2^2 f_2(\Up | \Lt)$ and contradicts \eqref{eq: B_matrix_1}.
Thus, we get that there does not exist function $B$ that satisfies equations~\eqref{eq: B_matrix_1}, \eqref{eq: B_matrix_2}, and \eqref{eq: B_matrix_3}, and hence there does not exist a common coupling matrix $\bar a$.
\end{example}

Finally, the following proposition gives a sufficient condition for the truthful implementation of an allocation choice function in Bayes-Nash equilibrium by a mediated mechanism.

\begin{proposition}
\label{prop: f_impl_suff_cond}
	An allocation choice function $f$ is truthfully implementable in $F$-Bayes-Nash equilibrium by a mediated mechanism if
	\begin{equation}
	\label{eq: f_in_int_co_Hnf}
		f \in \cap_{i \in [n]} \co\l(\cF_i(F) \cap \cal{H}_n(\cal{A}, f)\r),
	\end{equation}
	where
	\begin{equation}
	\label{eq: def_Hnf}
		\cal{H}_n(\cal{A}, f) := \l\{\pi \in \cal{A} | n\pi + (1-n)f \in \cal{A} \r\}.
	\end{equation}
\end{proposition}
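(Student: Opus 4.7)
The plan is to exploit the defining condition $n\pi + (1-n)f \in \cal{A}$ of $\cal{H}_n(\cal{A}, f)$ to write each allocation choice function in the decomposition of $f$ as a convex combination $\frac{1}{n} g + \frac{n-1}{n} f$ for some $g \in \cal{A}$, and then build a single direct mediated mechanism with independent messages whose allocation function symmetrically averages the $g$'s coming from all $n$ players.

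More concretely, by the hypothesis, for each $i \in [n]$ there exist a finite index set $M_i$, nonnegative weights $(a_i^{m_i})_{m_i \in M_i}$ summing to $1$, and allocation choice functions $f_i^{m_i} \in \cF_i(F) \cap \cal{H}_n(\cal{A}, f)$ with
\[
f = \sum_{m_i \in M_i} a_i^{m_i} f_i^{m_i}.
\]
Define $g_i^{m_i} := n f_i^{m_i} - (n-1) f$. Because $f_i^{m_i} \in \cal{H}_n(\cal{A}, f)$, each $g_i^{m_i}$ is again in $\cal{A}$, i.e.\ $g_i^{m_i}(\cdot|\type) \in \Delta(A)$ for every $\type$. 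Set $\Msg_i := M_i$, let the mediator distribution be the product $\mdt(m) := \prod_{i \in [n]} a_i^{m_i}$, and define the direct mediated allocation function
\[
h^d(\alpha \mid m, \type) := \frac{1}{n} \sum_{i \in [n]} g_i^{m_i}(\alpha \mid \type),
\]
for all $m \in \Msg, \type \in \Type, \alpha \in A$. Convexity of $\Delta(A)$ ensures $h^d(\cdot \mid m, \type) \in \Delta(A)$, so $\cM^d = ((\Msg_i)_{i \in [n]}, \mdt, (\Type_i)_{i \in [n]}, h^d)$ is a well-defined direct mediated mechanism.

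The key routine computation will be to verify, using product independence of $\mdt$ and the identities $\sum_{m_k} a_k^{m_k} g_k^{m_k} = n f - (n-1) f = f$ for $k \neq i$, that
\[
\eta_i(\alpha \mid m_i, \type) = \sum_{m_{-i}} \prod_{j \neq i} a_j^{m_j} \, h^d(\alpha \mid m, \type) = \frac{1}{n} g_i^{m_i}(\alpha \mid \type) + \frac{n-1}{n} f(\alpha \mid \type) = f_i^{m_i}(\alpha \mid \type).
\]
Since $f_i^{m_i} \in \cF_i(F)$, Definition~\ref{def: pi_IC} applied to $\eta_i(m_i, \cdot)$ shows that the truthful incentive constraint~\eqref{eq: IC_ineq} holds for every $m_i \in \Msg_i$ and every player $i$, so $\tau^d$ is an $F$-Bayes-Nash equilibrium of $\cM^d$. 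Finally, integrating the above identity over $m_i$ against $a_i^{m_i}$ recovers $f$ as the induced allocation choice function via~\eqref{eq: f_induced_dir}, completing the proof.

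The conceptual obstacle (already overcome in the construction above) is the coupling problem highlighted by Example~\ref{ex: non_existence_marginals}: a naive attempt to marry the $n$ marginal decompositions of $f$ by directly assigning $h^d(m, \type) = f_i^{m_i}(\type)$ along the $i$-th coordinate is generally infeasible. The clever role of the set $\cal{H}_n(\cal{A}, f)$ is precisely to provide enough ``slack'' around $f$ so that each $f_i^{m_i}$ can be expressed as an affine mixture of $f$ with another allocation choice function $g_i^{m_i}$, thereby replacing the hard joint coupling by a symmetric average of $n$ unilaterally-defined pieces. I do not expect any further technical difficulty beyond carefully bookkeeping the sums in the verification of $\eta_i$.
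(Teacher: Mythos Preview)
Your proposal is correct and follows essentially the same approach as the paper: the paper also takes $\Msg_i = M_i$, the product mediator distribution $\mdt(m) = \prod_i a_i^{m_i}$, and the allocation function $h^d(m,\type) = (1-n)f(\type) + \sum_i f_i^{m_i}(\type)$, which is exactly your $\frac{1}{n}\sum_i g_i^{m_i}(\type)$ once one substitutes $g_i^{m_i} = n f_i^{m_i} - (n-1)f$. Your write-up is in fact slightly more explicit than the paper's, since you verify the identity $\eta_i(m_i,\cdot) = f_i^{m_i}$ directly rather than merely asserting $\eta_i(m_i,\cdot) \in \cF_i(F)$.
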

\begin{proof}[Proof of proposition~\ref{prop: f_impl_suff_cond}]
Suppose $f$ satisfies equation~\eqref{eq: f_in_int_co_Hnf}.
For each $i$, since $f \in \co(\cF_i(F) \cap \cal{H}_n(\cal{A}, f))$, there exist social choice functions $f_i^{m_i}, m_i \in M_i$ that belong to $\cF_i(F) \cap \cal{H}_n(\cal{A}, f)$ and represent $f$ as a convex combination with corresponding coefficients $a_i^{m_i}$.
Let $\Msg_i = M_i$ for all $i$, $E(m) = \prod_i a_i^{m_i}$, and
\[
	h^d(m, \type) = (1-n)f(\type) + \sum_i f_i^{m_i}(\type), \forall m \in M, \type \in \Type. 
\]
Note that, for all $i$, since $f_i^{m_i} \in \co(\cal{H}_n(\cal{A}, f))$, we have
\[
	(1-n)f + n f_i^{m_i} \in \cal{A},
\]
and hence
\[
	h^d(m, \cdot) = \frac{1}{n}\sum_i ((1-n)f + n f_i^{m_i}) \in \cal{A}.
\]
Further, for all $i$, $\eta_i(m_i, \cdot) \in \cF_i(F)$.
Thus, it follows that the direct mediated mechanism $\cM^d = ((\Msg_i)_{i \in [n]}, E, h^d)$ truthfully implements $f$.
\end{proof}


\fi

\bibliographystyle{abbrvnat} 
\bibliography{Bib_Database}

\end{document}